\newtheorem{theorem}{Theorem} 
\newtheorem{lemma}[theorem]{Lemma}
\newcommand{\rectset}{\mathsf{R}}
\newcommand{\ml}{\ensuremath{\lambda}}
\newcommand{\dom}{\mathsf{D}}
\newcommand{\rdom}{\mathsf{C}}
\newcommand{\rangebbst}{\mathcal{T}_\mathsf{ran}}
\newcommand{\xydom}{\mathsf{D}_{xy}}
\newcommand{\xdom}{\mathsf{D}_{x}}
\newcommand{\ydom}{\mathsf{D}_{y}}
\newcommand{\xybd}{\partial_{xy}}
\newcommand{\realp}{\mathbb{R}^2}
\def\origi{\textsf{originate}\xspace}
\def\termi{\textsf{terminate}\xspace}
\def\attac{\textsf{attach}\xspace}
\def\detac{\textsf{detach}\xspace}
\def\spli{\textsf{split}\xspace}
\def\merg{\textsf{merge}\xspace}
\newcommand{\piru}{\pi_{\mathsf{ru}}}
\newcommand{\piur}{\pi_{\mathsf{ur}}}
\newcommand{\piul}{\pi_{\mathsf{ul}}}
\newcommand{\pilu}{\pi_{\mathsf{lu}}}
\newcommand{\pild}{\pi_{\mathsf{ld}}}
\newcommand{\pidl}{\pi_{\mathsf{dl}}}
\newcommand{\pidr}{\pi_{\mathsf{dr}}}
\newcommand{\pird}{\pi_{\mathsf{rd}}}
\newcommand{\med}{\ensuremath{\delta}}
\newcommand{\divp}{\ensuremath{\delta}}
\newcommand{\lchild}{\ell_c}
\newcommand{\rchild}{r_c}
\newcommand{\segtree}{\mathcal{T}_\mathsf{seg}}
\newcommand{\wset}{\mathcal{W}}
\newcommand{\vset}{\mathcal{V}}
\newcommand{\bbox}{B}
\newcommand{\ch}{\mathsf{CH}}
\newbox\ProofSym
\title{Minimum-Link Shortest Paths for Polygons amidst Rectilinear Obstacles\thanks{This research was partly supported by the Institute of Information \& communications 
Technology Planning \& Evaluation(IITP) grant funded by the Korea government(MSIT)
(No. 2017-0-00905, Software Star Lab (Optimal Data Structure and Algorithmic Applications in Dynamic Geometric Environment)) and (No. 2019-0-01906, Artificial Intelligence Graduate School Program(POSTECH)).}}
\author{Mincheol Kim\thanks{Department of Computer Science and Engineering, 
Pohang University of Science and Technology, Pohang, Korea. 
\texttt{rucatia@postech.ac.kr}}
\and Hee-Kap Ahn\thanks{Graduate School of Artificial Intelligence, Department of Computer Science and Engineering, 
Pohang University of Science and Technology, Pohang, Korea. 
\texttt{heekap@postech.ac.kr}}
}
\begin{document}

\maketitle
\begin{abstract}
Consider two axis-aligned rectilinear simple polygons in the domain 
consisting of axis-aligned rectilinear obstacles in the plane such that 
the bounding boxes, one for each obstacle and one for each polygon, 
are disjoint.
We present an algorithm that computes a minimum-link rectilinear shortest path 
connecting the two polygons in $O((N+n)\log (N+n))$ time using $O(N+n)$ space, 
where $n$ is the number of vertices in the domain 
and $N$ is the total number of vertices of the two polygons.
\end{abstract}


\section{Introduction}
The problem of finding paths connecting two objects 
amidst obstacles has been studied extensively in the past.
It varies on the underlying metric (Euclidean, rectilinear, etc.), types of obstacles (simple polygons,
rectilinear polygons, rectangles, etc.), and objective functions (minimum length, minimum number of links, 
or their combinations).
See the survey in Chapter 31 of~\cite{toth2017} on a wide variety of approaches to this problem and results.

For two points $p$ and $q$ contained in the plane, possibly with rectilinear polygonal obstacles 
(i.e., a \emph{rectilinear domain}), 
a rectilinear shortest path from $p$ to $q$ is a rectilinear path from $p$ to $q$ with minimum total length
that avoids the obstacles.
In the rest of the paper, we say a \emph{shortest path} to refer to 
a rectilinear shortest path unless stated otherwise.
A rectilinear path consists of horizontal and vertical segments, 
each of which is called \emph{link}.
Among all shortest paths from $p$ to $q$, we are interested in a \emph{minimum-link} 
shortest path from $p$ to $q$, that is, a shortest path with the minimum number of links
(or one with the minimum number of bends). There has been a fair amount of work on finding 
minimum-link shortest paths connecting two points amidst rectilinear obstacles in the plane~\cite{chen2001,lee1996,wang2019,yang1992,yang1995}.

These definitions are naturally extended to two more general objects contained in the domain.
A shortest path connecting the objects is one with minimum path length
among all shortest paths from a point of one object to a point of the other object.
A minimum-link shortest path connecting the objects is a minimum-link path
among all shortest paths. 

In this paper, we consider the problem of finding minimum-link shortest paths connecting two objects
in a rectilinear domain, which generalizes the case of connecting two points, in some modest environment.
The rectilinear polygonal obstacles are considered as open sets.
Two axis-aligned rectilinear polygons are said to be \emph{box-disjoint}
if the axis-aligned bounding boxes, one for each rectilinear polygon, are disjoint
in their interiors.
A set of axis-aligned rectilinear polygons is box-disjoint if the polygons of the set 
are pairwise box-disjoint.
The rectilinear domain induced by a set of box-disjoint rectilinear polygons
in the plane is called a \emph{box-disjoint rectilinear domain.}
We require the input objects and the obstacles in the domain
to be also pairwise box-disjoint, unless stated otherwise.

\paragraph{Problem definition.} Given two axis-aligned rectilinear simple polygons 
$\textsf{S}$ and $\textsf{T}$ in a rectilinear domain in the plane such that $\textsf{S}, \textsf{T}$, and 
the obstacles in the domain are pairwise box-disjoint, 
find a minimum-link rectilinear shortest path from $\textsf{S}$ to $\textsf{T}$.

\paragraph{Related work.}
Computing shortest paths or minimum-link paths in a polygonal domain
has been studied extensively.
When obstacles are all rectangles, 
Rezende~et~al.~\cite{rezende1989} presented an algorithm with $O(n \log n)$ time and $O(n)$ space
to compute a shortest path connecting two points amidst $n$ rectangles.
For a rectilinear domain with $n$ vertices, 
Mitchell~\cite{mitchell1992} gave an algorithm with $O(n\log n)$ time 
and $O(n)$ space to compute a shortest path connecting two points
using a method based on the continuous Dijkstra paradigm~\cite{mitchell1989}.
Later, Chen and Wang~\cite{chen2013} improved the time complexity to
$O(n + h \log h)$ for a triangulated polygonal domain with $h$ holes. 

Computing a minimum-link path, not necessarily shortest, 
in a polygonal domain have also been studied well.
For a minimum-link rectilinear path connecting two points in a rectilinear domain with $n$ vertices,
Imai and Asano~\cite{imai1986} gave an algorithm with $O(n \log n)$ time and space.
Then a few algorithms improved the space complexity to $O(n)$ 
without increasing the running time~\cite{das1991,mitchell2014,sato1987}.
Very recently, Mitchell~et~al.~\cite{mitchell2019} gave an algorithm with $O(n + h\log h)$ time 
and $O(n)$ space for triangulated rectilinear domains with $h$ holes.

Yang~et~al.~\cite{yang1992} considered the problem of finding a 
rectilinear path connecting two points amidst rectilinear obstacles 
under a few optimization criteria, such as 
a minimum-link shortest path, a shortest minimum-link path, 
and a least-cost path (a combination of link cost and length cost).
By constructing a path-preserving graph, they gave a unified approach to compute such paths
in $O(ne+n\log n)$ time, where $n$ is the total number of polygon edges and 
$e$ is the number of polygon edges connecting two convex vertices.
The space complexity is $O(ne)$ due to the path-preserving graph of size $O(ne)$.
Since $e$ is $O(n)$,
the running time becomes $O(n^2)$ in the worst case, even for 
convex rectilinear polygons (obstacles). 
A few years later, they gave two algorithms on the problem~\cite{yang1995}, 
improving their previous result, 
one with $O(n\log^2 n)$ time and $O(n\log n)$ space and the other with $O(n\log^{3/2} n)$ time 
and $O(n\log^{3/2} n)$ space using 
a combination of a graph-based approach and the continuous Dijkstra approach.
It is claimed in~\cite{lee1996} that a minimum-link shortest path 
can be computed in $\Theta(n \log n)$ time and $O(n)$ space
when obstacles are rectangles by the algorithm in another paper~\cite{Yang1992rectangle} by
the same authors, but 
the paper~\cite{Yang1992rectangle} is not available.

Later, Chen~et~al.~\cite{chen2001} gave an algorithm improving the previous results 
for finding a minimum-link shortest path connecting two points in $O(n\log^{3/2} n)$ time and $O(n\log n)$ 
space using an implicit representation of a reduced visibility graph, 
instead of computing the whole graph explicitly.
Very recently, Wang~\cite{wang2019} pointed out a flaw in the algorithm in~\cite{chen2001} 
and claimed that to make it work, each vertex of the graph
must store a constant number of nonlocal optimum paths together with local optimum paths.
Wang gave an algorithm with $O(n + h\log^{3/2}h)$ time and $O(n + h\log h)$ space
using a reduced path-preserving graph from the corridor structure~\cite{mitchell2019} and the histogram partitions~\cite{schuierer1996},
where $h$ is the number of holes (obstacles) in the rectilinear domain.

However, we are not aware of any result on computing the minimum-link shortest path 
connecting two objects other than points.

\paragraph{Our results.}
We consider the minimum-link shortest path problem for two axis-aligned rectilinear polygons
$\textsf{S}$ and $\textsf{T}$ 
in a box-disjoint rectilinear domain. This generalizes the two-point shortest path problem
to two-polygon shortest path problem.
The theorem below summarizes our results.

\begin{theorem}\label{thm:result}
Let $\textsf{S}$ and $\textsf{T}$ be two axis-aligned rectilinear simple polygons  
with $N$ vertices in a rectilinear domain with $n$ vertices in the plane such that
$\textsf{S}, \textsf{T}$, and the obstacles in the domain are pairwise box-disjoint.
We can compute a minimum-link shortest path from $\textsf{S}$ to $\textsf{T}$
in $O((N+n)\log(N+n))$ time using $O(N+n)$ space.
\end{theorem}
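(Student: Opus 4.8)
The plan is to reduce the two-polygon problem to a single-source wavefront computation in a rectilinear domain of size $O(N+n)$, and then to solve it in two phases: (i) compute the shortest-path length $\ml$ from $\textsf{S}$ to $\textsf{T}$, and (ii) among all $\textsf{S}$--$\textsf{T}$ paths of length $\ml$, report one with the fewest links. Both phases are run by a wavefront propagation (a continuous-Dijkstra-style sweep) in which the wavefront is maintained as a concatenation of a bounded number of elementary staircases of the eight orientations $\piru, \piur, \piul, \pilu, \pild, \pidl, \pidr, \pird$, kept in a balanced search structure ($\rangebbst$, plus a $\segtree$ for the link-minimization phase) so that each of the events \origi, \termi, \attac, \detac, \spli, \merg costs $O(\log(N+n))$ time; the reported path then has $O(N+n)$ links.

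\textbf{Summarizing $\textsf{S}$ and $\textsf{T}$.} The first substantial step is to show that $\textsf{S}$ and $\textsf{T}$, which together carry $N$ vertices, may each be replaced by $O(1)$ monotone staircase chains of total size $O(N)$. Since the bounding box $\bbox$ of $\textsf{S}$ is interior-disjoint from that of $\textsf{T}$ and from every obstacle bounding box, each $\textsf{S}$--$\textsf{T}$ path splits into a piece from a point of $\textsf{S}$ to $\partial\bbox$ staying inside $\bbox$, a piece through the obstacle domain $\dom$, and a symmetric piece inside the bounding box of $\textsf{T}$. The first piece is governed solely by the \emph{pockets} of $\textsf{S}$ -- the components of $\bbox\setminus\textsf{S}$ meeting $\partial\bbox$ -- so I would analyze these pockets ($\spocket$, with boundary chain $\ps$ and lids $\interlidset$) and prove that an optimal path can traverse only the staircase hull $\ch$ of the pocket boundary seen from each side of $\bbox$; symmetrically for $\textsf{T}$ with $\pt$. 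After this, $\textsf{S}$ and $\textsf{T}$ enter the rest of the computation just like a constant number of obstacle boundary chains.

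\textbf{Phase 1: the length $\ml$.} With $\textsf{S}$ replaced by its summary, I would propagate the wavefront of geodesic distance from $\textsf{S}$ over the reduced domain $\rdom$, which is built from the coordinate grids $\xdom, \ydom$ induced by the sides of all bounding boxes and has size $O(N+n)$. The point where box-disjointness is used is that each bounding box contributes only a bounded number of new wavefront pieces, so the wavefront has combinatorial size $O(N+n)$ throughout the propagation; hence Phase 1 takes $O((N+n)\log(N+n))$ time and $O(N+n)$ space. The propagation stops as soon as the wavefront first reaches the summary of $\textsf{T}$, giving $\ml$, and it leaves a backtracking structure that records, for each cell of $\rdom$, the shortest sub-paths entering that cell.

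\textbf{Phase 2, and the main obstacle.} From the backtracking structure I would extract the \emph{shortest-path corridor} -- the union of cells of $\rdom$ carrying some shortest $\textsf{S}$--$\textsf{T}$ path, of total size $O(N+n)$ -- and inside it run a second wavefront identical to the first except that every wavefront point also records the minimum number of links of a shortest sub-path reaching it; since crossing one corridor cell adds at most one link and the reachable portion of each cell boundary is an interval, the $\segtree$ lets each cell be handled in $O(\log(N+n))$ amortized time. When this wavefront reaches $\textsf{T}$ it yields the optimal link count, and a backward trace reports the \mlsp. I expect the main obstacle to be the summarization step combined with the data-structure bookkeeping: one must show that collapsing $\textsf{S}$ and $\textsf{T}$ to $O(1)$ staircase chains preserves both length-optimality and link-optimality, and one must realize \spli, \merg, \attac and \detac in $O(\log(N+n))$ worst-case time while keeping the invariant that the wavefront is always a concatenation of the eight staircase types -- essentially all the technical weight of the proof sits here.
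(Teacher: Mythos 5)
Your overall plan (continuous-Dijkstra wavefront, then a second link-minimizing pass) is a genuinely different route from the paper, which instead partitions the domain into eight regions by $xy$-monotone paths, proves monotonicity of all shortest paths per region, and runs a baseline-aligned plane sweep with a range tree and a segment-tree-like structure. But as written your proposal has a concrete gap at the step you yourself identify as carrying the weight: the summarization of $\textsf{S}$ and $\textsf{T}$. The lemma you propose -- that an optimal path traverses only the staircase hull of the pocket boundary seen from each side of the bounding box -- is false. Because $\textsf{S}$ is the \emph{source}, not an obstacle, the endpoint of a shortest path may lie deep inside a pocket of $\bbox(\textsf{S})$ (e.g.\ a rightward-opening cavity of $\textsf{S}$ whose back wall is hit by a horizontal ray from a point of $\textsf{T}$ or from an obstacle vertex); replacing $\textsf{S}$ by hull chains either discards these candidate endpoints or, if you replace $\textsf{S}$ by a larger hull, underestimates $d(\textsf{S},\textsf{T})$. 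Moreover the in-pocket subpath need not be $x$-, $y$-, or $xy$-monotone, and its length \emph{and} link count both enter the objective. This is exactly why the paper writes $d(\textsf{S},\textsf{T})=\min_{s,t}\{d(s,t)+\min_{s'\in\textsf{S}}d(s,s')+\min_{t'\in\textsf{T}}d(t,t')\}$, uses an $L_1$ Voronoi diagram of the boundary segments to evaluate the in-box terms, handles pockets with Schuierer's minimum-link shortest-path structure for simple rectilinear polygons, and adds $O(N)$ extra baselines and \origi/\termi events to the sweep rather than collapsing the polygons to $O(1)$ chains.

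The second gap is in Phase 2. The claim that inside the shortest-path corridor ``crossing one corridor cell adds at most one link and the reachable portion of each cell boundary is an interval'' is asserted, not proved, and it ignores that the number of links contributed at a cell depends on the orientation of the incoming segment; the paper's bookkeeping is precisely a quantity $M(i)$ = minimum links among shortest paths whose segment incident to the current point is \emph{horizontal}, together with a lemma (its Lemma~\ref{lem:base-align}) guaranteeing a minimum-link shortest path aligned to $O(N+n)$ baselines, and a separate argument (winders/dividers) for the non-$xy$-monotone regions where a shortest path reverses vertical direction only at full obstacle sides. The history of this problem (the flaw Wang found in the earlier $O(n\log^{3/2}n)$ algorithm) shows that propagating link counts subject to exact shortest length is where such arguments break; your proposal defers this entirely to the data-structure bookkeeping without supplying the structural lemmas that make it sound. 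So the approach might be salvageable, but both the summarization lemma and the Phase-2 invariant need to be replaced or proved, and as stated the former is incorrect.
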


\paragraph{Sketch of our results.}
The main difficulty lies in computing a shortest path from $\textsf{S}$ to $\textsf{T}$.
The length of a shortest path from $\textsf{S}$ to $\textsf{T}$ is determined by
a pair of points, one lying on the boundary of $\textsf{S}$ and one lying on the
boundary of $\textsf{T}$. Such a point is a vertex of $\textsf{S}$ or $\textsf{T}$,
or the first intersection of a horizontal or vertical ray emanating
from a vertex in the domain with the boundary of $\textsf{S}$ or $\textsf{T}$. 
Since the domain has $O(N+n)$ vertices,
there are $O(N+n)$ such points on the boundaries of $\textsf{S}$ and $\textsf{T}$,
and $O((N+n)^2)$ pairs of points, one from $\textsf{S}$ and the other from $\textsf{T}$, 
to consider in order to determine the length of a shortest path. 
Thus, 
if we use a naive approach that computes a minimum-link shortest path for each point pair, 
it may take $\Omega((N+n)^2)$ time.
Theorem~\ref{thm:result} shows that our algorithm computes 
a minimum-link shortest path from $\textsf{S}$ to $\textsf{T}$ efficiently.
Also, a minimum-link shortest path can intersect the interiors of bounding boxes of obstacles,
although $\textsf{S}$, $\textsf{T}$, and obstacles are pairwise box-disjoint.
See Figure~\ref{fig:boxdisjoint}.

\begin{figure}[t]
  \begin{center}
    \includegraphics[width=0.77\textwidth]{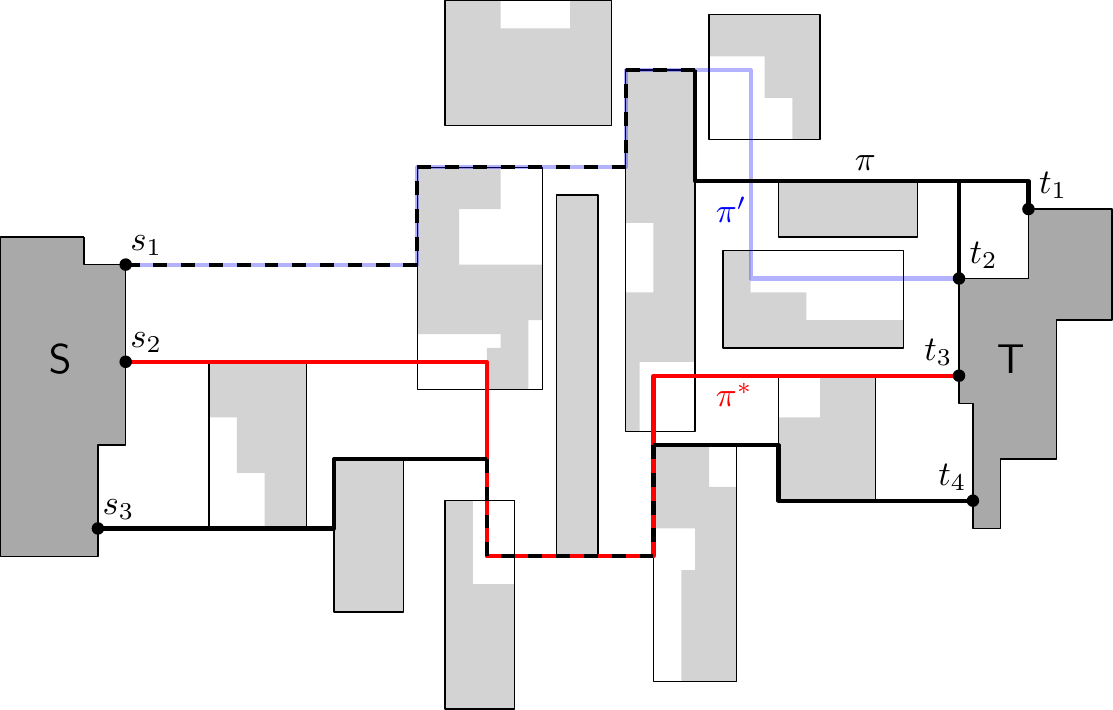}
    \caption{\small
The box-disjoint rectilinear domain with $\textsf{S}$, $\textsf{T}$. Light gray rectilinear polygons are obstacles.
There are six pairs of points,
$(s_1,t_1),(s_1,t_2),(s_2,t_3),(s_2,t_4),(s_3,t_3),(s_3,t_4)$,
that determine the length of a shortest path from $\textsf{S}$ to $\textsf{T}$.
The path $\pi$ from $s_1$ to $t_1$ (or $t_2$) is a minimum-link shortest path from $\textsf{S}$ to $\textsf{T}$
with eight links among all shortest paths without intersecting the interiors of bounding boxes of obstacles.
However,
the blue path $\pi'$ from $s_1$ to $t_2$ has seven links and
the red path $\pi^*$ from $s_2$ to $t_3$ has five links, which is optimal.}
    \label{fig:boxdisjoint}
  \end{center}
\end{figure}

We first consider a simpler problem
for an axis-aligned line segment $S$ and a point $t$ contained in the domain consisting of
axis-aligned rectangular obstacles.
We partition the domain into at most eight regions using eight $xy$-monotone
paths from $S$. We observe that every shortest path from $S$ to a point
in a region is either $x$-, $y$-, or $xy$-monotone~\cite{rezende1989}. 
Moreover, we define a set of $O(n)$ baselines for each region, and
show that there is a minimum-link shortest path from $S$ to $t$ consisting of segments contained in the baselines. 
Based on these observations, our algorithm applies a plane sweep technique with a sweep line moving
from $S$ to $t$ and computes the minimum numbers of links at the intersections of the baselines 
and the sweep line efficiently. After the sweep line reaches $t$, our algorithm reports
a minimum-link shortest path that can be obtained from a reverse traversal 
from $t$ using the number of links stored in baselines. 
During the sweep, our algorithm maintains a data structure storing baselines
(and their minimum numbers of links) 
and updates the structure for the segments (events)
on the boundary of the region.

It takes, however, $O(n^2)$ time using $O(n)$ space.
To reduce the time complexity without increasing the space complexity, 
our algorithm maintains another data structure, a balanced binary search tree, each
node of which corresponds to a set of consecutive baselines. 
This tree behaves like a segment tree~\cite{de2008}.
Instead of updating the minimum numbers of links of $O(n)$ baselines at each event of the plane sweep
algorithm, we update $O(\log n)$ nodes of the tree that together correspond to the baselines.
This improves the time for handling each sweep-line event from $O(n)$ to $O(\log n)$,
and thus improving the total time complexity to $O(n\log n)$.
This tree can be constructed and maintained using $O(n)$ space, and thus the total space remains
to be $O(n)$.

Then we extend our algorithm to handle a line segment $T$ (not a point $t$) 
and box-disjoint rectilinear obstacles (not necessarily rectangles).
We observe that every shortest path contained in a region from $S$ to any point of $T$ 
is either $x$-, $y$-, or $xy$-monotone, so our algorithm partitions the domain into at most eight regions again.
Then $T$ intersects at most five regions. 
Our algorithm computes a minimum-link shortest path from $S$ to $T'$ for the portion $T'$ of $T$ 
contained in each region, and then returns the minimum-link shortest path among the paths.

When $S$ or $T$ intersects some bounding boxes of obstacles,
we consider each portion of $S$ or $T$ contained in a bounding box independently.
The portion not contained in any bounding box can be handled as we do for segments disjoint from the boxes. 
For the portion contained in a bounding box $\bbox(P)$ for a rectilinear polygon $P$,
every minimum-link shortest path from $S$ to $T$ is the concatenation of
a subpath contained in $\bbox(P)$ and the subpath not contained in $\bbox(P)$
such that both subpaths are minimum-link shortest paths sharing one point on the boundary 
of $\bbox(P)$. Thus, our algorithm finds 
a subpath contained in $\bbox(P)$ and a subpath not contained 
in $\bbox(P)$ that together form a minimum-link shortest
path from $S$ to $T$. We observe that there is a minimum-link shortest path from $S$ to $T$
through certain points on the boundary of $\bbox(P)$. By computing these
points and their distances and minimum number of links to $S$ and $T$, our algorithm computes
a minimum-link shortest path from $S$ to $T$ in $O(n\log n)$ time for $S$ or $T$ intersecting 
the bounding boxes. Since an axis-aligned line segment intersects at most two bounding boxes, 
the overall running time remains to be $O(n\log n)$ time using $O(n)$ space.

Finally, we consider that the input objects are rectilinear
simple polygons $\mathsf{S}$ and $\mathsf{T}$ with $N$ vertices.
Recall that there are $O((N+n)^2)$ pairs of points that 
determine the length of a shortest path from $\textsf{S}$ to $\textsf{T}$.
To handle them efficiently, we add $O(N)$ additional baselines and
$O(N)$ events induced by $\textsf{S}$ and $\textsf{T}$ during the plane sweep algorithm.
Then the number of events becomes $O(N+n)$
and the time to handle each event takes $O(\log(N+n))$,
so we obtain Theorem~\ref{thm:result}.

\section{Preliminaries}
Let $\rectset$ be a set of $n$ disjoint axis-aligned rectangles in $\realp$.
Each rectangle $R\in\rectset$ is considered as an open set
and plays as an obstacle in computing a minimum-link shortest path in the plane. 
We let $\dom := \realp - \cup_{R\in\rectset}R$ and call it the \emph{rectangular domain}
induced by $\rectset$ in the plane. 
For two points $p$ and $q$ in $\dom$,
$d(p,q)$ denotes the $L_1$ distance (or the Manhattan distance) from $p$ to $q$ in $\dom$, that is,
the length of a shortest path from $p$ to $q$ avoiding the obstacles.
A path is \emph{$x$-monotone} if the intersection of the path with any line
perpendicular to the $x$-axis is connected.
Likewise, a path is \emph{$y$-monotone} if the intersection of the path
with any line perpendicular to the $y$-axis is connected.
If a path is $x$-monotone and $y$-monotone, the path is \emph{$xy$-monotone}.

For two objects $\textsf{S}$ and $\textsf{T}$ in $\dom$, 
$d(\textsf{S},\textsf{T})=\min_{p\in\textsf{S}, q\in\textsf{T}}d(p,q)$.
A shortest path from $\textsf{S}$ to $\textsf{T}$ is a path in $\dom$ from a point 
$p\in \textsf{S}$ to a point $q\in \textsf{T}$ of length $d(\textsf{S},\textsf{T})$.
A \emph{minimum-link shortest path} from $\textsf{S}$ to $\textsf{T}$ is a path 
that has the minimum number of links among all shortest paths from $\textsf{S}$ to $\textsf{T}$ in $\dom$,
and we use $\ml(\textsf{S},\textsf{T})$ to denote the number of links of 
a minimum-link shortest path from $\textsf{S}$ to $\textsf{T}$.
We call a pair $(p,q)$ of points with $p\in\textsf{S}$ and $q\in\textsf{T}$ such that $d(\textsf{S},\textsf{T}) = d(p,q)$
a \emph{closest pair} of points of $\textsf{S}$ and $\textsf{T}$.
We say $p$ is a closest point of $\textsf{S}$ from $\textsf{T}$, and $q$ is a closest point of $\textsf{T}$ from $\textsf{S}$.
Note that there can be more than one closest pair of points of $\textsf{S}$ and $\textsf{T}$.

We make an assumption that the rectangles are in \emph{general position},
that is, no two rectangles in $\rectset$ have corners, one corner from each rectangle, 
with the same $x$- or $y$-coordinate.
A horizontal line segment $H$ can be represented by
the two $x$-coordinates $x_1(H)$ and $x_2(H)$ of its endpoints ($x_1(H) < x_2(H)$)
and the $y$-coordinate $y(H)$ of them.
Likewise, a vertical line segment $V$ can be represented by
the two $y$-coordinates $y_1(V)$ and $y_2(V)$ of its endpoints ($y_1(V) < y_2(V)$)
and the $x$-coordinate $x(V)$ of them.

\subsection{Eight disjoint regions of a rectangular domain}\label{sec:eightpath}
Given a rectangular domain $\dom$ and a vertical segment $S$, 
we partition $\dom$ into at most eight disjoint regions
by using eight $xy$-monotone paths from the endpoints of $S$
in a way similar to the one by Choi and Yap~\cite{choi1996}.
Consider a horizontal ray from a point $p=p_1$ on $S$ going rightwards.
The ray stops when it hits a rectangle $R\in\rectset$ at a point $p_1'$. 
Let $p_2$ be the top-left corner of
$R$. We repeat this process by taking a horizontal ray from
$p_2$ going rightwards until it hits a rectangle, and so on. 
The last horizontal ray goes to infinity.
Then we obtain an $xy$-monotone path
$\piru(p)=(p=p_1p_1'p_2p_2'\ldots)$.
In other words, $\piru(p)$ is an $xy$-monotone path from $p$
that alternates going \emph{rightwards} (until hitting a rectangle) and 
going \emph{upwards} (to the top-left corner of the rectangle).

By choosing two directions, one going either rightwards or leftwards horizontally,
and one going either upwards or downwards vertically, and ordering the chosen directions, 
we define eight rectilinear $xy$-monotone paths with directions:
rightwards-upwards (\textsf{ru}), upwards-rightwards (\textsf{ur}), upwards-leftwards (\textsf{ul}), leftwards-upwards (\textsf{lu}), 
leftwards-downwards (\textsf{ld}), downwards-leftwards (\textsf{dl}), downwards-rightwards (\textsf{dr}),
and rightwards-downwards (\textsf{rd}).
We use $\pi_\alpha(p)$ to denote them, where $\alpha$ is one in 
$\{\mathsf{ru}, \mathsf{ur}, \mathsf{ul}, \mathsf{lu}, \mathsf{ld}, \mathsf{dl}, \mathsf{dr}, \mathsf{rd}\}$.
Also, we use $\pi_\alpha(p,q)$ to denote the subpath of $\pi_\alpha(p)$ from $p$ to $q\in\pi_\alpha(p)$.

\begin{figure}[t]
  \begin{center}
    \includegraphics[width=0.6\textwidth]{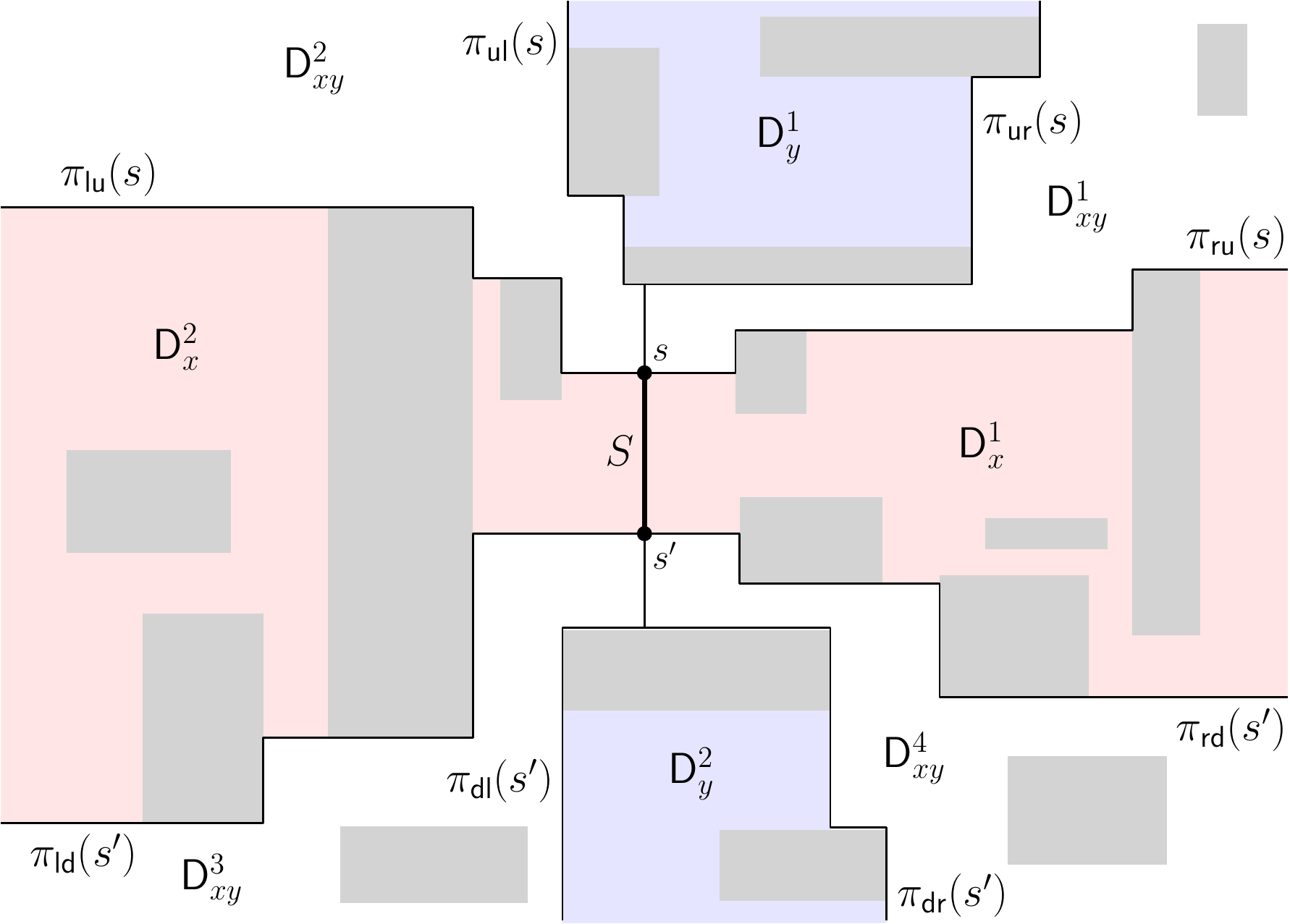}
    \caption{\small
      Eight disjoint regions of $\dom$ by eight $xy$-monotone paths from $s$ or $s'$. Gray rectangles are obstacles.
    }
    \label{fig:subregion}
  \end{center}
\end{figure}

Figure~\ref{fig:subregion} illustrates these eight $xy$-monotone paths, 
four upward paths from
the upper endpoint $s$ of $S$ and four downward paths from the lower endpoint $s'$ of $S$.
Observe that for a point $p\in\dom$, the eight paths $\pi_\alpha(p)$ do not cross each other,
so the four upwards paths from $s$ and the four downwards paths from $s'$
do not cross each other.
Thus, by the eight paths, $\dom$ is partitioned into eight regions. See Figure~\ref{fig:subregion}.
We denote by $\xydom^1$ (and $\xydom^2$, $\xydom^3$, $\xydom^4$)
the region bounded by $\piru(s)$ and $\piur(s)$
(and by $\piul(s)$ and $\pilu(s)$, by $\pild(s')$ and $\pidl(s')$, by $\pidr(s')$ and $\pird(s')$).
We denote by $\xdom^1$ (and $\xdom^2$) the region bounded by $\piru(s)$ and $\pird(s')$
(and by $\pilu(s)$ and $\pild(s')$), and denote by
$\ydom^1$ (and $\ydom^2$) the region bounded by $\piur(s)$ and $\piul(s)$
(and by $\pidl(s')$ and $\pidr(s')$). 

\begin{lemma}\label{lem:monotone}
For a point $t\in\cup_{1\leq i \leq 4}\xydom^i$,
every shortest path from $S$ to $t$ is $xy$-monotone.
For a point $t\in\cup_{1\leq i \leq 2}\xdom^i$,
every shortest path from $S$ to $t$ is $x$-monotone.
For a point $t\in\cup_{1\leq i \leq 2}\ydom^i$,
every shortest path from $S$ to $t$ is $y$-monotone.
\end{lemma}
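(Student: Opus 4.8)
The plan is to prove the first statement (the $xy$-monotone case) in detail; the $x$-monotone and $y$-monotone statements follow by the same reasoning restricted to one coordinate, so I would only sketch the necessary modifications. By symmetry among the eight regions, it suffices to treat a single region, say $t\in\xydom^1$, which is bounded by $\piru(s)$ and $\piur(s)$, where $s$ is the upper endpoint of $S$; I would prove that every shortest path from $S$ to $t$ lies weakly above $S$ (so it starts at $s$, or at a point of $S$ from which the path immediately behaves as if emanating from $s$) and is $xy$-monotone with $x$-coordinate nondecreasing and $y$-coordinate nondecreasing from $S$ to $t$.

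First I would establish the two bounding paths as ``barriers.'' The key geometric fact is that $\piru(s)$ and $\piur(s)$ are themselves shortest paths from $s$ to each of their points: each is $xy$-monotone and staircase-like, hence has length equal to the $L_1$-distance between its endpoints, which is a lower bound for any path; moreover each is a ``extreme'' shortest path in the sense that no shortest path from $s$ can cross to the far side of it (any path that did could be shortcut along the staircase without increasing length and without hitting an obstacle, because the staircase hugs the obstacle corners). This confines any shortest path $\sigma$ from $S$ to $t\in\xydom^1$ to lie inside the closed region $\xydom^1$. Since $\xydom^1$ lies weakly to the right of and weakly above $s$ (it is the ``first quadrant'' region cut out by the two staircases), and $S$ is vertical with $s$ its top endpoint, the portion of $S$ on the boundary of $\xydom^1$ is just the point $s$; hence $\sigma$ effectively starts at $s$.

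Next, within $\xydom^1$ I would argue monotonicity directly. Let $\sigma$ be a shortest path from $s$ to $t$ inside $\xydom^1$. Write $t=(t_x,t_y)$ with $t_x\ge s_x$ and $t_y\ge s_y$ (coordinates of $s$). Any rectilinear path from $s$ to $t$ has length at least $(t_x-s_x)+(t_y-s_y)=d(s,t)$, with equality iff the path never moves leftwards and never moves downwards --- i.e. iff it is monotone nondecreasing in both coordinates, which is exactly $xy$-monotonicity with the stated orientation. So it remains to show $d(s,t)=(t_x-s_x)+(t_y-s_y)$, i.e. that the $L_1$-distance in the domain equals the unobstructed $L_1$-distance. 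This is where I expect the only real work: I would show that the staircase obtained by going rightwards from $s$ along $\piru(s)$ until reaching $x=t_x$ (detouring upward around obstacles exactly as $\piru$ does) and then, if necessary, proceeding is a monotone path of length exactly $d(s,t)$ reaching $t$ --- more carefully, that $t\in\xydom^1$ guarantees one can reach $t$ by a monotone staircase that never needs to back up, precisely because $t$ lies between the two extreme monotone staircases $\piru(s)$ and $\piur(s)$ and the region between them is ``staircase-connected.'' The cleanest way is induction on the number of obstacles, or a direct appeal to the structure of the region: since $\piru(s)$ and $\piur(s)$ bound $\xydom^1$ and are both monotone, and every obstacle meeting $\xydom^1$ has its relevant corners on one of these two staircases, a shortest path can route monotonically around each such obstacle.

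The main obstacle is this last step: turning the intuitive picture ``$t$ is sandwiched between two monotone staircases, so it is reachable monotonically'' into a rigorous argument that $d(S,t)$ equals the obstacle-free $L_1$ distance for all $t\in\xydom^1$. I would handle it via the characterization of the regions: by construction of $\piru$ and $\piur$, the boundary $\partial\xydom^1$ consists of horizontal and vertical pieces that are alternately ``pushed'' by obstacles, and any obstacle $R$ whose interior meets $\xydom^1$ must be ``straddled'' so that a monotone path can pass it; a short exchange/shortcutting argument (replace any leftward or downward move of $\sigma$ by sliding it against the boundary staircases, which strictly decreases length unless there was none) then finishes it. For the $x$-monotone statement, $t\in\xdom^1$ is bounded by $\piru(s)$ and $\pird(s')$, which are both $x$-monotone (nondecreasing) though not $y$-monotone, and the same shortcutting shows any shortest path cannot reverse in $x$; similarly for $\ydom$. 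This is essentially the observation attributed to Rezende~et~al.~\cite{rezende1989}, adapted to the region decomposition of Choi and Yap~\cite{choi1996}.
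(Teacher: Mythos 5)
Your strategy for the $xy$-case is reasonable---reduce to showing $d(s,t)$ equals the unobstructed $L_1$ distance for $t\in\xydom^1$, since a path of exactly that length must be $xy$-monotone---but the step you yourself flag as ``the only real work'' is never actually carried out, and the sketch offered for it does not hold up. The supporting assertion that every obstacle meeting $\xydom^1$ has its relevant corners on the two bounding staircases is false: obstacles can lie entirely in the interior of $\xydom^1$, where they appear as holes (see $R_1,R_2$ in Figure~\ref{fig:xydom}), and your proposed exchange step (``slide any leftward or downward move against the boundary staircases'') is precisely the statement to be proved, since such a slide can be blocked by these interior holes. A correct completion in your spirit does exist---for instance, follow $\pidl(t)$ (or $\pild(t)$) from $t$: it is an obstacle-avoiding down-left monotone path, so it must leave $\xydom^1$ through the outer boundary $\piru(s)\cup\piur(s)$ at a point $c$ with $s$ coordinate-wise below $c$ and $c$ coordinate-wise below $t$; concatenating $\piru(s,c)$ or $\piur(s,c)$ with the reversed piece from $c$ to $t$ yields a monotone path of length exactly $L_1(s,t)$---but no such argument appears in your proposal. (Also, your containment claim that every shortest path stays in $\xydom^1$ rests on a non-strict shortcut; the paper instead proves strictly that every shortest path from $S$ to such a $t$ passes through $s$, which is all that is needed.)

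The second, larger gap is the claim that the $x$- and $y$-monotone statements ``follow by the same reasoning restricted to one coordinate.'' They do not: for $t\in\xdom^1$ the distance $d(S,t)$ is in general strictly larger than the unobstructed $L_1$ distance (shortest paths can wind around obstacles), so the length-equality characterization that powers your $xy$-argument is unavailable, and ``the same shortcutting'' does not rule out a shortest path that reverses in $x$---the shortcut can be blocked. This is exactly where the paper does its work for this case: it invokes the monotonicity theorem of Rezende et al.\ and the Choi--Yap classification, and then shows that for $t\in\xdom^1$ both $\piul(t)$ and $\piur(t)$ meet $\piru(s)$ while both $\pidl(t)$ and $\pidr(t)$ meet $\pird(s')$, which places $t$ in the $x$- or $xy$-monotone class with respect to \emph{every} source point of $S$ (the closest point of $S$ from $t$ may be interior to $S$ here, so arguing only about $s$ and $s'$ is not enough). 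Your closing appeal to ``essentially Rezende et al., adapted to Choi--Yap'' skips precisely this adaptation, which is the nontrivial content of the paper's proof for the $\xdom$ and $\ydom$ regions.
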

\begin{proof}
We claim that every shortest path from $S$ to $t$ connects 
the upper endpoint $s$ of $S$ and $t$
for a point $t\in\ydom^1\cup\xydom^1\cup\xydom^2$.
Assume to the contrary that a shortest path $\pi$ 
from $S$ to $t$ does not pass through $s$.
Then $\pi$ crosses $\piru(s)$ (or $\pilu(s)$) at a point $t'$. By replacing the
portion of $\pi$ from $S$ to $t'$ with the portion of $\piru(s)$ from $s$
to $t'$, we can get a shorter path, a contradiction. 
By a similar argument, we observe that every shortest path from $S$ to $t$ connects
the lower endpoint $s'$ of $S$ and $t$ if $t\in\ydom^2\cup\xydom^3\cup\xydom^4$.

Rezende et al.~\cite{rezende1989} showed that
every shortest path connecting two points in $\dom$ is 
$x$-, $y$-, or $xy$-monotone. Choi and Yap~\cite{choi1996}
gave a classification that for a point $t\in\cup_{1\leq i \leq 2}\ydom^i$
every shortest path from $S$ to $t$ is $y$-monotone,
and for a point $t\in\cup_{1\leq i \leq 4}\xydom^i$ every shortest path from $S$ to $t$ 
is $xy$-monotone.
Assume that $t\in\xdom^1$.
Both $\piul(t)$ and $\piur(t)$ intersect $\piru(s)$,
and both $\pidl(t)$ and $\pidr(t)$ intersect $\pird(s')$.
This implies that every shortest path from a point in $S$ to $t$ 
is $x$- or $xy$-monotone by the classification of Choi and Yap~\cite{choi1996}.
Hence every shortest path from $S$ to $t$ is $x$-monotone.
The case for $p\in\xdom^2$ can be shown similarly.
\end{proof}

From now on we simply use $\xydom$, $\xdom$ and $\ydom$ to denote
$\xydom^1$, $\xdom^1$ and $\ydom^1$, respectively, and 
assume that $t$ lies in a region $\dom'$ of the regions.
The case that $t$ lies in other regions can be handled analogously.
For each horizontal side of the rectangles incident to $\dom'$,
we call the horizontal line containing the side a \emph{horizontal baseline} of $\dom'$.
Similarly, for each vertical side of the rectangles incident to $\dom'$,
we call the vertical line containing the side a \emph{vertical baseline} of $\dom'$.
The two vertical lines through $S$ and $t$, and 
the three horizontal lines through $s$, $s'$ and $t$ are 
also regarded as vertical and horizontal baselines of $\dom'$, respectively.
We say a minimum-link shortest path $\pi$ 
is \emph{aligned to the baselines} if every segment of $\pi$ is contained in a baseline
of the corresponding region. 
By using Lemma~\ref{lem:base-align}, we find a minimum-link shortest path aligned to
the baselines of each region.

\begin{lemma}\label{lem:base-align}
There is a minimum-link shortest path from $S$ to $t$ that is aligned to the baselines
of $\dom'$.
\end{lemma}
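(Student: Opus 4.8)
The plan is to begin with an arbitrary minimum-link shortest path $\pi$ from $S$ to $t$ that lies inside $\dom'$, and to reshape it, one segment at a time, into an aligned path of the same length with the same number of links. By Lemma~\ref{lem:monotone} we may assume that $\dom'=\xydom$ and that $\pi$ is an $xy$-monotone path from $s$ to $t$; the $x$-monotone and $y$-monotone cases, and the remaining monotone regions, follow by relabelling the axes. Two facts drive the argument. First, since $\pi$ is monotone, its length equals $d(s,t)$ and depends only on its endpoints; hence any monotone rectilinear path from $s$ to $t$ inside $\dom$ that avoids the obstacles is again a shortest path, so while reshaping $\pi$ I only have to keep the number of links from increasing. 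Second, the two boundary paths $\piru(s)$ and $\piur(s)$ of $\dom'$ themselves consist of pieces lying on baselines of $\dom'$ --- each maximal horizontal piece runs along the top or bottom side of a rectangle incident to $\dom'$, and each maximal vertical piece is either the initial piece through $S$ or lies on a vertical side of, or emanates from a corner of, such a rectangle --- so nothing bad happens if $\pi$ touches $\partial\dom'$.

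I will reshape $\pi$ in two independent phases: first place every maximal vertical segment of $\pi$ on a vertical baseline, then do the same for the horizontal segments. Separating the phases is what keeps the argument clean: translating a maximal vertical segment of $\pi$ horizontally changes only the lengths --- not the supporting lines --- of its two horizontal neighbours and leaves every other segment of $\pi$ fixed, so it can never un-align a vertical segment, and symmetrically the second phase cannot undo the first. For the first phase I choose, among all minimum-link shortest $xy$-monotone paths from $s$ to $t$ in $\overline{\dom'}$, one for which the number of maximal vertical segments not contained in a vertical baseline is smallest, and suppose for contradiction that this number is positive. A misaligned vertical segment $v$ cannot be the first or last segment of $\pi$ (those lie on the vertical baselines through $S$ and through $t$), so $v$ has two horizontal neighbours of positive length. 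I translate $v$ horizontally, in a direction in which a sufficiently small displacement keeps the path inside $\dom$ and clear of the obstacles, and push it maximally. The translation must stop, and the first stopping event is of one of two kinds: either (i) $v$ becomes collinear with a vertical side of a rectangle incident to $\dom'$, or $v$ reaches $\partial\dom'$ or the vertical line through $S$ or through $t$ --- in every such case $v$ now lies on a vertical baseline; or (ii) a horizontal neighbour of $v$ shrinks to zero length, which merges the two vertical segments flanking it and strictly decreases the number of links, contradicting the minimality of $\pi$. In case (i) we obtain a minimum-link shortest $xy$-monotone path with strictly fewer misaligned vertical segments, contradicting the choice of $\pi$. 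Hence after the first phase all maximal vertical segments lie on vertical baselines, and the symmetric second phase finishes the proof.

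The step I expect to be the main obstacle is making outcome (i) precise. One must verify that some translation direction is feasible for a small enough displacement --- this is where one uses that a misaligned $v$ is not flush against any rectangle, since every vertical side of a rectangle incident to $\dom'$ spans a vertical baseline; that the translation is bounded, so that it genuinely stops; that the first event is of type (i) or (ii) and of no other kind; and, most delicately, that the line on which a blocked segment lands is a baseline of $\dom'$ rather than a line through some far-away obstacle --- which is why one keeps the slide inside $\overline{\dom'}$ and uses that $\partial\dom'$ lies on baselines. A little case analysis is also needed when it is an endpoint of $v$, rather than $v$ itself, that first meets an obstacle or the region boundary, and to justify the preliminary reduction that $\pi$ may be taken inside $\overline{\dom'}$ (which again rests on monotonicity and on the boundary paths being shortest paths).
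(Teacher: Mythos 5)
Your sliding argument for the $xy$-monotone case is sound and takes a genuinely different route from the paper: you slide one misaligned segment at a time until it either lands on a baseline or merges two links, and close the induction with an extremal choice of $\pi$, whereas the paper takes the maximal subpath between two consecutive baseline crossings, shows the axis-aligned rectangle spanned by its endpoints is empty of obstacles, and replaces that whole subpath by an L-shape in one step. Both rest on the same underpinnings (length- and link-preserving perturbations of a monotone path; any blocking rectangle is incident to $\dom'$ and hence defines a baseline), and for $t\in\xydom$ either works.

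The genuine gap is the dismissal of the $x$- and $y$-monotone regions ``by relabelling the axes.'' For $t\in\xdom$ a shortest path is only $x$-monotone, and your first driving fact fails: the length of an $x$-monotone path is \emph{not} determined by its endpoints, since the total vertical travel varies. Concretely, a horizontal segment $H$ whose two vertical neighbours both leave $H$ on the same side (a winder, in the paper's later terminology) admits no length-preserving vertical slide at all --- sliding toward that side strictly shortens the path and sliding away strictly lengthens it --- so neither of your stopping events (i)/(ii) applies to it. The right conclusion there is different: if such an $H$ were not flush against an obstacle it could be slid and shortened, contradicting shortestness, so it must contain part of a rectangle side and hence lie on a baseline; this is exactly the separate case the paper treats in its first paragraph. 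Two further points you lean on also break in $\xdom$: the path may attach to an interior point of $S$, so its first horizontal segment is not automatically on a baseline through an endpoint of $S$ and needs the same shortening argument; and a shortest path need not stay inside the closure of $\dom'$ (see Figure~\ref{fig:xydomtree}(a)), so ``keeping the slide inside $\dom'$'' is unavailable and one must argue separately that the blocking rectangle is incident to $\dom'$ (the paper defers this to Lemma~\ref{lem:winder}). None of this is fatal --- each fix fits your framework --- but as written the reduction is incorrect and your proof establishes the lemma only for $\xydom$.
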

\begin{proof}
Assume that a minimum-link shortest path $\pi$ has a horizontal line segment $H$
which is not contained in any baseline of $\dom'$. 
Clearly, $H$ is not incident to $t$, because there is a horizontal baseline
through $t$.
If $H$ is incident to $S$ not at its endpoints, 
we can move $H$ vertically and get a shorter path, a contradiction.
If both vertical segments of $\pi$ incident to $H$ are 
contained in one side of the line through $H$, then we can get a path shorter than 
$\pi$ by moving $H$ towards the side and shortening the two vertical segments
incident to $H$, a contradiction. This also applies to a vertical segment of $\pi$ not contained
in any baseline. 

Now assume that $\pi$ is $xy$-monotone and $H$ is incident to neither $S$ nor $t$.
Let $\pi'$ be a maximal subpath of $\pi$ such that $\pi'$ contains $H$,
and no horizontal baseline of $\dom'$ intersects $\pi'$ except at its two endpoints $p_1$ and $p_2$
with $y(p_1) < y(p_2)$. See Figure~\ref{fig:lemma3}(a).
We show that the axis-aligned rectangle $R$ with corners at $p_1$ and $p_2$, is contained in $\dom'$.
Assume to the contrary that $R$ is not contained in $\dom'$,
that is, there is a rectangle $R'\in\rectset$ incident to $\dom'$ that intersects $R$.
Then there is a horizontal baseline of $\dom'$ through a side of $R'$
that intersects $\pi'$. This contradicts the definition of $\pi'$,
so $R$ is contained in $\dom'$.
See Figure~\ref{fig:lemma3}(a).

\begin{figure}[t]
  \begin{center}
    \includegraphics[width=0.8\textwidth]{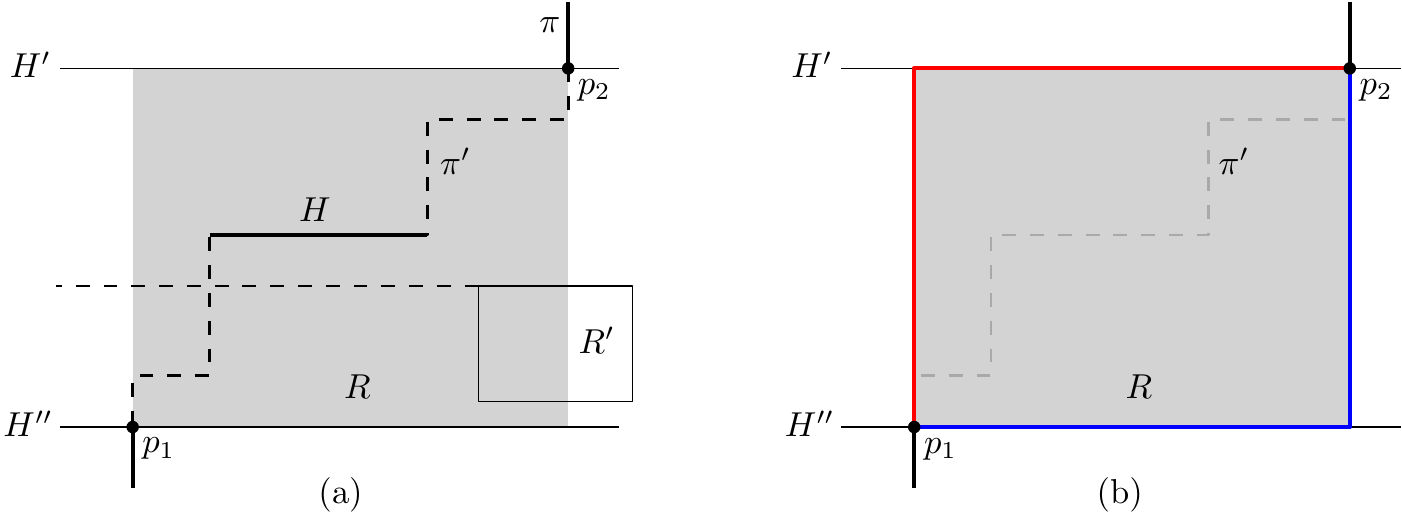}
    \caption{\small
Proof of Lemma~\ref{lem:base-align}.
(a) When $\pi'$ intersects two baselines $H'$ and $H''$ at its two endpoints $p_1, p_2$,
no rectangle intersects the rectangle $R$ (gray) with corners at $p_1, p_2$.
(b) By replacing $\pi'$ with the subpath (red or blue) along the boundary of $R$, 
we obtain a path with a smaller or the same number of links.
    }
    \label{fig:lemma3}
  \end{center}
\end{figure}

Thus, we can replace the subpath $\pi'$ with a horizontal side and a vertical side of $R$
without increasing the length of $\pi$.
See Figure~\ref{fig:lemma3}(b).
The resulting path has the number of links smaller than or equal to that of $\pi$.
By applying the procedure above for every horizontal line segment 
not contained in a horizontal baseline of $\dom'$, we can get
a minimum-link shortest path $\pi^*$ from $s^*$ to $t$ such that
every horizontal line segment of $\pi^*$ is contained in a horizontal baseline of $\dom'$.

Similarly, we can replace every vertical segment of $\pi'$ not contained in a vertical baseline
with one contained in a baseline without increasing the length of the path.
\end{proof}

\section{\texorpdfstring{$t$}{t} lies in \texorpdfstring{$\xydom$}{Dxy}}
\label{sec:xymonotone}
We consider the case that $t$ lies in $\xydom$.
By Lemma~\ref{lem:monotone}, every shortest path from $S$ to $t$ is $xy$-monotone and 
connects the upper endpoint $s$ of $S$ and $t$.
Let $c$ be the point with the maximum 
$x$-coordinate and the maximum $y$-coordinate
among the points in $\piur(s)\cap\pild(t)$. Observe that
$c$ is defined uniquely as $\piur(s)\cap\pild(t)$ is connected
and $xy$-monotone by the definition.
Likewise, let $c'$ be the point with the maximum 
$x$-coordinate and the maximum $y$-coordinate
among the points in $\piru(s)\cap\pidl(t)$.
Then we use $\xydom(s,t)$ to denote the region of $\xydom$
enclosed by the closed curve composed of $\piur(s,c)$, $\pild(t,c)$, $\piru(s,c')$, and $\pidl(t,c')$.
We denote by $\xybd(s,t)$ the rectilinear chain of the outer boundary of $\xydom(s,t)$
from $s$ to $t$ in clockwise order,
and denote by $\xybd(t,s)$ the rectilinear chain of the outer boundary of $\xydom(s,t)$
from $t$ to $s$ in clockwise order.
See Figure~\ref{fig:xydom}(a) for an illustration.
By Lemma~\ref{lem:monotone}, every shortest path from $s$ to $t$ is contained in 
$\xydom(s,t)$, and therefore every minimum-link shortest path from $s$ to $t$
is also contained in $\xydom(s,t)$.

\begin{figure}[t]
  \begin{center}
    \includegraphics[width=\textwidth]{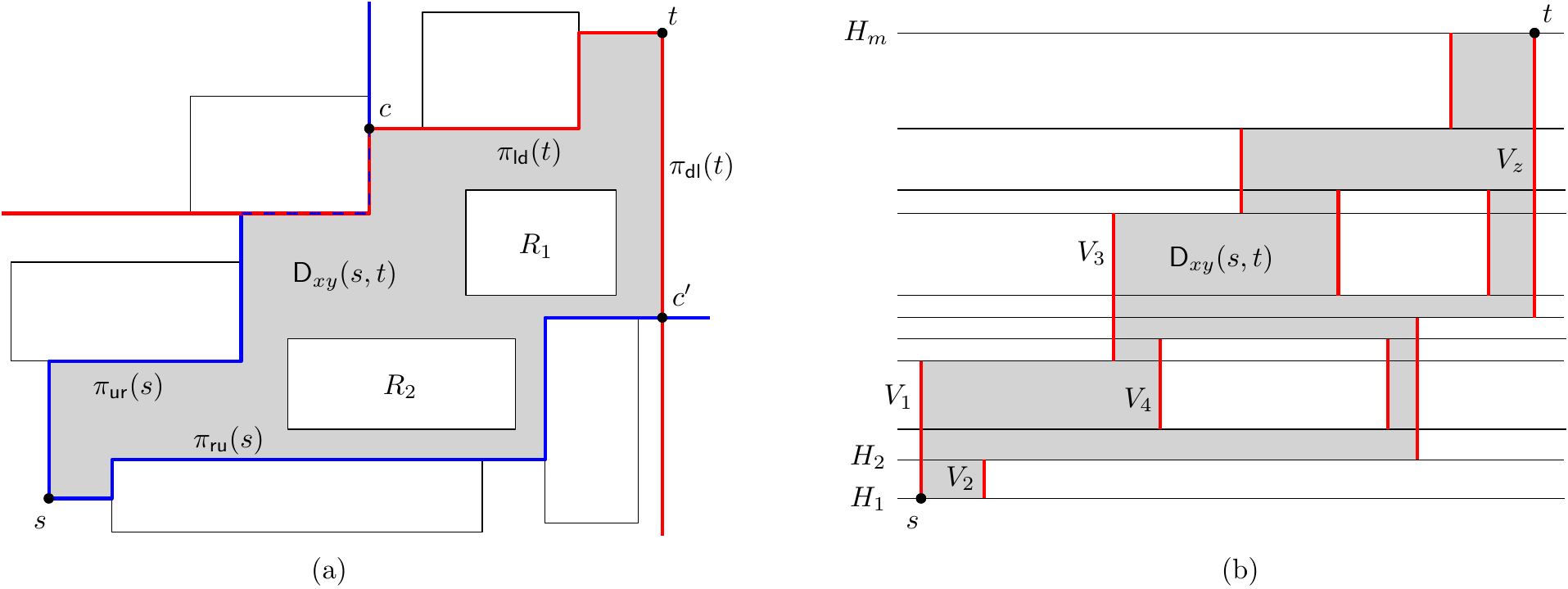}
    \caption{\small (a) $\xydom(s,t)$ is the region of $\xydom$ enclosed by the closed curve composed of
    $\piur(s,c)$, $\pild(t,c)$, $\piru(s,c')$, and $\pidl(t,c')$.
$R_1$ and $R_2$ are the holes of $\xydom(s,t)$.
(b) Horizontal baselines $H_1, H_2,\ldots, H_m$ of $\xydom(s,t)$ and vertical segments (red) $V_1, V_2,\ldots, V_z$ on the boundary of $\xydom(s,t)$.
    }
    \label{fig:xydom}
  \end{center}
\end{figure}

We focus on the baselines of $\xydom$ that are 
defined by $s$, $t$, and the rectangles incident to $\xydom(s,t)$, which 
we call the baselines of $\xydom(s,t)$.
Figure~\ref{fig:xydom}(b) shows the horizontal baselines of $\xydom(s,t)$.
Note that a baseline
may cross rectangles incident to $\xydom(s,t)$.
Let $H_1, H_2, \ldots, H_m$ be the $m$ horizontal baselines of $\xydom(s,t)$ such that
$y(H_1) < y(H_2) < \ldots < y(H_m)$.
Note that $s$ is on $H_1$ and $t$ is on $H_m$.

\subsection{Computing the minimum number of links}\label{sec:min-link}
Consider a minimum-link shortest path aligned to the baselines of $\xydom(s,t)$.
For the rightmost vertical segment $V$ of $\xydom(s,t)$, we have $y_2(V) = y(t)$
and $y_1(V) = y(H_{m'})$ for some horizontal baseline $H_{m'}$ with $m'<m$.
We can compute a minimum-link shortest path
once we have a minimum-link shortest path from $s$ 
to the intersection point $c_i$ of $V$ and $H_i$ for each $i=m',m'+1,\ldots,m$,
since $t$ is the endpoint of $V$.

We compute $\ml(s,t)$ by applying the plane sweep algorithm,
and then report a minimum-link shortest path aligned to the baselines of $\xydom(s,t)$
that can be obtained from a reverse traversal from $t$ using $\ml(s,t)$.

Imagine a vertical line $L$ sweeping $\xydom(s,t)$ rightwards.
Our plane sweep algorithm maintains a data structure storing horizontal baselines
and their minimum numbers of links among shortest paths from $s$ to intersections of baselines and $L$
such that the line segments incident to the intersections of those shortest paths are horizontal.
The algorithm updates their status and minimum numbers of links
when $L$ encounters the vertical segments (vertical baselines) on the boundary of $\xydom(s,t)$.

We define the status for each horizontal baseline as follows.
For the intersection point $c_i=H_i\cap L$ for each $i=1,\ldots,m$,
if $c_i \in \xydom(s,t)$, then $H_i$ is \emph{active}. Otherwise, $H_i$ is 
\emph{inactive}. Observe that a baseline may switch its status between active and inactive,
depending on the position of $L$, and these switches occur only when $L$ encounters 
a vertical segment on the boundary of $\xydom(s,t)$.
During the sweep, we maintain the active baselines of $\xydom(s,t)$ in a set of ranges 
with respect to their indices in a range tree $\rangebbst$.
A range $[a,b]$ contained in $\rangebbst$ represents a set of active baselines
$H_a, H_{a+1},\ldots,H_b$, consecutive in their indices from $a$ to $b$.
Every range $[a,b]$ in $\rangebbst$ is \emph{maximal} 
in the sense that $H_{a-1}$ and $H_{b+1}$ are inactive or not defined in $\xydom(s,t)$.
We use $M(i)$ to denote the minimum number of links among all shortest paths from $s$ to $c_i$
whose segment incident to $c_i$ is horizontal.

We maintain $M(i)$'s for horizontal baselines during the plane sweep as follows.
There are vertical line segments $V_1, V_2, \ldots, V_z$ on the boundary of $\xydom(s,t)$,
satisfying $x(V_1) < x(V_2) < \ldots < x(V_z)$.
Note that the lower endpoint of $V_1$ is $s$ and the upper endpoint of $V_z$ is $t$.
We consider each vertical segment $V_j$ $(1\leq j\leq z)$ of $\xydom(s,t)$ as an event, 
denoted by $E_j$,
because we compute a minimum-link shortest path aligned to the baselines of $\xydom(s,t)$,
so $M(i)$ changes only when $L$ encounters a vertical segment.
For each $E_j$, we use $\alpha(j)$ and $\beta(j)$ (with $\alpha(j)<\beta(j)$)
to denote the indices such that $y_1(V_j) = y(H_{\alpha(j)})$ and $y_2(V_j) = y(H_{\beta(j)})$, respectively.
$E_j$ belongs to one of the following six types depending on the boundary part of $\xydom(s,t)$ 
that $V_j$ lies on. See Figure~\ref{fig:6events} for an illustration of each type.
\begin{itemize}
\item $E_1$ belongs to type {\origi} and $E_z$ belongs to type {\termi.}
\item $E_j$ for each $j=2,\ldots, z-1$ belongs to type {\attac} if $V_j$ lies on $\xybd(s,t)$,
and to type {\detac} if $V_j$ lies on $\xybd(t,s)$.
\item $E_j$ belongs to type {\spli}  if $V_j$ is 
the left side of a hole of $\xydom(s,t)$, and to type {\merg} if $V_j$ is 
the right side of a hole.
\end{itemize}
\begin{figure}[t]
  \begin{center}
    \includegraphics[width=\textwidth]{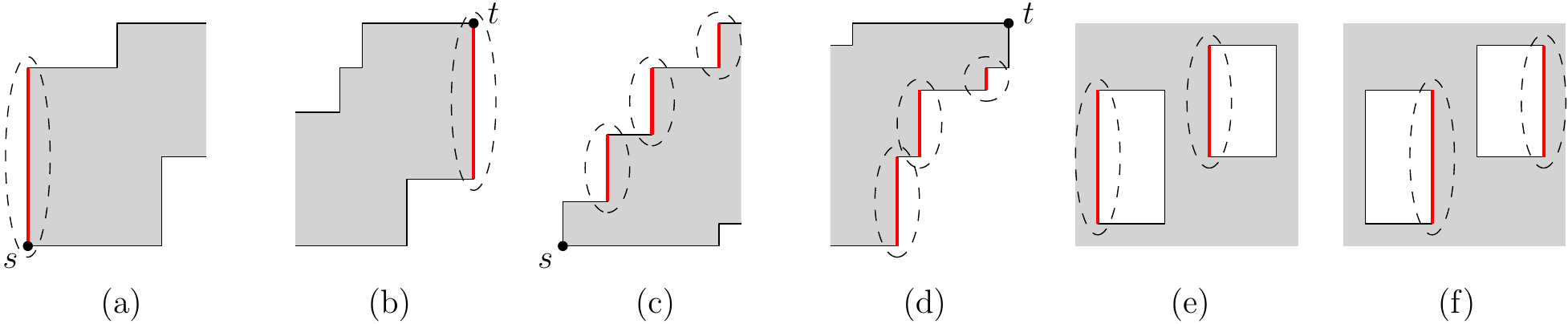}
    \caption{\small Six types of events of the plane sweep algorithm. (a) \origi event. (b) \termi event. 
(c) \attac event. (d) \detac event. (e) \spli event. (f) \merg event. 
    }
    \label{fig:6events}
  \end{center}
\end{figure}

The events are sorted by their $x$-coordinates.
During the sweep, $L$ encounters $E_j$ when $x(L) = x(V_j)$. 
Initially, the tree $\rangebbst$ contains no range, and $M(i)$ is set to $\infty$ for all horizontal 
baselines $H_i$.
When $L$ encounters $E_1$, which is the \origi event with $\alpha(1)=1$,
we update $M(\alpha(1)) := 1$ and $M(k) := 2$ for each $k\in[\alpha(1)+1,\beta(1)]$,
and insert the range $[\alpha(1),\beta(1)]$ into $\rangebbst$.

If $E_j$ is an \attac event, 
the inactive baselines $H_i$ for $i$ from $\alpha(j)+1$ to $\beta(j)$ become active.
Observe that there always exists a range $[a',\alpha(j)]$ in $\rangebbst$ with $a' < \alpha(j)$.
Thus, we remove $[a',\alpha(j)]$ from $\rangebbst$ and insert $[a',\beta(j)]$ into $\rangebbst$.
Then we update $M(i):=\min_{k\in[a',\alpha(j)]}\{M(k)+2\}$ for each $i\in[\alpha(j)+1,\beta(j)]$.

If $E_j$ is a \detac event, 
the active baselines $H_i$ for $i$ from $\alpha(j)$ to $\beta(j)-1$ become inactive.
Observe that there always exists a range $[\alpha(j),b']$ in $\rangebbst$ with $\beta(j) < b'$.
Thus, we remove $[\alpha(j),b']$ from $\rangebbst$, and insert $[\beta(j),b']$ into $\rangebbst$.
Then we update $M(i):=\min\{M(i),\min_{k\in[\alpha(j),\beta(j)-1]}(M(k)+2)\}$ for each $i\in[\beta(j),b']$.

If $E_j$ is a \spli event, 
the active baselines lying in between $H_{\alpha(j)}$ and $H_{\beta(j)}$
become inactive.
If there is such a baseline, there always exists a range $[a',b']$
in $\rangebbst$ with $a' < \alpha(j)$ and $\beta(j) < b'$.
In this case, we remove $[a',b']$ from $\rangebbst$, insert $[a',\alpha(j)]$ and $[\beta(j),b']$ into $\rangebbst$,
and update for each $i\in[\beta(j),b']$
$M(i) :=\min\{M(i),\min_{k\in[a',\beta(j)-1]}\{M(k)+2\}\}$ for each $i\in[\beta(j),b']$.

If $E_j$ is a \merg event,
the inactive baselines lying in between $H_{\alpha(j)}$ and $H_{\beta(j)}$
become active.
If there is such a baseline,
there always exist two ranges $[a',\alpha(j)]$ and $[\beta(j),b']$ 
in $\rangebbst$ with $a' < \alpha(j)$ and $\beta(j) < b'$.
In this case, we remove $[a',\alpha(j)]$ and $[\beta(j),b']$ from $\rangebbst$, insert $[a',b']$ into $\rangebbst$, and
update 
\begin{equation}\label{eqn:merge}
M(i):=\left.
\begin{cases}
\min_{k\in[a',\alpha(j)]}M(k)+2 & \text{ for }i\in[\alpha(j)+1,\beta(j)-1],\\
\min\{M(i),\min_{k\in[a',\alpha(j)]}M(k)+2\}& \text{ for }i\in[\beta(j),b'].
\end{cases}
\right.
\end{equation}

Our algorithm eventually finds $\ml(s,t)$ when 
$L$ encounters the \termi event $E_z$ with $\beta(z)=m$.
Then $\rangebbst$ has exactly one range $[\alpha(z),\beta(z)]$, and we remove it from $\rangebbst$.
We take $\ml(s,t)=\min\{M(m),\min_{k\in[\alpha(z),\beta(z)-1]}M(k)+1\}$.

\subsection{Computing a minimum-link shortest path}\label{sec:mlsp}
We compute a minimum-link shortest path from $s$ to $t$ aligned to the baselines of $\xydom(s,t)$ 
using $\ml(s,t)$.
To do this, we add a horizontal line segment at each event, 
which we call a \emph{canonical segment}.
Then we report a minimum-link shortest path using these canonical segments.

\begin{figure}[t]
  \begin{center}
    \includegraphics[width=\textwidth]{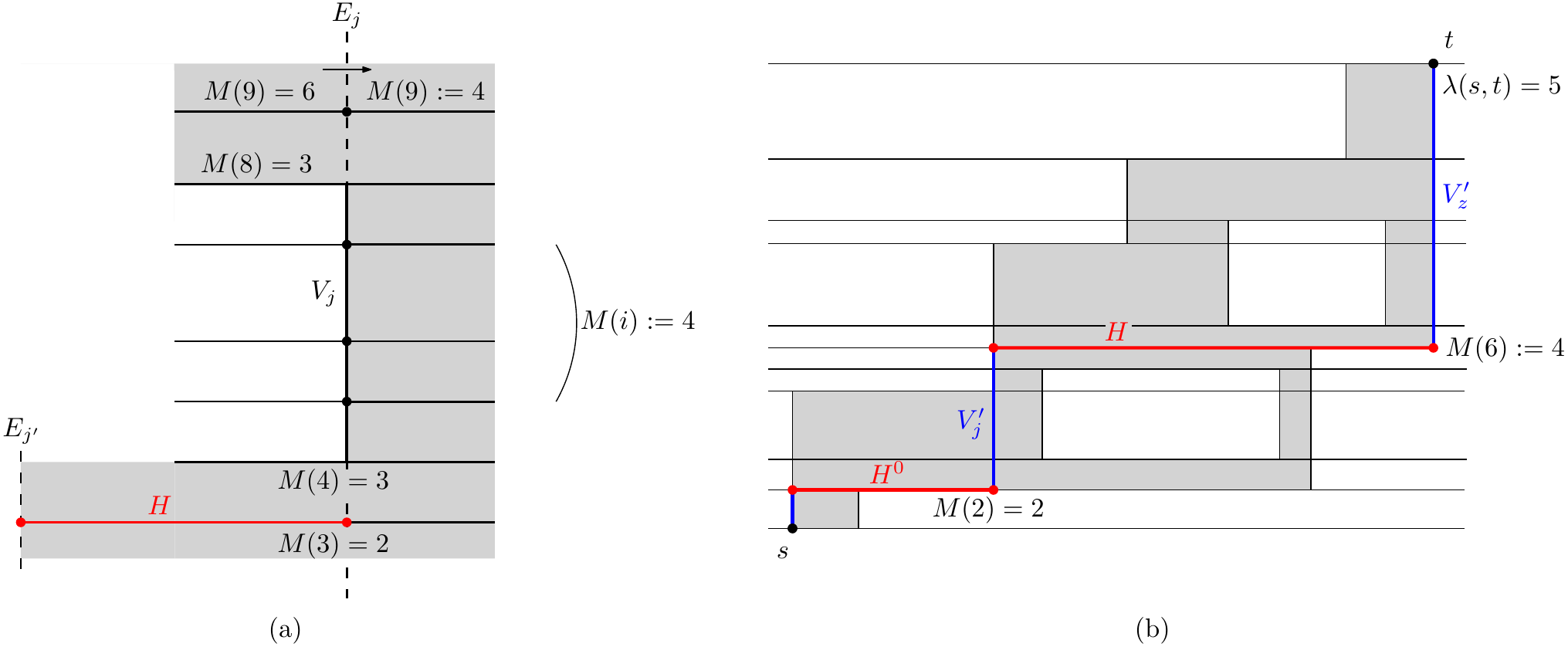}
    \caption{\small (a) A \merg event $E_j$. 
$M(3) = \min_{k\in[a',\alpha(j)]}M(k)$, which was updated from $E_{j'}$.
The baselines $H_i$ for $i=5,\ldots,7$ become active, and
$M(i)$ is updated to $4$ by $M(3)$ (Equation~\ref{eqn:merge}).
$M(9)$ is also updated by $M(3)$ (Equation~\ref{eqn:merge}). $H$ is the canonical segment for $E_j$.
(b) A minimum-link shortest path from $s$ to $t$
consisting of three vertical segments (blue)
and two horizontal segments (red). The red horizontal segments are canonical segments in $\xydom(s,t)$.
$\ml(s,t)$ is computed by $M(6)$ in the \termi event. $M(6)$ was updated in $E_j$
so there is a canonical segment $H$ for $E_z$. $H$ and $V'_z$ form a subpath of $\pi$.
    }
    \label{fig:pbase}
  \end{center}
\end{figure}

For instance, consider a \merg event $E_j$.
Recall that $\rangebbst$ has two disjoint ranges 
$[a',\alpha(j)]$ and $[\beta(j),b']$ with $a' < \alpha(j)$ and $\beta(j) < b'$. 
We update $M(i)$ using Equation~\ref{eqn:merge}.
Let $k^*\in[a',\alpha(j)]$ be the smallest index such that $M(k^*)$ equals $\min_{k\in[a',\alpha(j)]}M(k)$.
Assume that $M(k^*)$ was updated lately to the current value at an event $E_{j'}$
before $L$ encounters $E_j$.
Obviously, $x(V_{j'})< x(V_j)$.
We add a horizontal line segment $H$, which we call a canonical segment for $E_j$
with $x_1(H) = x(V_{j'})$, $x_2(H) = x(V_j)$ and $y(H) = y(H_{k^*})$.
See Figure~\ref{fig:pbase}(a).

We add one canonical segment for the \merg event $E_j$.
Likewise, we add one canonical segment per event of other types, except for the \origi event.
Since the $x$-coordinates of the events are distinct by the general position assumption,
the right endpoints of the canonical segments we add are also distinct.
Once the plane sweep algorithm is done,
by following lemma, we can report a shortest path that has $\ml(s,t)$ links.

\begin{lemma}\label{lem:canonical}
There is a minimum-link shortest path from $s$ to $t$
whose horizontal line segments are all canonical segments.
\end{lemma}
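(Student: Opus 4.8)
The plan is to start from an arbitrary minimum-link shortest path $\pi$ from $s$ to $t$ that is aligned to the baselines of $\xydom(s,t)$, which exists by Lemma~\ref{lem:base-align}, and argue that $\pi$ can be transformed, without increasing its number of links or its length, into one whose horizontal segments are precisely the canonical segments added by the plane-sweep algorithm. Because $\pi$ is $xy$-monotone (Lemma~\ref{lem:monotone}) and aligned to baselines, it is an alternating sequence of vertical segments and horizontal segments, the horizontals lying on baselines $H_{i_1}, H_{i_2}, \ldots$ with strictly increasing indices and strictly increasing $x$-ranges. The key observation I would establish is an invariant of the sweep: at the moment $L$ passes $x(V_j)$, for every active baseline $H_i$ the value $M(i)$ equals the minimum number of links over all shortest $s$-to-$(H_i\cap L)$ paths aligned to the baselines whose last segment is horizontal, and moreover whenever $M(i)$ attains a finite value there is a canonical segment lying on $H_i$ with right endpoint at that sweep position realizing a path with $M(i)$ links. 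This invariant is proved by induction over the event sequence, checking each of the six event types (\origi, \termi, \attac, \detac, \spli, \merg) against the corresponding update rule for $\rangebbst$ and the $M(\cdot)$ recurrences; the canonical segment added at $E_j$ is exactly the horizontal segment on $H_{k^*}$ witnessing the minimizing choice, so it splices onto the previously recorded canonical segment on $H_{k^*}$ (added at the earlier event $E_{j'}$ where $M(k^*)$ was last set).

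From the invariant the lemma follows by a backward traversal. Since $\pi$ has $\ml(s,t)$ links, and $\ml(s,t) = \min\{M(m), \min_{k\in[\alpha(z),\beta(z)-1]} M(k)+1\}$ by the \termi step, I would reconstruct a path by: taking the vertical segment $V_z'$ of the boundary incident to $t$ together with either $H_m$ (if $M(m)$ is the minimizer) or the canonical segment on the minimizing $H_k$ followed by a vertical step along $V_z$ up to $t$; then repeatedly, given that the current partial path reaches a point on baseline $H_i$ at $x$-coordinate $x(V_{j})$ with $M(i)$ recorded there, prepend the canonical segment $H$ on $H_i$ whose left endpoint is $x(V_{j'})$ where $E_{j'}$ is the event that set $M(i)$, then prepend a vertical segment down (or up) along the vertical baseline at $x(V_{j'})$ to the baseline $H_{k^*}$ from which $M(i)$ was derived at $E_{j'}$, and recurse on $(H_{k^*}, E_{j'})$. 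This recursion strictly decreases the $x$-coordinate at each step, so it terminates, and it terminates at $s$ because the \origi event is the unique event with $M$-values set from scratch ($M(\alpha(1))=1$). Counting: each prepended canonical segment together with its incident vertical segment accounts for exactly the $+2$ in the recurrence (or $+1$ at the boundary steps), so the total link count of the reconstructed path is exactly $\ml(s,t)$; and length is preserved since every segment lies on a baseline and the path remains $xy$-monotone inside $\xydom(s,t)$, hence has length $d(s,t)$.

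The main obstacle I anticipate is the case analysis for the \spli and \merg events, where $\rangebbst$ splits or joins ranges and the $M(\cdot)$ recurrence has the two-branch form of Equation~\ref{eqn:merge}: I must check that the canonical segment added there lies on a baseline that is genuinely reachable (i.e.\ $H_{k^*}$ is active throughout $[x(V_{j'}), x(V_j)]$, so the horizontal canonical segment stays inside $\xydom(s,t)$), and that for indices $i\in[\beta(j),b']$ the $\min\{M(i),\ldots\}$ form correctly keeps whichever of the two incoming subpaths (the one routed around the hole versus a pre-existing one) has fewer links, without double counting. A secondary subtlety is verifying that no horizontal segment of the reconstructed path needs to lie off the canonical segments: this is exactly where I use that $\pi$ was already assumed baseline-aligned and minimum-link, so its own horizontal segments give an upper bound matching $\ml(s,t)$, and the invariant shows the canonical reconstruction meets that bound — hence the reconstructed path is itself a minimum-link shortest path with the claimed structure.
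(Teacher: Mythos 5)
Your proposal is correct and follows essentially the same route as the paper's proof: a backward traversal from $t$ that repeatedly prepends the canonical segment recorded at the event where the current $M(\cdot)$ value was last set, together with the connecting vertical segment, terminating at the \origi event. The only difference is that you make explicit the sweep invariant (that $M(i)$ correctly records the minimum link count of baseline-aligned shortest paths ending horizontally at $c_i$, witnessed by a canonical segment), which the paper leaves implicit in its description of the event updates in Section~\ref{sec:min-link}; this is added rigor rather than a different argument.
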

\begin{proof}
At the \termi event $E_z$, 
we have $\ml(s,t):=\min\{M(m),M(k^*)+1\}$,
where $k^*$ is the index satisfying $M(k^*) = \min_{k\in[\alpha(z),\beta(z)-1]}M(k)$.
If $M(m) \leq M(k^*)$, there is a canonical segment $H$ incident to $t$,
so $H$ becomes the horizontal segment of $\pi$ that is incident to $t$.
Otherwise, there is a canonical segment $H$ incident to $V_z$
with $y(H)=y(H_{k^*})$, and thus $H$ and $V'_z$ form a subpath 
of $\pi$, where $V'_z$ is the portion of $V_z$ with $y_1(V'_z) = y(H_{k^*})$ and $y_2(V'_z) = y(t)$.
For both cases, we can find the left endpoint of $H$ such that $x_1(H)=x(V_j)$.
We know there exists a canonical segment $H'$ for the event $E_j$,
so we do the above process for $H'$ and the $V'_j$ to form a subpath of $\pi$ recursively,
where $V'_j$ is the vertical line segment connecting the left endpoint of $H$ and the right endpoint of $H'$.
See Figure~\ref{fig:pbase}(b).
At the \origi event $E_1$,
we have a canonical segment $H^0$ with $x(H^0) = x(s)$.
Then the vertical line segment connecting $s$ and the left endpoint of $H^0$ forms a subpath of $\pi$.
Gluing all subpaths formed from above recursive process,
we finally obatin $\pi$ whose horizontal line segments are all canonical segments.
\end{proof}

$\xydom(s,t)$ can be obtained by ray shooting queries, each taking $O(\log n)$ time, 
using the data structure of Giora and Kaplan~\cite{giora2009}
with $O(n \log n)$ preprocessing time. 
Let $h$ be the number of holes in $\xydom(s,t)$,
and $o$ be the complexity of the outer boundary of $\xydom(s,t)$.
We can construct $\xydom(s,t)$ in $O((o+h) \log n)$
time using $O(o+h)$ space.

At each of the $z$ events, we remove and insert some ranges.
Because the ranges in $\rangebbst$ are disjoint by the definition of 
$\rangebbst$, we can insert and remove a range in $O(\log m)$ time by using
a simple balanced binary search tree for $\rangebbst$.
We also set or update some $M(i)$'s at each event.
For each event, if we know $\min_{k\in[a_1,b_1]} M(k)$ for a range $[a_1,b_1]$,
we can update $M(i)$ for $i\in[a_2,b_2]$ (with $b_1$ < $a_2$)
in time linear to the number of consecutive baselines from $H_{a_2}$ to $H_{b_2}$,
and the number of $M(i)$'s is $O(m)$.
Therefore, it takes $O(m)$ time to handle an event.
We use $O(m)$ space to maintain $\rangebbst$ and $M(i)$'s.
In total, we use $O(o+h+m)$ space to compute $\ml(s,t)$.
We can report a minimum-link shortest path using $O(z)$ canonical segments.
Thus, our algorithm takes $O((n+o+h)\log n + mz) = O(n^2)$ time and $O(o+h+m+z) = O(n)$ space.

\subsection{Reducing the time complexity}\label{sec:reduce-time}
To reduce the time complexity of our algorithm
to $O(\log m)$ for handling each event while keeping the space complexity to $O(n)$ space,
we build another balanced binary search tree $\segtree$, a variant of a segment tree in~\cite{de2008}.
The idea is to use $\segtree$ together with $\rangebbst$ to maintain $O(\log m)$ nodes corresponding to $O(m)$ $M(i)$'s efficiently,
instead of updating $M(i)$'s for each event immediately.

Each node $w$ of $\segtree$ corresponds to a sequence of baselines consecutive in their indices,
say from $\alpha(w)$ to $\beta(w)$ with $\alpha(w)\leq \beta(w)$.
Let $\lchild(w)$ and $\rchild(w)$ be the left child and the right child of $w$, respectively.
A leaf node $w$ corresponds to one baseline $H_i$, hence $\alpha(w) = \beta(w) = i$.
A nonleaf node $w$ corresponds to a sequence of baselines corresponding 
to the leaf nodes in the subtree rooted at $w$, and thus 
$\alpha(w) = \alpha(\lchild(w))$ and $\beta(w) = \beta(\rchild(w))$.
We say a node $w$ of $\segtree$ is \emph{inactive} 
if all baselines with indices from $\alpha(w)$ to $\beta(w)$
are inactive. Node $w$ is \emph{active} otherwise.

We can represent any range of indices using $O(\log m)$ nodes of $\segtree$
whose ranges are disjoint. We use $\wset[a,b]$ to denote the set of nodes of $\segtree$ 
such that $[a,b] = \cup_{w\in\wset[a,b]} [\alpha(w),\beta(w)]$ and $[\alpha(w),\beta(w)] \cap [\alpha(w'),\beta(w')]=\emptyset$ 
for any two nodes $w,w' \in \wset[a,b]$.
We define another set $\vset[a,b]$ of nodes $v$ of $\segtree$ such that $[\alpha(v),\beta(v)]\cap[a,b]\neq\emptyset$
and $[\alpha(v),\beta(v)]\not\subseteq[a,b]$.
Observe that the number of nodes in $\vset[a,b]$ is also $O(\log m)$.
See Figure~\ref{fig:segtree} for an illustration of $\wset[a,b]$ and $\vset[a,b]$.

\begin{figure}[t]
  \begin{center}
    \includegraphics[width=0.6\textwidth]{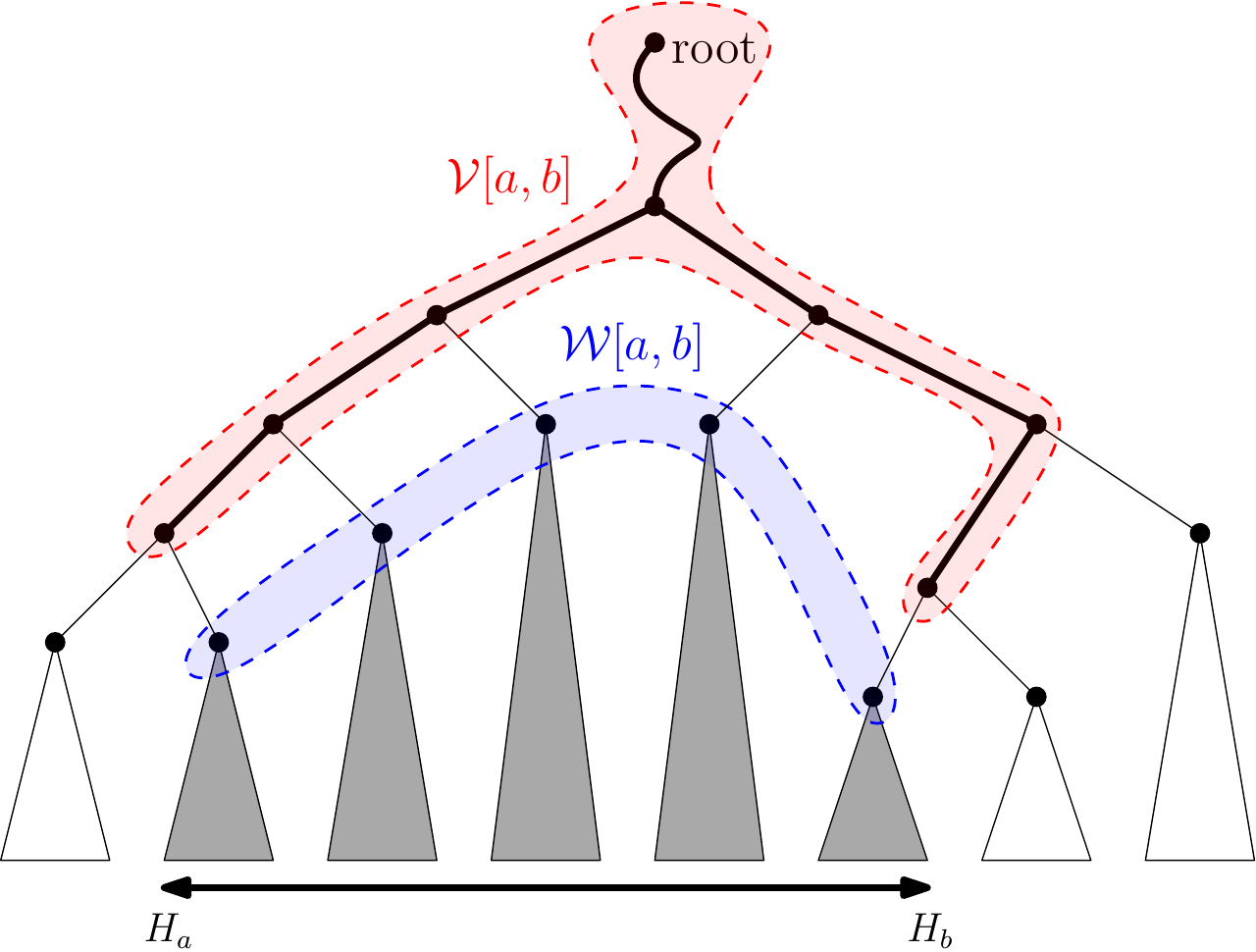}
    \caption{\small A balanced binary search tree $\segtree$. A range $[a,b]$ can be represented by
    the $O(\log m)$ nodes in $\wset[a,b]$. For each node $v\in\vset[a,b]$, $[\alpha(v),\beta(v)]\cap[a,b]\neq\emptyset$
and $[\alpha(v),\beta(v)]\not\subseteq[a,b]$.}
    \label{fig:segtree}
  \end{center}
\end{figure}

For a node $w\in\segtree$,
we define two values, $\ml(w)$ and $U(w)$ as
$\ml(w) = \min_{i\in[\alpha(w),\beta(w)]} M(i)$ and
$U(w) = \max_{i\in[\alpha(w),\beta(w)]} M(i)$.
We need $\ml(w)$ in computing a minimum-link shortest path,
while $U(w)$ is used for updating $\segtree$.
We initialize both $\ml(w)$ and $U(w)$ to $\infty$
for every node $w$ in $\segtree$.
We update these values stored at some nodes of $\segtree$
at an event during the plane sweep. 
At \origi and \termi events,
we update $\ml(w)$ and $U(w)$ for the leaf nodes of subtrees rooted at $w\in \wset[a,b]$ for a range $[a,b]$,
and also update for all $w$ in $\segtree$ using values of leaf nodes in bottom-up manner.
We process other types of events in the following way:
we find $\ml^* = \min_{w\in \wset[a_1,b_1]}\ml(w)$ for a range $[a_1,b_1]$,
and then update $\ml(u)$ and $U(u)$ for $u\in \wset[a_2,b_2]$ for another range $[a_2,b_2]$
disjoint from $[a_1,b_1]$. There are three cases.  
(1) If $u$ becomes inactive at the event, we set both $\ml(u)$ and $U(u)$ to $\infty$.
(2) If $u$ becomes active at the event, 
we set both $\ml(u)$ and $U(u)$ to $\ml^*+2$.
Observe that all baselines corresponding to $u$ become active at the event
since $u$ is in $\wset[a_2,b_2]$.
(3) If there is no status change in $u$, 
we set $U(u) := \min\{U(u), \ml^* +2\}$ and $\ml(u) := \min\{\ml(u),\ml^* +2\}$.
Once $\ml(u)$ and $U(u)$ are updated,
we also update $\ml(v)$ and $U(v)$ for $v\in \vset[a_2,b_2]$
in bottom-up manner. 

Observe that we update neither $\ml$ nor $U$ values
of the children of $u$ during the update of $\ml(u)$ and $U(u)$ for $u\in \wset[a_2,b_2]$.
Some nodes may have their $\ml$ and $U$ values outdated when they are used
for finding $\ml^*$ and updating $\ml$ an $U$ values of other nodes.
To resolve this problem, we update $\ml(v')$ and $U(v')$ 
for the children $v'$ of each node $v\in \vset[a,b]$
when we find $\wset[a,b]$ for every range $[a,b]$.
Note that to compute $\wset[a,b]$, we must compute $\vset[a,b]$.
By the definition of $\ml(v)$ and $U(v)$, we have
$\ml(v) = \min\{\ml(\lchild(v)),\ml(\rchild(v))\}$
and $U(v) = \max\{U(\lchild(v)),U(\rchild(v))\}$,
and $\ml(v) \leq U(v)$.

We update $\ml(v')$ and $U(v')$ for two children $v'\in\{\lchild(v),\rchild(v)\}$ of $v$.
If $\ml(v) = U(v)$,
by the definition of $\ml(v)$ and $U(v)$, we have $\ml(v') = \ml(v)$ and $U(v') = U(v)$ for all $v'$.
Therefore, we set $\ml(v'):=\ml(v)$ and $U(v'):=U(v)$ for all $v'$.

If $\ml(v) < U(v)$,
there are four subcases:
(1) $\ml(v) \neq \min_{v'}\ml(v')$
(2) $\ml(v)=\min_{v'}\ml(v')$ but $U(v) > \max_{v'}U(v')$,
(3) $\ml(v)=\min_{v'}\ml(v')$ but $U(v) < \max_{v'}U(v')$, and
(4) $\ml(v)=\min_{v'}\ml(v')$ and $U(v) = \max_{v'}U(v')$.
For the cases (1) and (2), we set $\ml(v'):=\ml(v)$ and $U(v'):=U(v)$ for all $v'$ because they are outdated.
For the case (3), for $v'$ satisfying $U(v) < U(v')$, we set $U(v'):=\min\{U(v'), U(v)\}$ and update $\ml(v')$ compared with $U(v')$.
For the case (4), we already use $\ml(v')$ and $U(v')$ to update $\ml(v)$ and $U(v)$ in bottom-up manner,
so they are not outdated and we do not change any values.

Recall that the number of nodes in $\wset[a,b]$ and $\vset[a,b]$ for a range $[a,b]$
is $O(\log m)$, and we can find them in $O(\log m)$ time 
since $\segtree$ is a balanced binary search tree with height $O(\log m)$.
See Chapter 10 in~\cite{de2008}.
For each node $w\in \wset[a,b]\cup \vset[a,b]$, only a constant number of nodes 
are affected by an update above,
and $\ml(u)$ or $U(u)$ for such node $u$
can be computed in constant time.
Thus, each query in $\segtree$ takes $O(\log m)$ time, and we can find one canonical segment
for each event in the same time because we have $\ml^*$.
By using this data structure, we can reduce the time complexity from $O(m)$ to $O(\log m)$ per event.
Since $\segtree$ uses $O(m)$ space, the total space complexity remains to be $O(n)$.
Thus, we have the following lemma.

\begin{lemma}\label{lem:xymonotone}
For a point $t$ in $\xydom$,
we can compute a minimum-link shortest path from $S$ to $t$
in $O(n\log n)$ time using $O(n)$ space.
\end{lemma}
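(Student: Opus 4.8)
The plan is to assemble the pieces developed in Sections~\ref{sec:xymonotone}--\ref{sec:reduce-time} into a single algorithm and verify its correctness and resource bounds. First I would invoke Lemma~\ref{lem:monotone} to note that, since $t\in\xydom$, every shortest path from $S$ to $t$ is $xy$-monotone and connects the upper endpoint $s$ of $S$ to $t$, so that every minimum-link shortest path lies inside the region $\xydom(s,t)$ bounded by $\piur(s,c)$, $\pild(t,c)$, $\piru(s,c')$, and $\pidl(t,c')$. Then I would appeal to Lemma~\ref{lem:base-align} to restrict attention to paths aligned to the baselines of $\xydom(s,t)$, and to Lemma~\ref{lem:canonical} to further restrict to paths whose horizontal segments are canonical segments; these two lemmas guarantee that the plane-sweep quantities $M(i)$ and the final value $\ml(s,t)=\min\{M(m),\min_{k\in[\alpha(z),\beta(z)-1]}M(k)+1\}$ actually realize the optimum, and that a reverse traversal from $t$ over the canonical segments recovers an optimal path.

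Next I would argue correctness of the plane sweep itself: processing the $z$ vertical segments $V_1,\dots,V_z$ of $\xybd(s,t)\cup\xybd(t,s)$ in $x$-order, the update rules for the six event types (\origi, \termi, \attac, \detac, \spli, \merg) correctly maintain, for each active baseline $H_i$, the value $M(i)$ equal to the minimum number of links over all shortest paths from $s$ to $c_i=H_i\cap L$ whose last segment is horizontal. The key invariant is that a new horizontal segment at an event contributes exactly $+2$ links (one turn onto a vertical segment of the boundary, one turn back), which is precisely what Equation~\ref{eqn:merge} and the analogous formulas encode; the maximality of ranges in $\rangebbst$ is what makes the ``there always exists a range $[a',\alpha(j)]$'' claims hold, so the bookkeeping is consistent. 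I would also confirm that $\xydom(s,t)$ can be built via $O(1)$ ray-shooting queries into the structure of Giora and Kaplan~\cite{giora2009}, for $O(n\log n)$ preprocessing and $O((o+h)\log n)$ construction, with $o+h=O(n)$.

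For the running time I would combine the analysis of Section~\ref{sec:reduce-time}: each of the $z=O(n)$ events is handled by a constant number of $\wset[\cdot,\cdot]$ and $\vset[\cdot,\cdot]$ queries on $\segtree$, each touching $O(\log m)$ nodes, with constant work per node (including the lazy push-down of $\ml$ and $U$ values to the children of nodes in $\vset$, split into the cases $\ml(v)=U(v)$ and the four subcases of $\ml(v)<U(v)$). Since $m=O(n)$, this gives $O(\log n)$ per event and $O(n\log n)$ total, dominated additively by the $O(n\log n)$ preprocessing. For space, $\rangebbst$, $\segtree$, the $M(i)$'s, and the $O(z)$ canonical segments all fit in $O(o+h+m+z)=O(n)$. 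Reporting the path is a single reverse walk of length $O(\ml(s,t))=O(n)$ over canonical segments, within the same bounds.

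The main obstacle I anticipate is the correctness of the lazy-update scheme on $\segtree$: one must be certain that whenever a node's $\ml$ or $U$ value is read (to compute $\ml^*$ for some event, or to recompute an ancestor's value bottom-up), it is not stale. This is exactly the subtlety flagged in the discussion of pushing $\ml(v')$ and $U(v')$ down to the children of each $v\in\vset[a,b]$ before using $\wset[a,b]$, and the correctness hinges on the case analysis there ($\ml(v)=U(v)$ versus $\ml(v)<U(v)$ with its four subcases) genuinely covering every way a child's stored value can disagree with the child's true min/max. Verifying that no other access pattern can see an outdated value — in particular that the $+2$ increments never need to be propagated further down than one level before the affected subtree is next queried — is the crux; once that is nailed down, the resource bounds follow routinely and Lemma~\ref{lem:xymonotone} is established.
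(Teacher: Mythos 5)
Your proposal is correct and follows essentially the same route as the paper: restrict to $xy$-monotone paths in $\xydom(s,t)$ via Lemma~\ref{lem:monotone}, align to baselines (Lemma~\ref{lem:base-align}), run the six-event plane sweep maintaining $M(i)$ and canonical segments (Lemma~\ref{lem:canonical}), and accelerate each event to $O(\log m)$ using $\segtree$ with the $\wset$/$\vset$ lazy-update scheme, giving $O(n\log n)$ time and $O(n)$ space. The crux you flag (staleness of $\ml$ and $U$ values) is handled in the paper exactly by the case analysis you cite, so no new idea is missing.
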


\section{\texorpdfstring{$t$}{t} lies in \texorpdfstring{$\xdom$}{Dx} or \texorpdfstring{$\ydom$}{Dy}}
\label{sec:xmonotone}
In this section, we assume that $t$ lies in $\xdom$. Then every
shortest path from $S$ to $t$ is $x$-monotone by Lemma~\ref{lem:monotone}.
In case that $t$ lies in $\ydom$, every
shortest path is $y$-monotone and we can handle the case in a similar way.
Unlike the case of $t\in\xydom$, there can be a shortest path from $S$ to $t$
not contained in $\xdom$. 
See Figure~\ref{fig:xydomtree}(a).
However, we can compute a minimum-link shortest path from $S$ to $t$ 
using the algorithm in Section~\ref{sec:xymonotone} as a subprocedure.

\begin{figure}[t]
  \begin{center}
    \includegraphics[width=\textwidth]{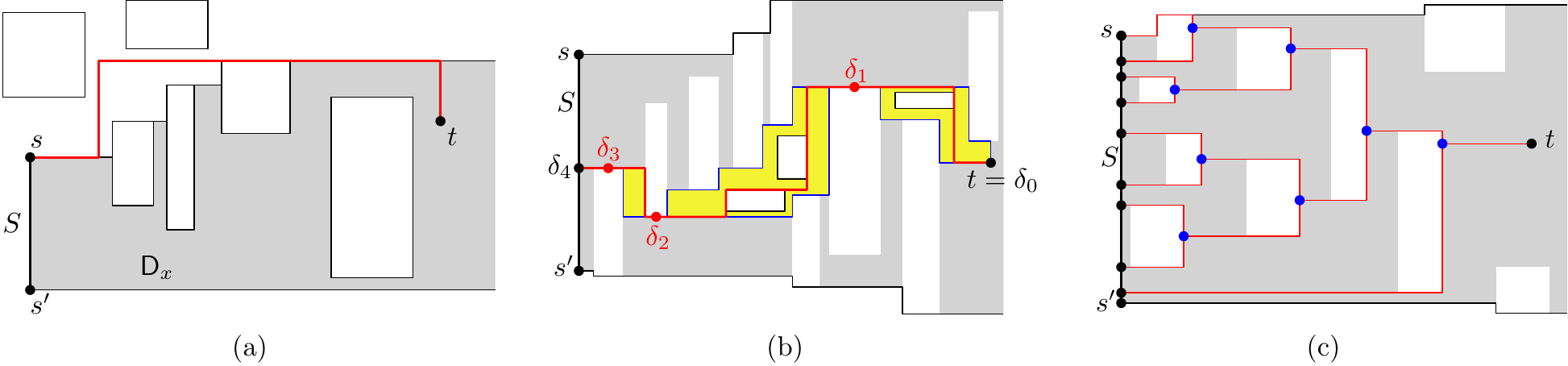}
    \caption{\small
(a) A minimum-link shortest path from $S$ to $t$.
It is not contained in $\xdom$.
(b) Every shortest path from $S$ to $t$ passes through
$\divp_i$ for $i=0,\ldots, 4$ and is contained in $\bigcup_{i=0,\ldots, 3}\xydom(\divp_i,\divp_{i+1})$.
(c) Shortest paths from $S$ to $t$ such that the closest pair of $S$ and $t$ is not unique.
    }
    \label{fig:xydomtree}
  \end{center}
\end{figure}

Let $\pi$ denote a minimum-link shortest path from $S$ to $t$ 
aligned to the baselines,
and let $s^*$ and $t$ be the two endpoints of $\pi$ with $s^*\in S$. 
By definition, $(s^*,t)$ is a closest pair of $S$ and $t$.
Since $\pi$ is $x$-monotone, it is a concatenation of $xy$-monotone paths 
such that every two consecutive $xy$-monotone subpaths of $\pi$ change their directions
between monotone increasing and monotone decreasing on
a horizontal segment, which we call a \emph{winder}, of $\pi$. 

\begin{lemma}\label{lem:winder}
Every winder of a shortest path from $S$ to
$t$ contains one entire horizontal side of a rectangle in $\rectset$
incident to $\xdom$.
\end{lemma}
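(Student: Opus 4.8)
The plan is to prove Lemma~\ref{lem:winder} by a local exchange argument. Let $\pi$ be a shortest path from $S$ to $t$, and let $W$ be a winder of $\pi$, i.e. a maximal horizontal segment on which the two incident $xy$-monotone subpaths of $\pi$ reverse their vertical direction. Since $\pi$ is $x$-monotone and $W$ is a local extreme in the $x$-direction, one of the two subpaths incident to $W$ approaches $W$ from the left and the other leaves $W$ to the left (or symmetrically both to the right); in either case both vertical segments incident to $W$ lie on the same side of the horizontal line through $W$. This is exactly the configuration that Lemma~\ref{lem:base-align} already exploits: if nothing blocked us, we could slide $W$ towards that side and strictly shorten $\pi$, contradicting optimality. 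So the obstruction must be an obstacle, and the claim is that this obstacle actually supplies a whole horizontal side flush against $W$.

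First I would fix orientation, say $W$ is a ``downward'' winder, so the two vertical segments incident to $W$ both go \emph{upward} from $W$. Because $\pi$ is a shortest path, $W$ cannot be pushed downward at all: for every $\varepsilon>0$ the open $\varepsilon$-neighborhood strip immediately below $W$ must contain a point of some obstacle rectangle $R\in\rectset$ (otherwise we slide $W$ down, shortening the two flanking vertical segments, contradiction). Since there are finitely many rectangles and each is open, this forces the closed segment $W$ to lie on the boundary of some $R$, specifically on the top side of $R$, with $W$ contained in that top side. It remains to upgrade ``$W$ is contained in the top side of $R$'' to ``$W$ contains the entire top side of $R$.'' Suppose not; then at least one endpoint $p$ of $W$ lies strictly in the interior of the top side of $R$. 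The incident vertical segment of $\pi$ at $p$ goes upward; consider extending $W$ slightly past $p$ along the top of $R$. Because $p$ is interior to the top side, a small horizontal extension stays in $\dom$, and then bending upward a little later still gives an $x$-monotone path of the same length but with the vertical segment moved — and one checks this either strictly shortens $\pi$ or lets us repeat, and in a shortest path the only way the extension can fail to help is if it is blocked, i.e. the top side of $R$ ends exactly at $p$. Doing this at both endpoints of $W$ shows $W$ must span from one top corner of $R$ to the other, i.e. $W$ contains the whole top side of $R$.

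Finally I would argue $R$ is incident to $\xdom$. By Lemma~\ref{lem:monotone}, $\pi$ lies in the union $\bigcup_i \xydom(\divp_i,\divp_{i+1})$ of the $xy$-monotone subdomains determined by the winders (cf.\ Figure~\ref{fig:xydomtree}(b)); the winder $W$, hence its supporting side of $R$, lies on the common boundary of $\xdom$ with these subdomains, and in particular $R$ is one of the rectangles bounding that portion of $\xdom$, so $R$ is incident to $\xdom$. Assembling the pieces: every winder $W$ of a shortest path contains an entire horizontal side (the top side, for a downward winder; the bottom side, for an upward winder) of a rectangle $R\in\rectset$ incident to $\xdom$, which is the statement of the lemma.

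The main obstacle I expect is making the ``upgrade'' step fully rigorous — going from ``$W$ lies on the top side of $R$'' to ``$W$ contains the entire top side of $R$'' — because the naive horizontal-extension argument has to be set up carefully so that extending $W$ and re-bending does not run into a \emph{different} obstacle or violate $x$-monotonicity, and one must handle the degenerate cases where an endpoint of $W$ coincides with a corner of some other rectangle (ruled out, in part, by the general-position assumption on $\rectset$). The rest of the argument is a routine slide-and-shorten exchange of the type already used in the proof of Lemma~\ref{lem:base-align}.
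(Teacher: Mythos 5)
Your high-level plan for the first half (slide-and-shorten: the only thing that can prevent a winder from being pushed toward its flanking vertical segments is a rectangle side flush against it) is the paper's argument, but your orientation is inverted in a way that breaks the proof. If the two vertical segments incident to $W$ both go \emph{upward} from $W$, the shortening move is to slide $W$ \emph{upward} (that is what shortens the flanking segments), so the blocking obstacle must sit in the strip immediately \emph{above} $W$, and $W$ must contain the \emph{bottom} side of that rectangle $R$ --- not the top side, and not an obstacle below $W$, which would not obstruct the upward slide at all. (Your preliminary characterization is also off: a winder is a local extreme in the $y$-direction, not the $x$-direction; a subpath that ``approaches from the left and leaves to the left'' would violate $x$-monotonicity.) The sign error is not cosmetic: it is precisely what creates your self-declared ``main obstacle,'' the upgrade from ``$W$ lies on a side of $R$'' to ``$W$ contains the entire side.'' In the correct configuration the upgrade is immediate: $R$ is wedged between the two vertical segments emanating from the endpoints of $W$ on the same side as $R$, so $R$'s horizontal extent lies inside $W$'s and its whole horizontal side is contained in $W$; no extension-and-rebend argument is needed, and the one you sketch is not carried through.

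The second half --- that $R$ is incident to $\xdom$ --- is circular as written. The decomposition of $\pi$ into the subdomains $\xydom(\divp_i,\divp_{i+1})$ is defined via the dividers, i.e., the midpoints of the rectangle sides supplied by this very lemma, and Lemma~\ref{lem:monotone} gives only monotonicity, not that decomposition. The paper instead argues directly: a shortest path may leave $\xdom$, but each maximal portion of $\pi$ outside $\xdom$ must be a shortest path between its two endpoints on the boundary of $\xdom$; if a winder were supported by a side of a rectangle not incident to $\xdom$ (which, by general position, cannot simultaneously contain a side of a rectangle incident to $\xdom$), that portion would fail to be shortest, a contradiction. You need an argument of this kind rather than an appeal to the divider structure built on top of the lemma you are proving.
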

\begin{proof}
Let $\pi$ be a shortest path from $S$ to $t$.
Assume that $\pi$ has a winder $H$,
which does not contain a horizontal side of a rectangle.
Since $H$ is a winder, the two consecutive
$xy$-monotone subpaths of $\pi$ sharing $H$ lie
in one side of the line containing $H$.
Without loss of generality, assume that both subpaths lie above the line
containing $H$. Then we can drag $H$ upward while shortening 
the vertical segments of $\pi$ incident to $H$, which results in a shorter 
path, a contradiction.

Now assume to the contrary that there is a winder $H$ of $\pi$ containing 
a horizontal side of a rectangle not incident to $\xdom$. By the general position,
$H$ does not contain a horizontal side of a rectangle incident to $\xdom$.
Then the subpath of $\pi\setminus\xdom$ containing $H$ is
not a shortest path connecting its two endpoints incident to $\xdom$. Thus, $\pi$ is not a shortest path
from $S$ to $t$, a contradiction.
\end{proof}

\subsection{Computing a minimum-link shortest path}
Consider the horizontal sides of rectangles contained in the winders of 
a minimum-link shortest path $\pi$. Let $g$ be the number of winders
of $\pi$, and let $\divp_i$
be the midpoint of the horizontal side contained in the $i$th winder 
in order along $\pi$ from $t$ to $s^*$.
We call such a midpoint a \emph{divider} of $\pi$.
For convenience, we let $t=\divp_0$ and $s^*=\divp_{g+1}$.
Then the subpath from $\divp_i$ to $\divp_{i+1}$ for $0\leq i\leq g$ of $\pi$
is $xy$-monotone by the definition of the winders of $\pi$.

By Lemma~\ref{lem:winder}, every winder of a minimum-link shortest path 
from $S$ to $t$ contains a horizontal side of a rectangle incident to $\xdom$.
Therefore, in the following 
we compute the dividers of a minimum-link shortest path
among the midpoints of the top and bottom sides 
of each rectangle incident to $\xdom$, and compute 
the $xy$-monotone paths connecting the dividers, in order,
which together form a minimum-link shortest path.

We compute $d(S,t)$ by a plane sweep algorithm and find
the dividers $\divp_1,\ldots,\divp_g$ of a minimum-link shortest path $\pi$ as follows.
For a rectangle $R\in\rectset$ incident to $\xdom$,
let $\med(R)$ and $\med'(R)$ denote
the midpoints of the top and bottom sides
of $R$, respectively. Then $\med(R)$ and $\med'(R)$
are candidates of the dividers of $\pi$. 
We consider each midpoint as an event during the sweep.

While sweeping $\xdom$ with a vertical line $L$ moving rightwards,
$L$ encounters $\med(R)$ and $\med'(R)$ of a rectangle $R$
at the same time.
Consider two horizontal rays, one from $\med(R)$ and one from $\med'(R)$, 
going leftwards.
We show how to handle the ray $\gamma$ from $\med(R)$.
The ray from $\med'(R)$ can be handled similarly.
Let $p_\gamma$ be the point of the vertical segment
on the boundary of $\xdom$ which $\gamma$ hits first.
If $p_\gamma\in S$, the shortest path from $S$ to $\med(R)$ 
is simply $p_\gamma\med(R)$.
If $p_\gamma$ lies in $\piru(s)$ (or $\pird(s')$),
every shortest path from $S$ to $\med(R)$ is $xy$-monotone.
For these two cases,  
we store at $\med(R)$ the distance $d(S,\med(R))$ and 
the closest point (one of $p_\gamma$, $s$, or $s'$) of $S$ from $\med(R)$.
Consider the case that $p_\gamma$ is in the right side of a rectangle $R'\in\rectset$ incident to 
$\xdom$.
We already have $d(S,\med(R'))$ stored at $\med(R')$ and $d(S,\med'(R'))$ stored at $\med'(R')$ during the plane sweep.
Observe that every shortest path from $\med(R)$ to $\med(R')$ or to $\med'(R')$ is 
$xy$-monotone, and the closest point $s^o$ of $S$ from $\med(R)$ is the closest point of $S$
from $\med(R')$ or from $\med'(R')$. Since
$d(s^o,\med(R))=\min\{d(S,\med(R'))+d(\med(R'), \med(R)), d(S,\med'(R'))+d(\med'(R'), \med(R))\}$
by definition, we can compute $s^o$ and $d(s^o,\med(R))$ in constant time.

When $L$ encounters $t$, we again consider a horizontal ray $\gamma$ from $t$ going leftwards
and the point $p_\gamma$ of the vertical segment on the boundary of $\xdom$ which $\gamma$ hits first.
If $p_\gamma\in S$, the minimum-link shortest path is simply $p_\gamma t$ and we are done.
For $p_\gamma$ lying on a side of a rectangle $R\in\rectset$ incident to $\xdom$,
if $d(t,\med(R)) + d(S,\med(R)) < d(t,\med'(R)) + d(S,\med'(R))$ (or the other way around without equality),
we conclude there is no shortest path from $S$ to $t$ passing through $\med'(R)$ (or through $\med(R)$).
Assume that 
$\pi$ passes through $\med(R)$. By the general position assumption, $y(\med(R)) > y(t)$.
Let $R'$ be the rectangle incident to $\xdom$ 
that the horizontal ray from $\med(R)$ going leftwards hits first.
If $d(S,\med(R')) + d(\med(R'),\med(R)) > d(S,\med'(R')) + d(\med'(R'), \med(R))$
or there is no such rectangle $R'$, $\med(R)$ is a divider of $\pi$.
Moreover, $\med(R)$ is the first divider $\divp_1$ of $\pi$ from $t$, 
and thus every shortest path from $\divp_1=\med(R)$ to $t$ is $xy$-monotone.
Therefore, we construct $\xydom(\med(R),t)$ and apply the algorithm in Section~\ref{sec:xymonotone}.
Then we apply this procedure from $\med(R)$, recursively,
and compute every $xy$-monotone subpath of $\pi$
using canonical segments by Lemma~\ref{lem:canonical}, 
and glue them into one to form $\pi$.
See Figure~\ref{fig:xydomtree}(b).
Finally we obtain $g+1$ $xy$-monotone paths
with dividers $\divp_0=t,\divp_1,\ldots,\divp_{g+1}=s^*\in S$.

During the plane sweep, we find in $O(\log n)$ time the first rectangle hit by the horizontal ray $\gamma$ 
emanating from a midpoint of a rectangle going leftwards
using the data structure supporting ray shooting queries by Giora and Kaplan~\cite{giora2009}.
Thus, it takes $O(n \log n)$ time for ray shootings from midpoints in total.
It takes $O(K_i)$ time to find a divider $\divp_i$,
where $K_i$ is the number of the recursion depth of the algorithm to compute 
$\divp_i$ from $\divp_{i-1}$.
As shown in Section~\ref{sec:xymonotone},
computing an $xy$-monotone minimum-link shortest path from $\divp_i$ to $\divp_{i-1}$ 
takes $O(D_i \log D_i)$ time with $O(D_i)$ space after $O(n \log n)$-time preprocessing,
where $D_i$ is 
the number of the baselines defined by the rectangles
incident to $\xydom(\divp_i,\divp_{i-1})$.
Observe that 
$\Sigma_{1\leq i\leq g+1} K_i = O(n)$, and $\Sigma_{1\leq i\leq g+1} D_i = O(n)$
because 
the regions $\xydom(\divp_i,\divp_{i-1})$'s are disjoint in their interiors.
Thus, the total time complexity is $O(n \log n)$ and the total space complexity is $O(n)$.

\begin{lemma}\label{lem:xmonotone}
For a point $t$ in $\xdom\cup\ydom$,
we can compute a minimum-link shortest path from $S$ to $t$
in $O(n\log n)$ time using $O(n)$ space.
\end{lemma}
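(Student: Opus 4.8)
The plan is to reduce the case $t \in \xdom$ to a bounded number of invocations of the $xy$-monotone subroutine from Section~\ref{sec:xymonotone} (Lemma~\ref{lem:xymonotone}), controlling the total cost by a disjointness argument. First I would recall from Lemma~\ref{lem:monotone} that every shortest path from $S$ to $t$ is $x$-monotone, hence decomposes into maximal $xy$-monotone pieces separated by \emph{winders}; by Lemma~\ref{lem:winder}, each winder contains an entire horizontal side of a rectangle of $\rectset$ incident to $\xdom$. So the dividers $\divp_1,\ldots,\divp_g$ of a minimum-link shortest path $\pi$ may be sought among the $O(n)$ midpoints $\med(R),\med'(R)$ of top/bottom sides of rectangles incident to $\xdom$. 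The correctness claim to establish is that, after a left-to-right plane sweep storing at each midpoint $\med(R)$ the pair $\bigl(d(S,\med(R)),\, \text{closest point of }S\bigr)$, and then a recursive peeling from $t$ leftwards along leftward rays, the concatenation of the computed $xy$-monotone minimum-link pieces $\xydom(\divp_i,\divp_{i-1})$ is indeed a global minimum-link shortest path.

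The key steps, in order: (i) argue that the leftward ray $\gamma$ from a midpoint $\med(R)$ hits either $S$, a piece of $\piru(s)$ or $\pird(s')$ (in which case the shortest path from $S$ to $\med(R)$ is $xy$-monotone and its length/closest point are computed directly), or the right side of another rectangle $R'$ incident to $\xdom$; in the last case observe that any shortest path $S\!\to\!\med(R)$ factors through $\med(R')$ or $\med'(R')$ with an $xy$-monotone tail, so $d(S,\med(R))$ and the closest point of $S$ propagate in $O(1)$ time from data already stored — this is the dynamic-programming invariant of the sweep. (ii) When $L$ reaches $t$, determine via the leftward ray from $t$ whether $\pi$ passes through $\med(R)$ or $\med'(R)$ by comparing $d(t,\cdot)+d(S,\cdot)$, then check whether that midpoint is a genuine divider $\divp_1$ (no strictly-better rectangle further left); if so, build $\xydom(\divp_1,t)$ and invoke Lemma~\ref{lem:xymonotone} to get the $xy$-monotone minimum-link piece with its canonical segments (Lemma~\ref{lem:canonical}), then recurse from $\divp_1$. (iii) Glue the $g+1$ pieces into $\pi$, noting that the shared dividers are single points lying on rectangle sides so the concatenation is a legal rectilinear path and, by the divider characterization, has the minimum number of links among all shortest paths. (iv) For $t \in \ydom$, apply the same argument with the roles of $x$ and $y$ swapped.

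For the running time I would combine three ingredients: ray-shooting queries from the $O(n)$ midpoints and from $t$ each take $O(\log n)$ using the structure of Giora and Kaplan~\cite{giora2009} after $O(n\log n)$ preprocessing, so $O(n\log n)$ total; finding $\divp_i$ from $\divp_{i-1}$ costs $O(K_i)$ where $K_i$ is the recursion depth, and $\sum_i K_i = O(n)$ since each step consumes a distinct rectangle; and computing the $xy$-monotone piece in $\xydom(\divp_i,\divp_{i-1})$ costs $O(D_i\log D_i)$ with $O(D_i)$ space by Lemma~\ref{lem:xymonotone}, where $D_i$ counts the baselines from rectangles incident to that region. The crux — and the step I expect to require the most care — is the disjointness claim $\sum_i D_i = O(n)$: because consecutive dividers lie on winders and the path is $x$-monotone, the regions $\xydom(\divp_i,\divp_{i-1})$ have pairwise disjoint interiors, so a rectangle incident to $\xdom$ contributes its $O(1)$ baselines to only $O(1)$ of these regions; establishing this cleanly (and likewise $\sum_i K_i = O(n)$, so that no rectangle is re-examined across recursive calls) is what makes the whole algorithm run in $O(n\log n)$ time and $O(n)$ space.
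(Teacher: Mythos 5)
Your proposal is correct and follows essentially the same route as the paper: the same plane sweep over the $O(n)$ rectangle-side midpoints with $O(1)$-time distance propagation via leftward ray shooting, the same recursive peeling of dividers from $t$ with an invocation of the $xy$-monotone subroutine on each $\xydom(\divp_i,\divp_{i-1})$, and the same charging argument $\sum_i K_i = O(n)$ and $\sum_i D_i = O(n)$ based on the interior-disjointness of the subregions. The only point you leave untouched --- which the paper likewise defers to a separate subsection after stating the lemma --- is the tie case $d(t,\med(R))+d(S,\med(R)) = d(t,\med'(R))+d(S,\med'(R))$, where combinatorially distinct shortest paths exist and all divider sequences must be maintained (via a DAG of $xy$-monotone subregions) to guarantee that the minimum-link one is found without increasing the time and space bounds.
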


\subsection{Combinatorially distinct shortest paths}\label{sec:com-distinct}
There can be two shortest paths from $S$ to $t$, 
one passing through $\med(R)$ and one passing
through $\med'(R)$ for a rectangle $R$.
In this case, we have $d(t,\med(R)) + d(S,\med(R)) = d(t,\med'(R)) + d(S,\med'(R))$,
which can be found in handling the midpoints of $R$ during the plane sweep.
Observe that this equality may occur multiple times in finding dividers of 
a minimum-link shortest path. Thus we need to devise an efficient way of
maintaining all sequences of dividers, each of which may define 
a shortest path.
See Figure~\ref{fig:xydomtree}(c).
In this section, we show how to maintain these sequences of dividers and
how to find a minimum-link shortest path 
without increasing the time and space complexities in Lemma~\ref{lem:xmonotone}.

We say two shortest paths, $\pi_1$ and $\pi_2$, from $S$ to $t$ are \emph{combinatorially distinct} 
if the sequence of dividers for $\pi_1$ and the sequence of dividers for $\pi_2$ are different.
Let $\Pi$ be the set of all combinatorially distinct sequences of dividers from $t$ to $S$ for shortest paths,
since we let $t = \divp_0$ for convenience.
Assume that a divider $\divp$ appearing in a sequence of $\Pi$
is the midpoint of the bottom side of a rectangle.
Observe that $\pilu(\divp)$ passes through dividers
including $\divp$ that are consecutive in a sequence of $\Pi$.
We denote by $f(\divp)$ one with smallest $x$-coordinate among these dividers.
Observe that $f(\divp)$ is uniquely defined for $\divp$ with $x(f(\divp)) < x(\divp)$, and 
it is the midpoint of the top side of another rectangle.
We construct $\xydom(f(\divp),\divp)$ to compute the subpath of a minimum-link shortest path 
from $f(\divp)$ to $\divp$.

\begin{figure}[t]
  \begin{center}
    \includegraphics[width=\textwidth]{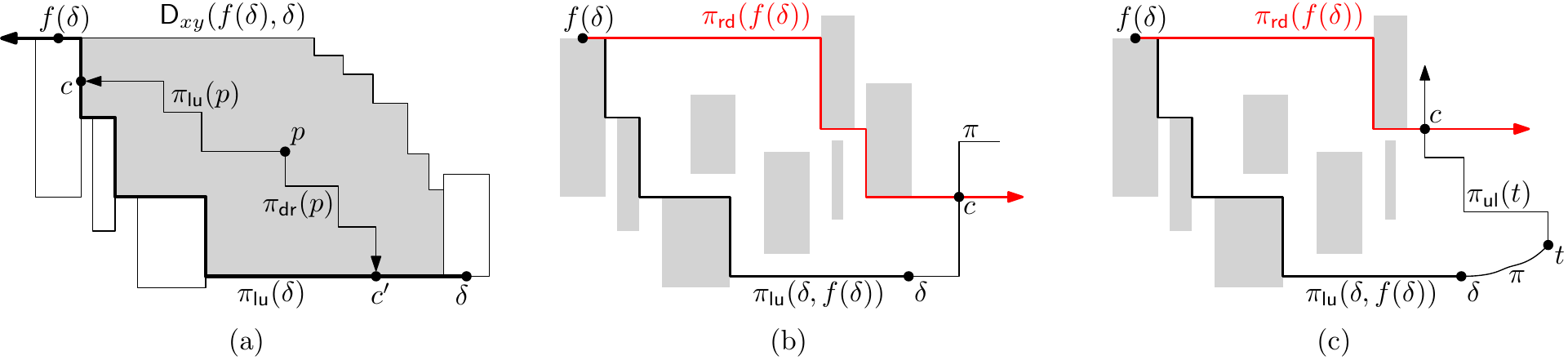}
    \caption{\small 
Proof of Lemma~\ref{lem:xyshortest} and~\ref{lem:outerchain}.
    }
    \label{fig:lemma89}
  \end{center}
\end{figure}

\begin{lemma}\label{lem:xyshortest}
For any point $p$ in $\xydom(f(\divp),\divp)$, there are an $xy$-monotone path
from $f(\divp)$ to $p$ and an $xy$-monotone path from $p$ to $\divp$, which
are shortest among paths connecting the points.
\end{lemma}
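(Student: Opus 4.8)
The plan is to argue that $\xydom(f(\divp),\divp)$ "behaves like" the region $\xydom(s,t)$ from Section~\ref{sec:xymonotone}, so that the monotonicity statements carry over. Recall $\divp$ is the midpoint of the bottom side of a rectangle and $f(\divp)$ is the midpoint of the top side of another rectangle, with $x(f(\divp)) < x(\divp)$, and by construction $f(\divp)$ and $\divp$ are consecutive-or-close dividers joined by a portion of $\pilu(\divp)$. First I would set $a:=f(\divp)$ and $b:=\divp$ and recall that, by the way $f(\divp)$ is chosen (the smallest-$x$ divider reached by $\pilu(\divp)$), the only part of the boundary of $\xydom$ between the $x$-coordinates of $a$ and $b$ that "winds" is handled by these two rectangle sides; in particular the subpath of $\pilu(\divp)$ from $a$ to $b$ is an $xy$-monotone (decreasing in $x$, increasing in $y$ going from $b$ to $a$, i.e. $xy$-monotone as an unoriented path) curve bounding $\xydom(a,b)$ from one side. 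I would then show the region $\xydom(a,b)$ is bounded by four $xy$-monotone chains meeting at $a$, $b$, and two "corner" points, exactly as $\xydom(s,t)$ was defined via $\piur(s,c)$, $\pild(t,c)$, $\piru(s,c')$, $\pidl(t,c')$; the point is that $a$ and $b$ play the roles that $s$ and $t$ played, with the two rectangle sides supplying the local structure that the endpoints of $S$ supplied there.

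Next I would invoke Lemma~\ref{lem:monotone} in the localized form: once $\xydom(a,b)$ is identified as (a translate/analog of) the $\xydom$-type region for the degenerate "segment" consisting of the point $a$, every shortest path inside it from $a$ to any $p$ is $xy$-monotone, and symmetrically every shortest path from any $p$ to $b$ is $xy$-monotone. Concretely: for $p \in \xydom(a,b)$, the paths $\piul(p)$ and $\piur(p)$ each hit the upper bounding chain (the $\piru$-type and $\piur$-type chains emanating from $a$), while $\pidl(p)$ and $\pidr(p)$ each hit the lower bounding chain emanating from $b$; feeding this into the Choi–Yap classification cited in the proof of Lemma~\ref{lem:monotone} forces any shortest path touching $p$ to be $xy$-monotone. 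The existence of an $xy$-monotone shortest path from $a$ to $p$ then follows because $a$ is a single point on the boundary and $\xydom(a,b)$ contains such a region; similarly for $p$ to $b$. I would also use the fact, established for the winder midpoints during the plane sweep, that $d(S,\divp)$ realized through $a=f(\divp)$ is achieved by a path that is $xy$-monotone from $a$ to $\divp$, so the global shortest path restricted to $\xydom(a,b)$ really does decompose through $p$ into the two monotone pieces whenever $p$ lies on it; but for an arbitrary $p$ in the region the claim is just the local monotonicity plus reachability.

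The main obstacle I anticipate is making precise that $\xydom(f(\divp),\divp)$ is genuinely a region of the same type as $\xydom(s,t)$ — i.e. carefully defining its four bounding chains and the two corner points and checking the chains do not cross — so that Lemma~\ref{lem:monotone} and the Choi–Yap classification apply verbatim. In particular I must verify that $\pilu(\divp)$ from $\divp$ back to $f(\divp)$ is exactly one of the four bounding chains (and does not re-enter the region), and that the "opposite" chain from $f(\divp)$ going down-and-right meets it only at the appropriate corner; this is where the precise choice of $f(\divp)$ as the consecutive divider of smallest $x$-coordinate is essential, and where Figure~\ref{fig:lemma89} is doing the real work. Once that structural claim is in hand, the two monotonicity assertions are immediate consequences, and I would close by noting that $xy$-monotone paths realizing the distances exist because both $f(\divp)$ and $\divp$ lie on the boundary of the region, so a shortest path to either can be pushed to the boundary chain it is "nearest" without increasing length, yielding the stated $xy$-monotone shortest paths.

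\begin{proof}[Proof sketch]
Write $a := f(\divp)$ and $b := \divp$. By construction $b$ is the midpoint of the bottom side of a rectangle, $a$ is the midpoint of the top side of another rectangle with $x(a) < x(b)$, and the subpath of $\pilu(b)$ from $b$ to $a$ is an $xy$-monotone chain lying on the boundary of $\xydom$. As in Section~\ref{sec:xymonotone}, let $C$ be the point of maximum $x$- and maximum $y$-coordinate in $\piur(a) \cap \pild(b)$ and $C'$ the point of maximum $x$- and maximum $y$-coordinate in $\piru(a) \cap \pidl(b)$; then $\xydom(a,b)$ is the region enclosed by $\piur(a,C)$, $\pild(b,C)$, $\piru(a,C')$, $\pidl(b,C')$, with these four chains pairwise noncrossing and meeting only at $a$, $b$, $C$, $C'$. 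The point $a$, being a single point on the boundary, plays the role that the upper endpoint $s$ of $S$ played in Section~\ref{sec:xymonotone}, and $b$ plays the role of $t$; the choice of $f(\divp)$ as the divider of smallest $x$-coordinate reached by $\pilu(b)$ guarantees that no winder of the global shortest path occurs strictly between $x(a)$ and $x(b)$, so this identification is consistent.

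Now fix $p \in \xydom(a,b)$. Both $\piul(p)$ and $\piur(p)$ meet the upper chains emanating from $a$, and both $\pidl(p)$ and $\pidr(p)$ meet the lower chains emanating from $b$. By the classification of Choi and Yap~\cite{choi1996} used in the proof of Lemma~\ref{lem:monotone}, every shortest path in $\xydom(a,b)$ through $p$ is therefore $xy$-monotone. In particular any shortest path from $a$ to $p$ is $xy$-monotone; since $a$ lies on the boundary of $\xydom(a,b)$, such a path exists (a shortest $a$--$p$ path can be taken inside $\xydom(a,b)$, and by the above it is $xy$-monotone). The same argument with $b$ in place of $a$ gives an $xy$-monotone shortest path from $p$ to $b$. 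This proves the lemma.
\end{proof}
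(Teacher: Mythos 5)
Your overall strategy --- identify $\xydom(f(\divp),\divp)$ with a Section~\ref{sec:xymonotone}-type region and then invoke the monotonicity classification --- is genuinely different from the paper's proof, and as written it has two concrete gaps. First, the structural identification is set up with the wrong chains: here $\divp$ lies \emph{below and to the right} of $f(\divp)$, so the region is not bounded by $\piur(a,C),\pild(b,C),\piru(a,C'),\pidl(b,C')$ as in Section~\ref{sec:xymonotone} (those intersections need not even be the relevant ones for this orientation); the correct bounding chains are $\pilu(\divp,f(\divp))$ and $\pird(f(\divp),\divp)$, which is the content of Lemma~\ref{lem:outerchain}. Second, and more seriously, the monotonicity step misapplies the Choi--Yap classification. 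The condition you verify --- that $\piul(p),\piur(p)$ hit the chain through $a$ and $\pidl(p),\pidr(p)$ hit the chain through $b$ --- is exactly the condition the paper uses in the proof of Lemma~\ref{lem:monotone} to conclude that $t\in\xdom^1$, which yields only $x$-monotonicity (``$x$- or $xy$-monotone''), not $xy$-monotonicity; moreover it mixes chains emanating from two different sources, which does not correspond to any single application of the classification. To get $xy$-monotonicity of every shortest $f(\divp)$--$p$ path by this route you would have to show that $p$ lies in the quadrant-type region $\xydom^4$ relative to the single point $f(\divp)$ (bounded by $\pidr(f(\divp))$ and $\pird(f(\divp))$), and symmetrically for $\divp$; you flag this containment as ``the main obstacle'' but never establish it, so the argument is incomplete at precisely the point where it needs to close.

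The paper avoids all of this with a two-line explicit construction that you should compare against: since $f(\divp)$ lies on $\pilu(\divp)$ and $\pilu(p)$ must leave the region through its boundary, one can pick $c\in\pilu(\divp)\cap\pilu(p)$ and concatenate the subpath of $\pilu(\divp)$ from $f(\divp)$ to $c$ with the subpath of $\pilu(p)$ from $c$ to $p$; both pieces are $xy$-monotone in compatible directions, so the concatenation is $xy$-monotone and hence of length exactly the $L_1$ distance between its endpoints, which makes it shortest. The path from $p$ to $\divp$ is built the same way using $\pidr(p)$ and $\pilu(\divp)$. No region classification is needed, and the construction also hands you the stronger fact (every shortest path between these point pairs is $xy$-monotone) for free.
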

\begin{proof}
For any point $p$ in $\xydom(f(\divp),\divp)$, let $c$ be a point in the intersection 
$\pilu(\divp)\cap\pilu(p)$. Then the path obtained by concatenating the subpath
of $\pilu(\divp)$ from $f(\divp)$ to $c$ and the subpath of $\pilu(p)$ from $c$ to $p$
is $xy$-monotone, and it is shortest among all paths from $f(\divp)$ to $p$.
Similarly, let $c'$ be a point in the intersection 
$\pilu(\divp)\cap\pidr(p)$. Then the path obtained by concatenating the subpath
of $\pidr(p)$ from $p$ to $c$ and the subpath of $\pilu(\divp)$ from $c'$ to $\divp$
is $xy$-monotone, and it is shortest among all paths from $p$ to $\divp$.
See Figure~\ref{fig:lemma89}(a).
\end{proof}

\begin{lemma}\label{lem:outerchain}
$\xybd(\divp,f(\divp))$ is $\pilu(\divp,f(\divp))$ and 
$\xybd(f(\divp),\divp)$ is $\pird(f(\divp),\divp)$.
\end{lemma}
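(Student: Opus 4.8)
The plan is to characterize the outer boundary of $\xydom(f(\divp),\divp)$ as the two $xy$-monotone chains $\pilu(\divp)$ and $\pird(f(\divp))$, mirroring the role played by $\piur(s)$, $\pild(t)$, etc.\ in the definition of $\xydom(s,t)$. Recall that $f(\divp)$ is, by construction, the midpoint of the top side of a rectangle, $\divp$ is the midpoint of a bottom side, $x(f(\divp)) < x(\divp)$, and $f(\divp)$ lies on $\pilu(\divp)$; so the chain $\pilu(\divp,f(\divp))$ is a well-defined $xy$-monotone rectilinear path joining the two points, and it is the ``upper-left'' boundary we expect. The symmetric lower-right boundary should be $\pird(f(\divp),\divp)$, since $\pird(f(\divp))$ emanates from $f(\divp)$ going rightwards-then-downwards and, by the analogous monotonicity reasoning, must reach $\divp$.

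First I would argue that $\pird(f(\divp))$ actually passes through $\divp$. Using Lemma~\ref{lem:xyshortest} applied to $p=\divp$ itself (or directly unwinding the definitions), the $xy$-monotone path from $f(\divp)$ to $\divp$ obtained by following $\pilu(\divp)$ forces $\divp$ and $f(\divp)$ to be ``diagonally opposite'' in the monotone staircase sense: $f(\divp)$ is up and to the left, $\divp$ is down and to the right, and no rectangle incident to $\xydom(f(\divp),\divp)$ separates them in a way that would divert the rightwards-downwards staircase. Hence the path that alternately goes rightwards (until hitting an obstacle) and downwards (to a bottom-left corner), starting at $f(\divp)$, terminates exactly when it reaches $\divp$; that is $\pird(f(\divp),\divp)$. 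The key geometric fact I would invoke is that $\pilu(\divp)$ and $\pird(f(\divp))$ do not cross except at their shared endpoints, which follows from the non-crossing property of the eight monotone paths stated in Section~\ref{sec:eightpath} together with the fact that $f(\divp)\in\pilu(\divp)$.

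Next I would show these two chains are precisely $\xybd(\divp,f(\divp))$ and $\xybd(f(\divp),\divp)$, i.e.\ the clockwise outer-boundary chains of $\xydom(f(\divp),\divp)$. By Lemma~\ref{lem:xyshortest}, every point of $\xydom(f(\divp),\divp)$ is joined to both $f(\divp)$ and $\divp$ by $xy$-monotone shortest subpaths lying in the region, so the region is exactly the set of points sandwiched between the upper-left staircase $\pilu(\divp,f(\divp))$ and the lower-right staircase $\pird(f(\divp),\divp)$; its outer boundary therefore consists of exactly these two chains, the first traversed from $\divp$ to $f(\divp)$ and the second from $f(\divp)$ to $\divp$ in clockwise order. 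I would also note that holes of $\xydom(f(\divp),\divp)$ are rectangles strictly interior to this sandwiched region and do not affect the outer boundary. See Figure~\ref{fig:lemma89}(b) for the picture matching this description.

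The main obstacle I anticipate is the verification that $\pird(f(\divp))$ reaches $\divp$ rather than some other point: one must rule out the staircase ``overshooting'' past $x(\divp)$ or ``undershooting'' below $y(\divp)$ because of an intervening rectangle. This is exactly where the defining property of $f(\divp)$ as the divider of smallest $x$-coordinate along $\pilu(\divp)$ among dividers of a shortest path is used — it guarantees that the relevant portion of the domain between $f(\divp)$ and $\divp$ contains no rectangle whose side would force an extra bend or a detour, so the two canonical staircases from the two endpoints trace out complementary halves of the same simple region. Once this is pinned down, the identification of the two boundary chains is immediate from Lemma~\ref{lem:xyshortest}.
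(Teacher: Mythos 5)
You correctly reduce the lemma to the single hard claim --- that $\pird(f(\divp))$ actually passes through $\divp$ --- and you correctly note that the first identity and the identification of the two chains with the outer boundary follow essentially from the definitions once that claim is in hand. But your justification of the hard claim has a genuine gap. The fact that $f(\divp)$ lies on $\pilu(\divp)$ only tells you that the two points are diagonally opposite with \emph{one} $xy$-monotone staircase between them; it does not force the \emph{complementary} staircase $\pird(f(\divp))$ to reach $\divp$. (Compare $s$ and $t$ in $\xydom$: $\piur(s)$ and $\pild(t)$ meet, yet $\piru(s)$ generally does not pass through $t$ --- which is exactly why $\xydom(s,t)$ has four boundary pieces in general. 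The lemma asserts that for consecutive dividers this degeneracy \emph{does} occur, so something beyond diagonal opposition must be used.) Your fallback --- that the definition of $f(\divp)$ guarantees ``no rectangle whose side would force an extra bend or a detour'' between the two points --- is not correct as stated: $\xydom(f(\divp),\divp)$ may contain holes, as you yourself note in the next paragraph, and a priori such a rectangle could divert $\pird(f(\divp))$ below $y(\divp)$ before it reaches $x(\divp)$. Ruling this out by appealing to the shape of the region is circular, since the region's lower-right boundary is the very thing being determined.

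The missing ingredient is a shortest-path exchange argument that uses the fact that $\divp$ is a \emph{divider}. The paper takes a shortest path $\pi$ from $S$ to $t$ containing $\pilu(\divp,f(\divp))$ as a subpath, and assumes $\pird(f(\divp))$ misses $\divp$. It then locates a point $c$ where $\pird(f(\divp))$ meets either the portion of $\pi$ from $\divp$ to $t$ or, failing that, $\piul(t)$. Because $\pi$ winds at the divider $\divp$, its portion from $f(\divp)$ to $c$ is not $xy$-monotone, whereas $\pird(f(\divp),c)$ is; replacing the former by the latter yields a strictly shorter path from $S$ to $t$, a contradiction. Without this step (or an equivalent one), your proof does not establish the second identity, which is the substance of the lemma.
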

\begin{proof}
By the definitions of $f(\divp)$ and $\xydom(f(\divp),\divp)$, 
$\xybd(\divp,f(\divp))$ is $\pilu(\divp,f(\divp))$.

Let $\pi$ be a shortest path from $S$ to $t$ that
contains $\pilu(\divp,f(\divp))$ as a subpath.
Assume that $\pird(f(\divp))$ does not pass through $\divp$.
If $\pird(f(\divp))\cap\pi\neq\emptyset$ except for $f(\divp)$,
let $c$ be the last point of $\pird(f(\divp))\cap\pi$ along $\pird(f(\divp))$
from $f(\divp)$. Since $\pird(f(\divp))$ does not pass through $\divp$,
$\pird(f(\divp))$ does not intersect the portion of $\pi$ from $f(\divp)$ to $\divp$. 
Thus, $c$ is on the portion of $\pi$ from $\divp$ to $t$. 
See Figure~\ref{fig:lemma89}(b).
If $\pird(f(\divp))\cap\pi=\emptyset$ except for $f(\divp)$,
let $c$ be the last point of $\pird(f(\divp))\cap\piul(t)$ along $\pird(f(\divp))$
from $f(\divp)$.
See Figure~\ref{fig:lemma89}(c).
Since $\pird(f(\divp),c)$ is $xy$-monotone, 
it is shorter than the portion of $\pi$ from $f(\divp)$ to $c$ in both cases.  
Thus, we can get a path from $S$ to $t$ shorter than $\pi$ by
replacing the portion of $\pi$ from $f(\divp)$ to $c$ with $\pird(f(\divp),c)$, a contradiction.
In other words, $\pird(f(\divp))$ pass through $\divp$,
so it implies that $\xybd(f(\divp),\divp)$ is $\pird(f(\divp),\divp)$.
\end{proof}

\begin{lemma}\label{lem:not_interior}
There is no divider on the inner boundary of $\xydom(f(\divp),\divp)$.
\end{lemma}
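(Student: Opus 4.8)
We want to show that no divider of any combinatorially distinct shortest path lies on the inner boundary of $\xydom(f(\divp),\divp)$ (i.e., on the sides of the holes $R_1,R_2,\ldots$ contained inside this region). Recall from the preceding lemmas that $\xybd(\divp,f(\divp)) = \pilu(\divp,f(\divp))$ and $\xybd(f(\divp),\divp) = \pird(f(\divp),\divp)$, so the outer boundary is made of two $xy$-monotone staircases, and every shortest path from $f(\divp)$ to $\divp$ is $xy$-monotone and contained in $\xydom(f(\divp),\divp)$ (Lemma~\ref{lem:xyshortest}). A divider, by definition, is the midpoint of a top or bottom side of a rectangle incident to $\xdom$, and it sits in the interior of a winder of some shortest path $\pi$ from $S$ to $t$ — meaning the two $xy$-monotone subpaths of $\pi$ meeting at that winder lie on the same side of the horizontal line through the winder.

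**Plan.** The plan is to argue by contradiction: suppose some divider $\divp^\ast$ of a combinatorially distinct shortest path $\pi$ lies on the inner boundary of $\xydom(f(\divp),\divp)$, i.e., on a side of a hole $R'$ of the region. First I would use the fact that $\divp^\ast$ is a divider on the \emph{top} or \emph{bottom} side of $R'$, and that $\pi$ is $x$-monotone, so near $\divp^\ast$ the path $\pi$ is a winder wrapping around $R'$ — both incident $xy$-monotone subpaths lie on the same side of the horizontal line through $\divp^\ast$. But inside $\xydom(f(\divp),\divp)$, by Lemma~\ref{lem:xyshortest} every shortest subpath between two of its points is $xy$-monotone; in particular the portion of any shortest path from $f(\divp)$ to $\divp$ that passes near $\divp^\ast$ cannot turn around. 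The key geometric point to extract is: since $f(\divp)$ and $\divp$ are both midpoints of horizontal rectangle sides with $x(f(\divp)) < x(\divp)$, and the outer boundary consists of the two monotone chains $\pilu(\divp,f(\divp))$ and $\pird(f(\divp),\divp)$, the region $\xydom(f(\divp),\divp)$ is "monotone" in the sense that any $xy$-monotone path from $f(\divp)$ to $\divp$ stays in it, and conversely a winder around an interior hole would force a non-monotone detour, contradicting that the relevant subpath of $\pi$ is shortest (hence $xy$-monotone) between consecutive dividers $f(\divp)$ and $\divp$.

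**Key steps, in order.** (i) Fix the divider $\divp^\ast$ on the inner boundary and the shortest path $\pi$ it belongs to; by definition of $f(\divp)$, the dividers $f(\divp)$ and $\divp$ are consecutive on $\pi$, so the subpath $\pi'$ of $\pi$ between them is $xy$-monotone and lies in $\xydom(f(\divp),\divp)$. (ii) Observe that $\divp^\ast$, being a divider, is the midpoint of a winder of $\pi$; but a winder is where $\pi$ reverses $x$- or $y$-direction, so $\pi$ is \emph{not} $xy$-monotone in any neighborhood of $\divp^\ast$. (iii) Case analysis on whether $\divp^\ast$ lies strictly between $f(\divp)$ and $\divp$ along $\pi$ (then it lies on $\pi'$, immediately contradicting (i)–(ii)), or lies outside the segment of $\pi$ between them — in which case I would use Lemma~\ref{lem:outerchain} to show the hole $R'$ carrying $\divp^\ast$ is actually incident to the outer chain $\pilu(\divp,f(\divp))$ or $\pird(f(\divp),\divp)$, so $\divp^\ast$ would be on the \emph{outer} boundary, not the inner boundary — a contradiction with the hypothesis. (iv) Conclude that no such $\divp^\ast$ exists.

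**Main obstacle.** The delicate part is step (iii): ruling out the possibility that the divider belongs to some \emph{other} combinatorially distinct shortest path $\pi$ that does not factor through both $f(\divp)$ and $\divp$ as consecutive dividers. I expect the resolution to rest on the uniqueness of $f(\divp)$ (Lemma~\ref{lem:outerchain} shows $\pird(f(\divp))$ passes through $\divp$, and $\pilu(\divp)$ passes through $f(\divp)$), together with the observation that any shortest path reaching a point on the inner boundary of $\xydom(f(\divp),\divp)$ can be rerouted along the monotone chain $\pilu(\divp)$ or $\pird(f(\divp))$ without increasing length — so a winder there would contradict minimality of length, exactly as in the proof of Lemma~\ref{lem:winder} where a winder not hugging a rectangle side incident to the region can be shifted to shorten the path. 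So the argument is essentially a reduction to Lemma~\ref{lem:winder} applied inside the sub-region $\xydom(f(\divp),\divp)$: a winder at an interior hole of this region is not needed, because the rectangle carrying it is not incident to the outer boundary and can be bypassed.
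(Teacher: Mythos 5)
There is a genuine gap, and it sits exactly where you flag your ``main obstacle'': the case in which the divider on the inner boundary belongs to a \emph{different} combinatorially distinct shortest path. Your step (i) already conflates two paths: the path $\pi^*$ for which $\divp^*$ is a divider need not have $f(\divp)$ and $\divp$ as (consecutive) dividers at all, so you cannot assume the subpath of \emph{that} path between $f(\divp)$ and $\divp$ is $xy$-monotone and then derive a contradiction from the winder at $\divp^*$. Your proposed resolutions for the remaining case both fail. First, you cannot reduce to Lemma~\ref{lem:winder}: a hole of $\xydom(f(\divp),\divp)$ is a rectangle of $\rectset$ incident to $\xdom$, which is precisely the kind of rectangle on which Lemma~\ref{lem:winder} \emph{permits} a winder, so that lemma gives no contradiction. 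Second, the claim that Lemma~\ref{lem:outerchain} forces the hole carrying $\divp^*$ to be incident to the outer chain (hence $\divp^*$ on the outer boundary) is unsupported and false in general; holes are interior by definition, and ruling out winders on them is the entire content of the lemma, not something that follows from the description of the outer chains. Rerouting along $\pilu(\divp)$ or $\pird(f(\divp))$ ``without increasing length'' only works when the subpath $\pi''=\pi^*\cap\xydom(f(\divp),\divp)$ enters and exits through the \emph{same} outer chain; when it crosses from one chain to the other, no such direct rerouting is available.

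The missing idea is a path-exchange argument at $\divp^*$. By Lemma~\ref{lem:xyshortest} there is an $xy$-monotone shortest path from $f(\divp)$ to $\divp$ through $\divp^*$, which extends to a shortest path $\pi$ from $S$ to $t$ passing through $\divp^*$ \emph{without} winding there; meanwhile $\pi^*$ winds at $\divp^*$. Since both are shortest paths through the common point $\divp^*$, the two splices $\pi^+$ (the $S$-to-$\divp^*$ part of $\pi$ followed by the $\divp^*$-to-$t$ part of $\pi^*$) and $\pi^-$ (the other mix) are also shortest paths. The paper then analyzes where the endpoints $c,c'$ of $\pi''$ lie on the two outer chains and whether $\divp^*$ is a top-side or bottom-side midpoint, and in each subcase strictly shortens $\pi^+$ or $\pi^-$ by rerouting along $\pird(f(\divp))$ or $\pilu(\divp)$, yielding the contradiction. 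Without this splicing construction, the hard case of your step (iii) is not closed.
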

\begin{proof}
Assume to the contrary that there is a divider $\divp'$ 
on the inner boundary of $\xydom(f(\divp),\divp)$. 
Since there are two $xy$-monotone paths, one from $f(\divp)$ to $\divp'$
and one from $\divp'$ to $\divp$ by Lemma~\ref{lem:xyshortest}, 
there is an $xy$-monotone path $\pi'$ from $f(\divp)$ to $\divp$
that passes through $\divp'$. 
Thus, there is a shortest path $\pi$ from $S$ to $t$ that contains $\pi'$ as a subpath.
 
Since there is a sequence of dividers containing $\divp'$ in $\Pi$, there is a shortest path $\pi^*$
from $S$ to $t$ that uses $\divp'$ as a divider, that is, two $xy$-monotone subpaths of $\pi^*$ 
change their directions at $\divp'$. 
Let $\pi''$ be the subpath of $\pi^*\cap\xydom(f(\divp),\divp)$ that
passes through $\divp'$, and 
let $c$ and $c'$ be the endpoints of $\pi''$ with $x(c)\leq x(c')$.
If both $c$ and $c'$ are on $\pilu(\divp)$,
then by replacing $\pi''$ of $\pi^*$ by the portion of $\pilu(\divp)$ between $c$
and $c'$ we can get a path from $S$ to $t$ shorter than $\pi^*$, a contradiction.
Similarly, for the case that both $c$ and $c'$ are on $\pird(f(\divp))$
we can get a shorter path  both by replacing $\pi''$ of $\pi^*$ 
by the portion of $\pird(f(\divp))$ between $c$ and $c'$.

Consider the case that $c$ is on $\pilu(\divp)$ and $c'$ is on $\pird(f(\divp))$.
Let $\pi^+$ be the concatenation of the subpath of $\pi$ from $S$ to $\divp'$
and the subpath of $\pi^*$ from $\divp'$ to $t$,
and $\pi^-$ be the concatenation of the subpath of $\pi^*$ from $S$ to $\divp'$
and the subpath of $\pi$ from $\divp'$ to $t$.
Observe that $\pi^+$ and $\pi^-$ should be also shortest paths from $S$ to $t$.
If $\divp'$ is the midpoint of the bottom side of a rectangle,
by replacing the subpath from $f(\divp)$ to $c'$ of $\pi^+$ with
an $xy$-monotone path
from $f(\divp)$ to $c'$ along $\pird(f(\divp))$, we can get a
path from $S$ to $t$ shorter than $\pi^+$. See Figure~\ref{fig:innerbound}(a).
If $\divp'$ is the midpoint of the top side of a rectangle,
by replacing the subpath from $c$ to $\divp$ of $\pi^-$ with
an $xy$-monotone path
from $c$ to $\divp$ along $\pilu(\divp)$, we can get a
path from $S$ to $t$ shorter than $\pi^-$. See Figure~\ref{fig:innerbound}(b).

Consider the case that $c$ is on $\pird(f(\divp))$ and $c'$ is on $\pilu(\divp)$.
Since $\pi''$ also uses $\divp'$ as a divider,
observe that the subpath of $\pi^*$ from $c$ to $c'$ is not $xy$-monotone.
By replacing the subpath from $f(\divp)$ to $c'$ of $\pi^+$ with
an $xy$-monotone path
from $f(\divp)$ to $c'$ along $\pilu(\divp)$, we can get a
path from $S$ to $t$ and we let $d_1$ be the length of the path.
By replacing the subpath from $\divp$ to $c$ of $\pi^-$ with
an $xy$-monotone path
from $\divp$ to $c$ along $\pird(f(\divp))$, we can get a
path from $S$ to $t$ and we let $d_2$ be the length of the path.
Then $\min\{d_1,d_2\} < d(S,t)$, so we can get a path shorter than
either $\pi^+$ or $\pi^-$. See Figure~\ref{fig:innerbound}(c).
\end{proof}

\begin{figure}[t]
  \begin{center}
    \includegraphics[width=\textwidth]{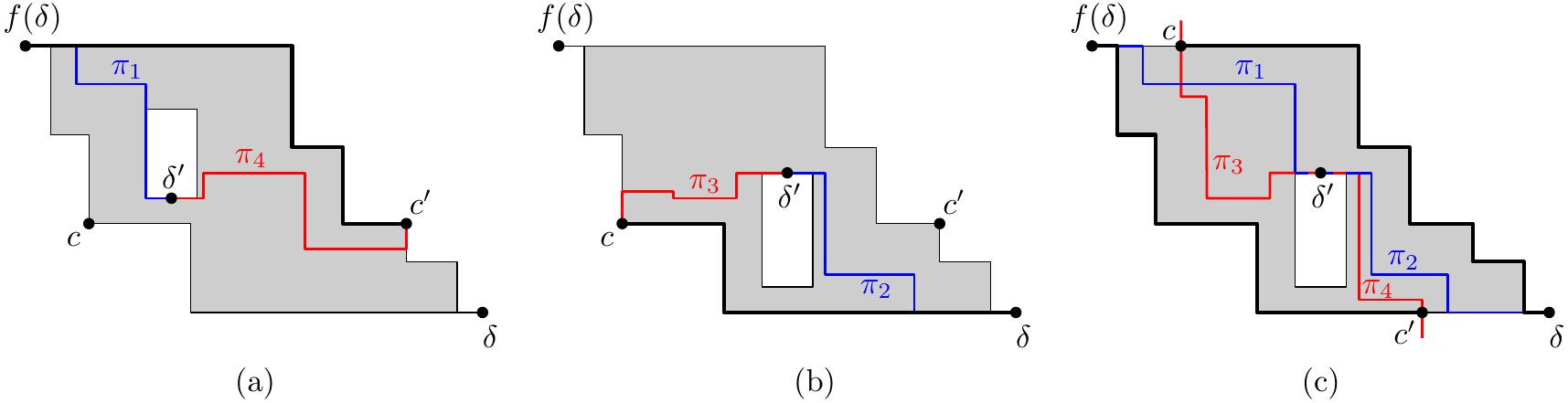}
    \caption{\small 
Proof of Lemma~\ref{lem:not_interior}. 
Blue paths $\pi_1$ and $\pi_2$ are subpaths of $\pi$, and red paths $\pi_3$ and $\pi_4$ are subpaths of $\pi^*$.
(a) If $\divp'$ is the midpoint of the bottom side of a rectangle,
an $xy$-monotone path from $f(\divp)$ to $c'$ along $\pird(f(\divp))$ (thick path)
is shorter than the concatenation of $\pi_1$ and $\pi_4$ (subpath of $\pi^+$).
(b) If $\divp'$ is the midpoint of the top side of a rectangle,
an $xy$-monotone path from $c$ to $\divp$ along $\pilu(\divp)$ (thick path)
is shorter than the concatenation of $\pi_3$ and $\pi_2$ (subpath of $\pi^-$).
(c) One of thick paths is shorter than one of two concatenations:
one is of $\pi_1$ and $\pi_4$ (subpath of $\pi^+$), the other is of $\pi_3$ and $\pi_2$ (subpath of $\pi^-$).
    }
    \label{fig:innerbound}
  \end{center}
\end{figure}

\begin{lemma}\label{lem:same-farthest}
If there is a divider $\divp'$ lying on $\xybd(f(\divp),\divp)$,
$f(\divp')$ and $f(\divp)$ are the same.
\end{lemma}
\begin{proof}
Let $\divp'$ be a divider lying on $\xybd(f(\divp),\divp)$.
Observe that $\divp'$ is the midpoint of the bottom side of a rectangle by Lemma~\ref{lem:outerchain}.
By definition, there is a shortest path $\pi$ 
from $S$ to $t$ passing through $f(\divp')$ and $\divp'$.
Since $\divp'$ lies on $\xybd(f(\divp),\divp)$,
$\pilu(\divp')$ passes through $f(\divp)$ and $f(\divp')$ by Lemma~\ref{lem:outerchain}.

Assume to the contrary that $f(\divp') \neq f(\divp)$.
By definition, $f(\divp')$ does not lie on $\xybd(\divp,f(\divp))$.
By Lemma~\ref{lem:not_interior}, $f(\divp')$ does not lie on the inner boundary 
of $\xydom(f(\divp),\divp)$.
Thus $f(\divp')$ is not incident to $\xydom(f(\divp),\divp)$.
Since $\pilu(\divp')$ passes through both $f(\divp)$ and $f(\divp')$,
$d(f(\divp'),\divp') = d(f(\divp'),f(\divp))+d(f(\divp),\divp')$.
We observe that 
$d(S,f(\divp)) < d(S,f(\divp'))+d(f(\divp'),f(\divp))$
since otherwise $f(\divp')$ and $\divp$ are consecutive in a sequence of $\Pi$,
so it violates the definition of $f(\divp)$.

Adding $d(f(\divp),\divp')$ to the both sides of the inequality,
we have $d(S,\divp') < d(S,f(\divp'))+d(f(\divp'),\divp')$.
This contradicts that $\pi$ is a shortest path from $S$ to $t$.
\end{proof}

For a fixed divider $\divp$, there are dividers each of which appear before $\divp$ consecutively in a sequence in $\Pi$.
Among those dividers, we let $g(\divp)$ be the divider with the largest $x$-coordinate.
Let $\divp$ be a divider satisfying $g(f(\divp)) = \divp$.
By Lemma~\ref{lem:outerchain}, $\xydom(f(\divp),\divp)$ is bounded by 
$\pilu(\divp,f(\divp))$ and $\pird(f(\divp),\divp)$.
By Lemmas~\ref{lem:outerchain} and~\ref{lem:same-farthest},
$\xydom(f(\divp),\divp)$ contains all $\xydom(\divp_i,\divp_j)$'s,
where $\divp_i$ is a divider lying on $\pilu(\divp,f(\divp))$
and $\divp_j$ is a divider lying on $\pird(f(\divp),\divp)$
such that $\divp_i$ and $\divp_j$ are consecutive in a sequence of $\Pi$.
After constructing $\xydom(f(\divp),\divp)$, we can compute a minimum-link shortest path
among all shortest paths from $\divp_i$ to $\divp_j$ using the plane sweep algorithm in Section~\ref{sec:xymonotone}. 

We first find a divider $\divp$ such that $g(f(\divp)) = \divp$
appearing in a sequence of $\Pi$.
By choosing a divider in decreasing order of the $x$-coordinate,
we can easily find such $\divp$.
We find $f(\divp)$ and construct $\xydom(f(\divp),\divp)$.
In $\xydom(f(\divp),\divp)$, we apply the algorithm in Section~\ref{sec:xymonotone}.
In the algorithm, 
each divider, including $\divp$, lying on $\xybd(f(\divp),\divp)$ is considered as an \origi event,
and each divider, including $f(\divp)$, lying on $\xybd(\divp,f(\divp))$ is considered as an \termi event.
See Figure~\ref{fig:xyregions4.1}(a).

To apply the plane sweep algorithm in Section~\ref{sec:xymonotone},
we must have $\ml(\divp',t)$ in advance
for each divider $\divp'$ considered as an \origi event.
When the sweep line encounters $\divp'$,
we update $M(a):=\min(M(a),\ml(\divp',t))$,
where $a$ is the index of the horizontal baseline incident to $\divp'$.
When the sweep line encounters $\divp'$ considered as a \termi event,
we can compute $\ml(\divp',t) := M(a)$,
where $a$ is an index of the horizontal baseline incident to $\divp'$.
Then $\ml(\divp',t)$ can be used when $\divp$ is considered as an \origi event
in other $xy$-monotone subregions.

Recall that $t$ and the closest points of $S$ from $t$ are not divider,
but they also construct $xy$-monotone subregions.
If $t=\divp$, Lemma~\ref{lem:outerchain} does not hold, but 
$\xydom(f(t),t)$ has no divider on $\xybd(f(t),t)$ except $f(t)$ and $t$.
Therefore, we do not have to change the \origi event in $\xydom(f(t),t)$.
If $f(\divp)$ is the closest point $s$ of $S$ from $t$,
Lemma~\ref{lem:outerchain} does not hold, but $\xydom(s,g(s))$ has no divider on
$\xybd(g(s),s)$ except $g(s)$ and $s$.
Therefore, we do not have to change the \termi event in $\xydom(s,g(s))$.

Therefore, to compute $\ml(\divp',t)$ at the \termi event, where $\divp'$ lies on $\xybd(\divp,f(\divp))$,
we have to know $\ml(\divp'',t)$ at the \origi event, where $\divp''$ lies on $\xybd(f(\divp),\divp)$.
It implies that there is an order among $xy$-monotone subregions to compute a minimum-link shortest path correctly.
With the order, we can construct a directed acyclic graph, which is a dual graph
of the $xy$-monotone subregions.
Each node $v$ of the graph corresponds to $\xydom(f(\divp),\divp)$.
We connect a directed edge from $u$ to $v$ if the two subregions corresponding to $u$ and $v$ 
are adjacent, and the subregion corresponding to $u$ has a divider $\divp$ as a \termi event,
and the subregion corresponding to $v$ has $\divp$ as an \origi event.
See Figure~\ref{fig:xyregions4.1}(b).

\begin{figure}[t]
  \begin{center}
    \includegraphics[width=\textwidth]{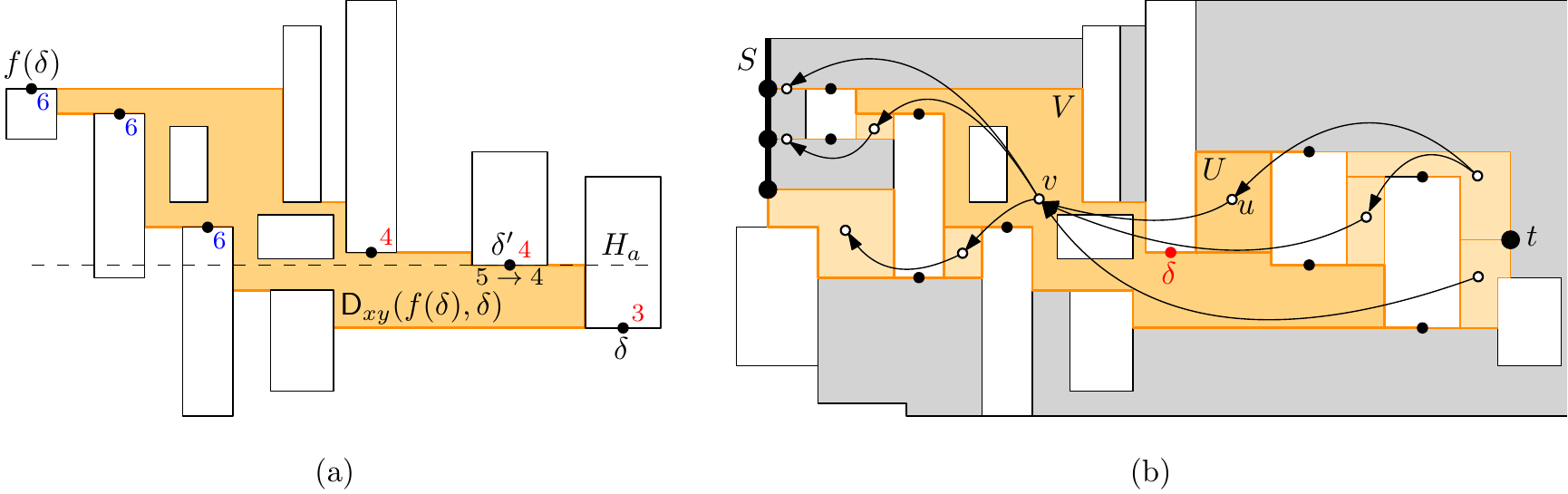}
    \caption{\small
(a) $\xydom(f(\divp),\divp)$ where $\divp$ is the divider such that $g(f(\divp)) = \divp$.
Points including $\divp$ and $\divp'$ lying on $\xybd(f(\divp),\divp)$ are dividers as \origi events,
and points including $f(\divp)$ lying on $\xybd(\divp,f(\divp))$ are dividers as \termi events.
During the plane sweep, $M(a) = 5$ before the sweep line encounters $\divp'$.
When the line encounters $\divp'$, $M(a)$ is updated since we know $\ml(\divp,t)$ is 4.
(b) A directed acyclic graph with constructed $xy$-monotone subregions.
Each $xy$-monotone subregions correspond to vertices of the graph represented as circles.
Both the outer boundaries of the two regions $U$ and $V$ corresponded to $u$ and $v$
contains $\divp$.
$\divp$ is an \termi event in $U$
and an \origi event in $V$.
To apply the plane sweep algorithm in $V$,
$\ml(\divp,t)$ should be computed first,
so there exists a directed edge $(u,v)$.
    }
    \label{fig:xyregions4.1}
  \end{center}
\end{figure}



Then we can compute $\ml(S,t)$ using a sequence of $xy$-monotone subregions
corresponding to a path in the dual graph.
Recall that during the plane sweep for each $xy$-monotone subregion,
we construct canonical segments to find a minimum-link shortest path
whose horizontal line segments are all canonical segments.
Since the $xy$-monotone subregions are disjoint in their interiors, 
we can report an $xy$-monotone path
using canonical segments by Lemma~\ref{lem:canonical},
and glue them to get a minimum-link shortest path from $S$ to $t$.

Since one rectangle has at most two dividers,
there are $O(n)$ dividers and $O(n)$ closest pairs of $S$ and $t$.
By Lemmas~\ref{lem:outerchain} and~\ref{lem:same-farthest} with the property of $\xydom(f(\divp),\divp)$,
there are at most two $xy$-monotone subregions incident to 
a divider that we construct during the plane sweep. 
Thus, there are four such subregions incident to a rectangle. 
Also, those subregions are disjoint in their interiors by Lemma~\ref{lem:not_interior}.

\begin{lemma}\label{lem:complexity}
During the plane sweep,
we construct $O(n)$ $xy$-monotone subregions defined by pairs of dividers
whose total complexity is $O(n)$. By using these subregions, we can compute
a minimum-link shortest path in $O(n\log n)$ time using $O(n)$ space.
\end{lemma}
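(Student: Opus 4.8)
The plan is to prove Lemma~\ref{lem:complexity} in three stages: (i) bound the number and total complexity of the constructed subregions, (ii) describe how the subregions are generated together with a dependency order among them, and (iii) account for the time, the space, and the correctness of the reported path.

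\textbf{Stage 1 (counting).} Each rectangle of $\rectset$ incident to $\xdom$ contributes at most two dividers, the midpoints of its top and bottom sides, so there are $O(n)$ dividers and hence $O(n)$ candidate pairs $(f(\divp),\divp)$. The subregions we build are the $\xydom(f(\divp),\divp)$ with $g(f(\divp))=\divp$; since the map from such a subregion to its right endpoint $\divp$ is injective, there are $O(n)$ of them. For their total complexity I would invoke the structural facts already in hand: by Lemmas~\ref{lem:outerchain} and~\ref{lem:same-farthest} each divider is incident to at most two of these subregions, hence each rectangle to at most four, while by Lemma~\ref{lem:not_interior} the subregions are interior-disjoint with no divider on their inner boundaries. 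The complexity of $\xydom(f(\divp),\divp)$ -- its outer chains $\pilu(\divp,f(\divp))$ and $\pird(f(\divp),\divp)$ together with its holes -- is proportional to the number of rectangles incident to it plus a constant, so summing over all subregions gives a total of $O(n)$; likewise the numbers $D_i$ of baselines they define satisfy $\sum_i D_i=O(n)$.

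\textbf{Stage 2 (generation and scheduling).} Enumerating dividers in decreasing order of $x$-coordinate, for each divider $\divp$ with $g(f(\divp))=\divp$ I find $f(\divp)$ and build $\xydom(f(\divp),\divp)$ by ray shooting into the rectangular domain with the structure of Giora and Kaplan~\cite{giora2009} ($O(n\log n)$ preprocessing, $O(n)$ space, $O(\log n)$ per query), so all constructions cost $O(n\log n)$ given that their total complexity is $O(n)$. I then form the dual directed graph whose nodes are the subregions, with an arc $u\to v$ whenever some divider is a \termi event of the subregion of $u$ and an \origi event of the subregion of $v$; it has $O(n)$ nodes and $O(n)$ arcs. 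The crucial point is that this graph is acyclic: the value $\ml(\divp',t)$ attached to a divider at a \termi event is read off the $M$-array, which only aggregates $\ml(\cdot,t)$ values of dividers processed at the \origi events of the same subregion, and those dividers lie strictly nearer $t$ along shortest subpaths; this induces a well-founded precedence that the dual graph refines. Topologically sorting in $O(n)$ time, I process the subregions in that order, running the plane-sweep subroutine of Section~\ref{sec:xymonotone} in each: dividers on $\xybd(f(\divp),\divp)$ act as \origi events, each loading its already-known $\ml(\divp',t)$ into $M(a)$, and dividers on $\xybd(\divp,f(\divp))$ act as \termi events, each setting $\ml(\divp',t):=M(a)$ for later subregions, with the two degenerate cases $t=\divp$ and $f(\divp)=s$ handled as noted just before the lemma.

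\textbf{Stage 3 (accounting and correctness).} By Lemma~\ref{lem:xymonotone}, processing a subregion of complexity $D_i$ costs $O(D_i\log D_i)$ time and $O(D_i)$ working space; since $\sum_i D_i=O(n)$ the total is $O(n\log n)$ time, and because the working structures are released after each subregion the only data persisting across subregions are the $O(n)$ values $\ml(\divp,t)$ and the canonical segments, so together with the $O(n)$-space ray-shooting structure the peak space is $O(n)$. Correctness follows by induction on the topological order: every minimum-link shortest path from $S$ to $t$ is a concatenation of $xy$-monotone subpaths joined at dividers (Lemma~\ref{lem:winder} and the discussion preceding it), each such subpath lies inside one constructed subregion by Lemmas~\ref{lem:outerchain}--\ref{lem:same-farthest}, and within a subregion the subroutine returns an optimal $xy$-monotone subpath; gluing the canonical-segment subpaths obtained via Lemma~\ref{lem:canonical} along shared dividers yields a global minimum-link shortest path from $S$ to $t$. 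The step I expect to be the main obstacle is Stage 1 together with the acyclicity claim in Stage 2: arguing that each rectangle meets only a constant number of the constructed subregions (so that their total complexity, and hence the total cost of the subroutine calls, is $O(n)$), and that the \origi/\termi dependency among subregions admits a topological order along which the $\ml(\divp,t)$ values propagate consistently.
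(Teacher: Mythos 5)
Your proposal is correct and follows essentially the same route as the paper: the paper establishes this lemma through exactly the counting you give in Stage~1 (at most two subregions per divider, hence four per rectangle, interior-disjointness via Lemma~\ref{lem:not_interior}, and $\sum_i D_i = O(n)$), the same dual directed acyclic graph of subregions with dividers serving as \origi/\termi events, and the same per-subregion accounting via Lemma~\ref{lem:xymonotone}. Your explicit acyclicity argument and topological-sort scheduling in Stage~2 merely spell out what the paper asserts when it calls the dual graph a DAG, so there is no substantive difference.
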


\section{Extending to a line segment \texorpdfstring{$T$}{T}}
\label{se:extension}
Consider the case that the target is not just a point but an axis-aligned line segment $T$.
We explain how the algorithm presented in previous sections works for $T$.
Assume that $T$ is a vertical line segment and $x(S) < x(T)$.
We partition the domain $\dom$ into eight regions using the eight monotone paths $\pi_\alpha$'s from $S$ 
defined in Section~\ref{sec:eightpath}.
Then $T$ intersects at most five regions $\xdom^1$, $\xydom^1$, $\xydom^4$, $\ydom^1$, 
and $\ydom^2$. For the portion $T'$ of $T$ contained in each region, 
we compute a minimum-link shortest path from $S$ to $T'$.

For the portion of $T$ contained in a region of $\xydom^1, \xydom^4, \ydom^1$ and $\ydom^2$,
the closest point of $S$ from $T'$ is an endpoint of $S$
and the closest point in $T'$ from $S$ is an endpoint of $T'$ by Lemma~\ref{lem:monotone}.
Thus we just apply the algorithms in Sections~\ref{sec:xymonotone} and~\ref{sec:xmonotone} 
for the corresponding endpoints of $S$ and $T'$.

Consider the case that $T'\subset\xdom^1$.
A minimum-link shortest path from $S$ to $T'$ connects $S$ and an endpoint of $T'$ or
the intersection point $t'$ of $T'$ with a horizontal baseline of $\xdom$.
We can compute the distance from $S$ to two endpoints of $T'$ using the algorithm in Section~\ref{sec:xmonotone}.
There are $O(n)$ intersection points on $T'$ with horizontal baselines of $\xdom$.
During the plane sweep, we have $d(S, \med(R))$ and $d(S, \med'(R))$
for each hole $R$ of $\xdom$ such that the horizontal baselines defined by $R$ intersects $T'$.
Thus, we can compute the distance from $S$ to each intersection point $t'$ on $T'$ 
after the plane sweep. Then we obtain all the closest pairs of $S$ and $T'$.

If there is only one closest pair,
or the closest point of $T'$ from $S$ is the same for all closest pairs,
we can compute a minimum-link shortest path from $S$
to $T'$ as we do in Section~\ref{sec:xmonotone}.
Otherwise, let $t_1$ and $t_2$ be the closest points of $T'$ from $S$.
Since the two horizontal rays from $t_1$ and $t_2$ going leftwards
pass through two different dividers, Lemmas~\ref{lem:not_interior} and~\ref{lem:same-farthest} 
hold.
We compute all dividers of shortest paths from $S$ to the closest points of $T'$,
and compute a minimum-link shortest path from $S$ to $T'$ in a way similar to 
the one in Section~\ref{sec:com-distinct}. See Lemma~\ref{lem:complexity}.

We can compute the portions $T'$ of $T$ contained in each of the five regions in $O(\log n)$ time
using binary search along each path $\pi_\alpha$ and computing an intersection of $T$ and $\pi_\alpha$.
For $T'\subset\xdom^1$, we can find the closest pairs in $O(n \log n)$ time
if we use the ray shooting structure of Giora and Kaplan~\cite{giora2009}.
For each $T'$ we use our algorithm in Sections~\ref{sec:xymonotone} and~\ref{sec:xmonotone} 
with $O(n \log n)$ time and $O(n)$ space,
and eventually find a minimum-link shortest path from $S$ to $T$ 
by choosing $\min \ml(S,T')$ for all $T'$.

\begin{lemma}\label{lem:resultrectangle}
Given two axis-aligned line segments $S$ and $T$ in 
a rectangular domain with $n$ disjoint rectangular obstacles in the plane,
we can compute a minimum-link shortest path from $S$ to $T$ in $O(n\log n)$ time
using $O(n)$ space.
\end{lemma}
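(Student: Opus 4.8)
The plan is to assemble the lemma from the machinery already developed for the point target, treating $T$ as a line segment of $O(n)$ "interesting'' target points and arguing that the work done across all of them telescopes to $O(n\log n)$. First I would fix coordinates: assume $T$ is vertical with $x(S) < x(T)$ (the other orientations and the case $x(S) > x(T)$ are symmetric), build the eight $xy$-monotone paths $\pi_\alpha$ from the endpoints of $S$ as in Section~\ref{sec:eightpath}, and observe that a vertical segment can cross at most five of the resulting regions, namely $\xdom^1$, $\xydom^1$, $\xydom^4$, $\ydom^1$, $\ydom^2$. The endpoints of these five portions $T'$ are found by binary search along each $\pi_\alpha$ in $O(\log n)$ time. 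By Lemma~\ref{lem:monotone}, for $T' \subset \xydom^1, \xydom^4, \ydom^1, \ydom^2$ the closest point of $S$ from $T'$ is an endpoint of $S$ and the closest point of $T'$ is an endpoint of $T'$, so the problem on these four portions reduces verbatim to the point-target case (Lemmas~\ref{lem:xymonotone} and~\ref{lem:xmonotone}), each costing $O(n\log n)$ time and $O(n)$ space.

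The substantive portion is $T' \subset \xdom^1$, where the closest pair need not be unique and the closest point on $T'$ can be an interior point — specifically, an endpoint of $T'$ or an intersection of $T'$ with a horizontal baseline of $\xdom$, of which there are $O(n)$. Here I would reuse the left-to-right plane sweep of Section~\ref{sec:xmonotone}: while sweeping I already compute $d(S,\med(R))$ and $d(S,\med'(R))$ for every hole $R$ of $\xdom$, in particular for every hole whose horizontal baselines meet $T'$. Shooting a leftward ray from each candidate target point $t'$ on $T'$ and using the ray-shooting structure of Giora and Kaplan~\cite{giora2009} (with its $O(n\log n)$ preprocessing), I can in $O(\log n)$ per point recover $d(S,t')$, hence identify all closest pairs of $S$ and $T'$ and, among them, the set of closest points $\{t_1,t_2,\dots\}$ on $T'$. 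If this set is a singleton — or, more generally, if all closest pairs share the same point of $T'$ — then Section~\ref{sec:xmonotone} applies directly. Otherwise the leftward rays from the distinct closest points $t_1,t_2$ pass through distinct dividers, so Lemmas~\ref{lem:not_interior} and~\ref{lem:same-farthest} remain valid, and I invoke the combinatorially-distinct-dividers apparatus of Section~\ref{sec:com-distinct}: enumerate all dividers of shortest paths from $S$ to the closest points of $T'$, build the DAG of $xy$-monotone subregions $\xydom(f(\divp),\divp)$, and run the segment-tree plane sweep of Section~\ref{sec:reduce-time} inside each. By Lemma~\ref{lem:complexity} the subregions are interior-disjoint, there are $O(n)$ of them with total complexity $O(n)$, and the whole computation is $O(n\log n)$ time and $O(n)$ space.

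Finally I would combine: compute $\ml(S,T')$ for each of the (at most five) portions $T'$ as above, and return the path achieving $\min_{T'} \ml(S,T')$, which is a minimum-link shortest path from $S$ to $T$ because every shortest path from $S$ to $T$ has its $T$-endpoint in exactly one of the five regions and is $x$-, $y$-, or $xy$-monotone there by Lemma~\ref{lem:monotone}. The time is $O(n\log n)$ (five calls, each $O(n\log n)$, plus $O(\log n)$ for the region splits and $O(n\log n)$ for the ray shootings on $T' \subset \xdom^1$) and the space is $O(n)$ since each call is run to completion before the next. The main obstacle I anticipate is the $\xdom^1$ case with multiple closest points on $T'$: one must verify that the dividers feeding a potentially interior closest point of $T'$ still satisfy the structural lemmas (no divider on an inner boundary, $f(\divp')=f(\divp)$ on the outer chain) so that the DAG of subregions stays acyclic and interior-disjoint, and that treating the $O(n)$ baseline-intersection points of $T'$ as additional \origi/\termi events does not break the $O(\log n)$-per-event bound of the segment tree; this is exactly what Section~\ref{sec:com-distinct} and Lemma~\ref{lem:complexity} are set up to guarantee, so the argument is a careful invocation rather than a new idea.
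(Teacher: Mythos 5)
Your proposal is correct and follows essentially the same route as the paper's Section~\ref{se:extension}: the same five-region decomposition, the same reduction of the four monotone regions to the point-target case via endpoints, the same treatment of $T'\subset\xdom^1$ through the $O(n)$ baseline intersections, ray shooting with Giora--Kaplan, and the divider/DAG machinery of Section~\ref{sec:com-distinct} with Lemma~\ref{lem:complexity} supplying the complexity bound. No substantive difference to report.
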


\section{Extending to box-disjoint rectilinear polygons}
\label{se:rect.polygons}
We show how to extend our algorithm in previous sections so that it handles 
box-disjoint rectilinear polygons.
Let $\rectset_P$ be a set of box-disjoint rectilinear polygons, and let $\bbox(P)$
denote the bounding box of a polygon $P\in\rectset_P$.
We use $\rdom := \realp - \cup_{P\in\rectset_P}P$ to denote a \emph{box-disjoint rectilinear domain}
induced by $\rectset_P$ in the plane.
A set $Q$ is \emph{rectilinear convex} if and only if any line parallel to the $x$- or $y$-axis intersects $Q$
in at most one connected component.
The rectilinear convex hull of $P$, denoted by $\ch(P)$, is the common intersection of
all rectilinear convex sets containing $P$. 

\subsection{Both \texorpdfstring{$S$}{S} and \texorpdfstring{$T$}{T} disjoint from the bounding boxes}\label{sec:bbox-disjoint}
Consider the case that both $S$ and $T$ are disjoint from the rectangles $\bbox(P)$ for $P\in\rectset_P$. 
Then no shortest path 
intersects the interior of $\ch(P)$ for $P\in\rectset_P$. If there is a shortest path $\pi$ intersecting the 
interior of $\ch(P)$ for a rectilinear polygon $P\in\rectset_P$, $\pi$ can be shortened by replacing
each connected portion of $\pi$ contained in the interior with the boundary curve of
$\ch(P)$ between the endpoints of the portion, a contradiction.
Thus, we replace each polygon $P$
with $\ch(P)$ and find a minimum-link shortest path from $S$ to $T$ avoiding $\ch(P)$'s.
We assume that each polygon $P\in\rectset_P$ is rectilinear convex in this subsection.
If there is a shortest path $\pi$ from $S$ to $T$ intersecting $\bbox(P)$ for $P\in\rectset_P$,
the subpath $\pi\cap\bbox(P)$ can be replaced
with a subpath along the boundary of $\bbox(P)$ without increasing the length.
This implies that there is a shortest path from $S$ to $T$ avoiding $\bbox(P)$ for all $P\in\rectset_P$.
From Lemma~\ref{lem:monotone}, every shortest path from $S$ to $T$ avoiding $\bbox(P)$ for all $P\in\rectset_P$
is either $x$-, $y$-, or $xy$-monotone.
The two subpaths have same length and endpoints,
so they have the same monotonicity: One is $X$-monotone if and only if the other is $X$-monotone, for $X\in\{x,y,xy\}$.
Therefore, every shortest path from $S$ to $T$ contained in $\rdom$ is either $x$-, $y$-, or $xy$-monotone.

Here we partition the domain into eight disjoint regions using eight $xy$-monotone paths
as follows. We define the eight $xy$-monotone paths from $S$ in a way slightly different
to the one in Section~\ref{sec:eightpath}.
Consider the horizontal ray emanating from $s=p_1$ going rightwards,
and let $P\in\rectset_P$ be the polygon such that $\bbox(P)$ is the first rectangle
hit by the ray among the rectangles, at point $b$ on its left side.
If the upper endpoint $q$ of the leftmost vertical side of $P$
lies above $b$, we set $p'_1$ to $b$ and continue with the vertical ray from $p'_1$ to $q$, and
continue along the boundary chain of $P$ from $q$ to the left endpoint 
$p_2$ of the topmost side of $P$ in clockwise order. 
Otherwise, the horizontal ray continues going rightwards until it hits $P$ at a point 
$b'$. Then we set $p'_1$ to $b'$ and
continue along the boundary chain of $P$ from $p'_1$ to the left endpoint $p_2$ of the topmost 
side of $P$ in clockwise order.
We repeat this process by taking the horizontal ray from
$p_2$ going rightwards. Then we obtain an $xy$-monotone path
$\piru(p)=(p=p_1,p_1',p_2,p_2',\ldots)$, by following the boundary chain of $P$ from $p'_i$ to $p_{i+1}$
in clockwise order.
Thus, $\piru(p)$ is an $xy$-monotone path from $p$
that alternates going horizontally rightwards and going vertically upwards.
We define eight $xy$-monotone paths $\pi_\alpha(p)$ as in Section~\ref{sec:eightpath}.
Using these eight $xy$-monotone paths,
we construct at most eight disjoint regions.

Using those regions, 
we compute a minimum-link shortest path from $S$ to the portion of $T$
contained in each region.
Let $T'$ be the portion of $T$ contained in $\xydom$.
The closest pair $(s,t)$ of $S$ and $T'$ consists of their endpoints.
We compute $\xydom(s,t)$ using the method in Section~\ref{sec:xymonotone}.
Observe that every shortest path from $S$ to $T'$ is contained in $\xydom(s,t)$.
With $O(n)$ baselines defined by the sides of $\bbox(P)$ and the boundary segments of $P$ 
incident to $\xydom$ for all $P\in\rectset_P$,
we can show that there is a minimum-link shortest path from $S$ to $T'$
which is aligned to the baselines using an argument similar to the proof of 
Lemma~\ref{lem:base-align}.
Hence, we can compute a minimum-link shortest path from $S$ to $T'$ 
in the same time and space as in Lemma~\ref{lem:xymonotone}.
Similarly, we can compute a minimum-link shortest path from $S$ to $T'$ 
for the portions $T'$ of $T$ contained in other regions.
When $T'$ is contained in $\xdom$,
a minimum-link shortest path may have some winders, each of which 
contains the topmost or the bottommost side of $P$ for a rectilinear polygon $P\in\rectset_P$.
This can be shown by an argument similar to the proof of Lemma~\ref{lem:winder}.
Thus we can compute $d(S,T')$ using the same plane sweep algorithm on $\xdom$,
and find the dividers
which are midpoints of the topmost or the bottommost side of $P$
as we do in Section~\ref{sec:xmonotone} in the same time and space stated in Lemma~\ref{lem:xmonotone}.

\begin{lemma}\label{lem:outside_pocket}
For two axis-aligned line segments $S$ and $T$ in $\rdom$
such that both $S$ and $T$ are disjoint from $B(P)$ for all $P\in\rectset_P$,
we can compute a minimum-link shortest path from $S$ to $T$ in $\rdom$
in $O(n \log n)$ time using $O(n)$ space.
\end{lemma}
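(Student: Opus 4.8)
The plan is to assemble the lemma directly from the pieces already developed in Sections~\ref{sec:xymonotone}--\ref{se:extension}, after making the two reductions described in the surrounding text. First I would dispose of the non-convexity: since no shortest path from $S$ to $T$ can enter the interior of $\ch(P)$ for any $P\in\rectset_P$ (a portion inside could be replaced by a boundary arc of $\ch(P)$, strictly shortening it), we may replace each $P$ by $\ch(P)$ and assume every obstacle is rectilinear convex. Next I would argue, exactly as in the text, that there is a shortest path avoiding every $\bbox(P)$, so that Lemma~\ref{lem:monotone} applies to give the $x$-/$y$-/$xy$-monotone classification, and that monotonicity transfers back to paths that do cut through the boxes because any such path shares endpoints and length with a box-avoiding one. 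Hence every shortest path from $S$ to $T$ in $\rdom$ is $x$-, $y$-, or $xy$-monotone, which is the structural hypothesis the earlier algorithms rely on.

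With that in hand, the second step is to redo the eight-region decomposition. I would define the eight $xy$-monotone paths $\pi_\alpha(p)$ from the endpoints of $S$ using the modified rule already spelled out in Section~\ref{se:rect.polygons}: a rightward horizontal ray advances until it meets the first $\bbox(P)$; if the leftmost vertical side of $P$ rises above the hit point we climb that side and then follow $\partial P$ clockwise to the left end of the topmost side, otherwise we continue rightward to hit $P$ itself and then follow $\partial P$ clockwise. I would check that this still yields an $xy$-monotone path (it alternates rightward/upward moves, with the boundary-following portions themselves $xy$-monotone by rectilinear convexity of $P$) and that the eight such paths are pairwise non-crossing, so $\dom$ (here $\rdom$) splits into at most eight disjoint regions $\xydom^i,\xdom^i,\ydom^i$. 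The analogue of Lemma~\ref{lem:monotone} then holds with the same proof: a shortest path reaching a point in $\xydom^i$ must pass through the appropriate endpoint of $S$ (else shortcut along $\piru(s)$ or $\pilu(s)$), and is forced to be monotone of the stated type by Choi--Yap's classification together with the fact that the relevant $\pi_\alpha(t)$ rays hit the boundary paths of the region.

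The third step is to run the existing machinery region by region. As in Section~\ref{se:extension}, $T$ meets at most five of the eight regions; for the portion $T'$ lying in a region where $S$'s closest point and $T'$'s closest point are forced to be endpoints (the $\xydom$ and $\ydom$ cases), I would invoke Lemmas~\ref{lem:xymonotone} and~\ref{lem:xmonotone} directly on those endpoints, after forming $\xydom(s,t)$ as in Section~\ref{sec:xymonotone} and observing every shortest $S$-to-$T'$ path lies in it. For $T'\subset\xdom$, I would use $O(n)$ baselines now defined by the sides of the $\bbox(P)$'s and by the boundary segments of each $P$ incident to $\xdom$, and argue as in Lemma~\ref{lem:base-align} that a minimum-link shortest path aligned to these baselines exists. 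The winder analysis of Lemma~\ref{lem:winder} carries over with "horizontal side of a rectangle" replaced by "topmost or bottommost side of $P$": a winder not containing such a side could be dragged vertically to shorten the path, and one containing a horizontal side of a $P$ not incident to $\xdom$ would make a subpath non-shortest. So the plane sweep of Section~\ref{sec:xmonotone} finds $d(S,T')$ and the dividers, and Section~\ref{sec:com-distinct}'s construction (via Lemma~\ref{lem:complexity}) handles the case of multiple combinatorially distinct shortest paths. Taking the minimum over the at most five portions $T'$, and noting each invocation costs $O(n\log n)$ time and $O(n)$ space after $O(n\log n)$ preprocessing with the Giora--Kaplan ray-shooting structure~\cite{giora2009}, gives the bound.

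The main obstacle I anticipate is not any single hard estimate but verifying that every lemma from Sections~\ref{sec:xymonotone}--\ref{sec:com-distinct} truly survives the replacement of axis-aligned rectangles by rectilinear convex polygons: in particular that the redefined $\pi_\alpha$ paths are still $xy$-monotone and mutually non-crossing, that $\xydom(s,t)$ is still bounded by monotone chains so the "active baseline = a union of index ranges" invariant driving $\rangebbst$ and $\segtree$ still holds, and that the winder/divider argument still pins every winder to an entire topmost/bottommost side of some $P$ incident to $\xdom$. Each of these is true essentially because a rectilinear convex $P$ behaves locally like a rectangle from the point of view of a monotone path — the boundary-following detour is itself monotone — but this is the place where the proof must be careful rather than merely cite the earlier sections.
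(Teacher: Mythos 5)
Your proposal follows the paper's own proof essentially step for step: replace each obstacle by its rectilinear convex hull, argue box-avoidance so the monotonicity classification of Lemma~\ref{lem:monotone} transfers, redefine the eight $xy$-monotone paths by following obstacle boundaries clockwise, and then rerun the algorithms of Sections~\ref{sec:xymonotone}--\ref{se:extension} region by region with baselines taken from the sides of the bounding boxes and the obstacle boundary segments, and with winders pinned to topmost/bottommost sides. The approach and the complexity accounting match the paper's argument, so the proposal is correct and not materially different.
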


\subsection{\texorpdfstring{$S$}{S} or \texorpdfstring{$T$}{T} intersecting bounding boxes}
Each horizontal or vertical line segment contained in $\rdom$ intersects at most two bounding 
boxes of polygons in $\rectset_P$, and thus it can be partitioned into at most three pieces, 
one disjoint from the rectangles of $\rectset$, and the other two, each contained in 
the bounding box of a polygon in $\rectset_P$.
This applies to $S$ and $T$. 
Thus, in order to find a minimum-link shortest path from $S$ to $T$,
we need to consider at most 9 pairs, each consisting of one piece of $S$ and one piece of $T$,
and find a minimum-link shortest path for each pair.
We can handle the pair consisting of the pieces of $S$ and $T$ disjoint from the rectangles of 
$\rectset$ using the method in Section~\ref{sec:bbox-disjoint}.
In this section we show how to handle the remaining 8 pairs.
Each such pair has at least one piece of $S$ or $T$ 
that is contained in the bounding box of a polygon in $\rectset_P$.

Without loss of generality, we assume that $S$ is contained in $\bbox(P)$ of $P\in\rectset_P$
in the following. Let $C_S$ be the the component among the connected components of 
$\bbox(P)\setminus \text{cl}(P)$ that contains $S$, where $\text{cl}(P)$
is the closure of $P$.

\subsubsection{\texorpdfstring{$T$}{T} intersecting \texorpdfstring{$C_S$}{CS}}
We first consider the case that $T\cap C_S\neq\emptyset$.
We assume that $T$ is a vertical line segment. The case that $T$ is a horizontal line segment
can be handled analogously.
Observe that any closest point of $T$ from $S$ lies on $T'=T\cap C_S$,
that is, the problem reduces to computing
a minimum-link shortest path from $S$ to $T'$ in the rectilinear polygon $C_S$.
To ease the description, we simply assume that $T$ is contained in $C_S$.
There exists a shortest path from $S$ to $T$ which is not $x$-, $y$-, or $xy$-monotone.
However, $C_S$ is the rectilinear polygon without holes,
so we can use the algorithm of Schuierer~\cite{schuierer1996},
which computes a minimum-link shortest path 
between two points in a rectilinear polygon.

\begin{lemma}\label{lem:closest}
If no axis-aligned line segment contained in $C_S$ connects $S$ and $T$,
the closest pair of $S$ and $T$ is unique.
\end{lemma}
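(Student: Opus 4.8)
The plan is to argue by contradiction: suppose there are two distinct closest pairs $(s_1,t_1)$ and $(s_2,t_2)$ of $S$ and $T$ with $t_1 \neq t_2$, and no axis-aligned segment inside $C_S$ connects $S$ to $T$. First I would set up coordinates so that $S$ and $T$ are both vertical (the case $S$ horizontal is symmetric, and if $S$ is vertical and $T$ is horizontal the reasoning is analogous). Since $C_S$ is a simple rectilinear polygon (a connected component of $\bbox(P)\setminus\mathrm{cl}(P)$, hence hole-free), shortest paths inside $C_S$ are well-behaved: between any two points there is a shortest path that is $x$-, $y$-, or $xy$-monotone within $C_S$ after unfolding the staircase structure, and the distance function $d(\cdot,\cdot)$ restricted to $C_S$ is the geodesic $L_1$ metric of a simple rectilinear polygon. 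I would record the hypothesis in the form $d(s_1,t_1) = d(s_2,t_2) = d(S,T)$ and $d(s,t) \ge d(S,T)$ for all $s\in S$, $t\in T$.

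The key step is to exploit convexity/monotonicity of the one-variable functions obtained by fixing one endpoint. Fix $s\in S$; the function $t \mapsto d(s,t)$ for $t$ ranging over the vertical segment $T$ is a piecewise-linear function whose pieces have slopes in $\{-1,0,+1\}$, and inside a simple rectilinear polygon it is \emph{unimodal} (convex along $T$): it decreases to a minimum and then increases. The same holds with the roles of $S$ and $T$ swapped. I would use this to show that the set of closest points of $T$ from $S$, namely $\{t\in T : \min_{s\in S} d(s,t) = d(S,T)\}$, is a single point or a single subsegment of $T$; if it is a nondegenerate subsegment $[t_1,t_2]$, then along that subsegment the function $t\mapsto \min_s d(s,t)$ is constant, i.e.\ has a flat (slope-$0$) piece of $d(s,t)$ for the optimal $s$. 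A flat piece of $d(s,\cdot)$ along a vertical direction forces the shortest path from $s$ to reach $T$ "from the side", meaning the last link of a shortest path from $s$ to any $t$ in $[t_1,t_2]$ is a horizontal segment ending on $T$, and this horizontal segment, extended, must actually meet $T$ — which, combined with the shortest path from $s$ being monotone near its endpoint, produces an axis-aligned (horizontal) segment inside $C_S$ connecting a point of $S$ to a point of $T$. That contradicts the hypothesis. The remaining possibility is that the closest point of $T$ is unique but the closest point of $S$ is not; here I would run the symmetric argument with $S$ and $T$ interchanged, obtaining again a horizontal axis-aligned connector, a contradiction. The only case left is that both the closest point of $T$ and the closest point of $S$ are unique — but then I must rule out two \emph{combinatorially different} closest pairs sharing the same pair of endpoints, which is vacuous since "pair" is determined by its two endpoints; so uniqueness follows.

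The main obstacle I anticipate is making the "flat piece forces an axis-aligned connector" step airtight. Unimodality of the $L_1$ geodesic distance along a segment in a simple rectilinear polygon is standard (it follows from the fact that the geodesic between two points in such a polygon has a staircase structure and the shortest-path map from a point has monotone cells), but I need to be careful that a flat piece along $T$ genuinely means the incident link on $T$'s side is horizontal and not an artifact of $T$ lying in a "pocket". Concretely, if $d(s,t)$ is constant for $t$ on a vertical subsegment, then moving along that subsegment does not change geodesic distance, which in the $L_1$ rectilinear setting can only happen if the geodesic's final portion is a horizontal link sweeping across — and since $C_S$ is simple, nothing obstructs the horizontal segment from $s$'s column to $T$, giving the connector. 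I would also need to handle the degenerate subcase where $S$ and $T$ are collinear (same $x$-coordinate), where the connector is immediate. I expect to invoke the structure of shortest paths in simple rectilinear polygons (as used via Schuierer~\cite{schuierer1996} elsewhere in this section) rather than re-prove it.
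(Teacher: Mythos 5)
Your overall strategy---show that two distinct closest points on one of the two segments force a horizontal connector, then conclude by symmetry---points in the right direction, but the mechanism you use to produce the connector has a genuine gap, and it sits exactly at the step you yourself flag as delicate. From constancy of $d(S,\cdot)$ on a subsegment of $T$ you can legitimately conclude that, for $t$ in the interior of the flat piece, some shortest path arrives at $t$ by a \emph{horizontal last link}; but that link is only the final link of a possibly many-link path. Your jump from there to ``nothing obstructs the horizontal segment from $s$'s column to $T$'' is not justified: $C_S$ is a pocket of $\bbox(P)\setminus\mathrm{cl}(P)$ whose boundary contains an arbitrary rectilinear chain of $\partial P$, so a tooth of $P$ protruding into $C_S$ at the height of $t$ can block the horizontal line through $t$ long before it reaches $S$. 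Simplicity (absence of holes) of $C_S$ does not rule this out; it only guarantees the shortest path exists and is a staircase-of-staircases, not that its last link spans the whole pocket. A second, smaller gap is your appeal to unimodality of $t\mapsto d(S,t)=\min_{s\in S}d(s,t)$: you argue unimodality of $d(s,\cdot)$ for fixed $s$, but a pointwise minimum of unimodal functions need not be unimodal, so the claim that the minimizer set on $T$ is a single subsegment (the input to your flat-piece argument) needs its own proof via geodesic convexity of sublevel sets of the distance to the \emph{segment} $S$.

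The paper avoids both issues with a more elementary separation-plus-shortcut argument that you may want to compare against. Suppose $s_1\neq s_2$ are closest points of $S$ (from points $t_1,t_2$ of $T$, possibly equal), with shortest paths $\pi_1,\pi_2$ contained in $C_S$. Take the midpoint $s'$ of $s_1s_2$ on $S$ and the \emph{maximal} horizontal segment $H\subseteq C_S$ through $s'$; by hypothesis $H$ misses $T$, so $H$ is a chord of the simple polygon $C_S$ separating it into two parts, with $s_1$ and $s_2$ strictly on opposite sides and $T$ entirely on one side. Hence one of $\pi_1,\pi_2$ must cross $H$ at a point $p$, and replacing its prefix up to $p$ by the horizontal segment $s'p\subseteq H$ strictly shortens it (the prefix has positive vertical extent $|y(s_i)-y(s')|$), contradicting optimality. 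Note that this uses the no-connector hypothesis only to ensure $H\cap T=\emptyset$, and it never needs to know anything about the shape of the shortest paths near $T$. If you want to keep your route, the fix is essentially to replace ``extend the last link'' by this kind of chord-separation argument anyway, at which point the convexity machinery becomes unnecessary.
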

\begin{proof}
We show that the closest points of $S$ from any points of $T$ are the same.
Then by symmetry, the closest points of $T$ from any points of $S$ are the same,
and thus the lemma holds.
Assume to the contrary that there are two distinct closest points $s_1$ and $s_2$ 
in $S$ from two closest points $t_1$ and $t_2$ in $T$, possibly $t_1=t_2$, respectively.
Let $\pi_1$ and $\pi_2$ be the shortest paths such that $\pi_1$ connects $s_1$ and $t_1$, and 
$\pi_2$ connects $s_2$ and $t_2$. Clearly, both $\pi_1$ and $\pi_2$ are contained in $C_S$.

Since $S$ is vertical, the segments of $\pi_1$ and $\pi_2$ incident to $s_1$ and $s_2$ are 
horizontal, respectively.
Let $s'$ be the point in $S$ such that $y(s') = (y(s_1) + y(s_2)) / 2$,
and $H$ be the maximal horizontal segment contained in $C_S$ that contains $s'$.
Since no axis-aligned line segment contained in $C_S$ connects $S$ and $T$,
$H$ does not intersect $T$ but it intersects $\pi_1$ or $\pi_2$ at a point $p$.
Then we can get a shorter path from $S$ to $T$
by replacing the subpath from $s_1$ to $p$ of $\pi_1$ (or from $s_2$ to $p$ of $\pi_2$) 
with the segment $s'p$, a contradiction.
\end{proof}

If there is an axis-aligned line segment in $C_S$ connecting $S$ and $T$,
the line segment is a minimum-link shortest path.
Otherwise, for a point $t\in T$, we find $d(s,t)$ for every intersection point $s$ 
of $S$ and the horizontal baselines of $\rdom$.
Lemma~\ref{lem:base-align} also holds in $\rdom$,
so one of those intersection points is 
the closest point of $S$ from $T$.
Let $s^*\in S$ be the point achieving $d(s^*,t) = \min_{s} d(s,t)$.
Then $s^*$ is the closest point of $S$ from $T$ by Lemma~\ref{lem:closest}.
From $s^*$, we find the point $t^*\in T$ achieving $d(s^*,t^*) = \min_{t} d(s^*,t)$ 
among all intersection points $t$ of $T$ and the horizontal baselines of $\rdom$.
Finally we find two points $s^*$ and $t^*$,
so we can compute $\ml(s^*,t^*)$ using the data structure of Schuierer~\cite{schuierer1996} directly.

We compute the bounding boxes of the polygons in $\rectset_P$ and $C_S$ in $O(n)$ time.
We construct the data structure of Schuierer~\cite{schuierer1996} with 
$O(n)$ time and space for a rectilinear polygon $Q$ with $n$ edges that 
given two points $p$ and $q$ in $Q$, reports $d(p,q)$ and $\ml(p,q)$ in $O(\log n)$ query time,
and a minimum-link shortest path from $p$ and $q$ in $O(\log n + K)$ time,
where $K$ is the number of links of the path.
Since there are $O(n)$ baselines in $\rdom$, we can find $s^*$ and $t^*$ in $O(n \log n)$ time
using the data structure, and a minimum-link shortest path from $s^*$ to $t^*$ in $O(n)$ time since $K=O(n)$.

\subsubsection{\texorpdfstring{$T$}{T} disjoint from \texorpdfstring{$C_S$}{CS}}
Consider the case that $T$ is disjoint from $C_S$. 
The portion of the boundary of $C_S$ which is not incident to $P$ 
consists of a horizontal segment $H_S$ and a vertical segment $V_S$. 
We assume that $T$ is also contained in a connected component $C_T$ of 
$\bbox(P')\setminus \text{cl}(P')$ for a polygon $P'\in\rectset_P$.
Let  $H_T$ and $V_T$ for $T$ be the horizontal segment and a vertical segment
of the portion of the boundary of $C_T$ which is not incident to $P'$.

We compute minimum-link shortest paths from $S$ to $T$ passing through 
$H_S\cup V_S$ and $H_T\cup V_T$, and then we choose the optimal path 
among them.
In the following, we show how to compute a minimum-link shortest path from $S$ to $T$ passing through 
$V_S$ and $V_T$. The other cases can be handled analogously.
If no axis-aligned line segment contained in $C_S$ connects $S$ and $V_S$,
the closest pair $(s^*,v^*)$ of $S$ and $V_S$ is unique by Lemma~\ref{lem:closest}.
Thus, $d(S,v) = d(s^*,v) = d(s^*,v^*)+d(v^*,v)$ for any point $v\in V_S$.
Similarly, the closest pair $(t^*,u^*)$ of $T$ and $V_T$ is also unique if
no axis-aligned line segment contained in $C_T$ connects $T$ and $V_T$.
In this case we have $d(T,u) = d(t^*,u) = d(t^*,u^*)+d(u^*,u)$ for any point $u\in V_T$.


\begin{lemma}\label{lem:vconnect}
If the closest pair $(s^*,v^*)$ of $S$ and $V_S$ is unique, and
there is a shortest path from $S$ to $T$ passing through $V_S$,
there is a shortest path from $S$ to $T$ passing through $v^*$.
\end{lemma}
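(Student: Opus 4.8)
The plan is a rerouting argument: starting from any shortest $S$--$T$ path that meets $V_S$, I want to exhibit a shortest $S$--$T$ path that passes through $v^*$, the point being that $v^*$ acts as a gateway through which every cheap route from $S$ to $V_S$ can be funneled. By hypothesis there is a shortest path $\pi$ from $S$ to $T$ meeting $V_S$; fix a point $v\in\pi\cap V_S$ (if $\pi\cap V_S$ is a subsegment, take the endpoint first reached from $S$) and write $\pi=\pi_1\cdot\pi_2$ with $\pi_1$ running from $S$ to $v$ and $\pi_2$ from $v$ to $T$, so that $|\pi|=|\pi_1|+|\pi_2|\ge d(S,v)+d(v,T)$.

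The key input is the identity $d(S,v)=d(s^*,v^*)+d(v^*,v)$, valid for every $v\in V_S$: uniqueness of the closest pair $(s^*,v^*)$ excludes the existence of an axis-aligned segment of $C_S$ joining $S$ and $V_S$ (such a horizontal segment could be translated slightly within $C_S$, producing a whole interval of closest pairs, a contradiction), which is exactly the situation under which the displayed consequence of Lemma~\ref{lem:closest} stated just before this lemma holds. Using this, together with $d(S,v^*)=d(s^*,v^*)$ (which holds since $s^*\in S$ and $v^*\in V_S$) and the triangle inequality $d(v^*,v)+d(v,T)\ge d(v^*,T)$, we get
\[
d(S,T)=|\pi|\ \ge\ d(S,v)+d(v,T)\ =\ d(S,v^*)+d(v^*,v)+d(v,T)\ \ge\ d(S,v^*)+d(v^*,T)\ \ge\ d(S,T),
\]
so every inequality is an equality; in particular $d(S,v^*)+d(v^*,T)=d(S,T)$. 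Concatenating a shortest path from $S$ to $v^*$ with a shortest path from $v^*$ to $T$ then gives a shortest $S$--$T$ path through $v^*$, as required.

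The only step carrying real content is the distance identity $d(S,v)=d(s^*,v^*)+d(v^*,v)$; everything afterwards is triangle-inequality bookkeeping. I would most likely just cite the discussion preceding the lemma for it, but it can also be derived directly from the fact that $C_S$ is a simply connected rectilinear region with $S$ and $V_S$ both vertical and $V_S$ lying on $\partial C_S$: uniqueness of $(s^*,v^*)$ pins down $v^*$ as the common last point of all shortest $S$--$V_S$ paths, and an exchange argument along $V_S$ (substituting a subsegment of $V_S$, which lies in $\rdom$ because $V_S\subseteq\text{cl}(C_S)\subseteq\rdom$ by box-disjointness, for the tail of a shortest $S$--$v$ path) routes a shortest $S$--$v$ path through $v^*$. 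Hence the one place where the argument could go wrong is this routing claim; the $S$--$T$ statement follows from it mechanically.
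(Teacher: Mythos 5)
Your proposal is correct and follows essentially the same route as the paper: both rely on the identity $d(S,v)=d(s^*,v^*)+d(v^*,v)$ for $v\in V_S$ (established in the discussion preceding the lemma) and then reroute through $v^*$ via the triangle inequality $d(v^*,T)\le d(v^*,v)+d(v,T)$ to show the concatenated path through $v^*$ is no longer than $\pi$. Your extra remark that uniqueness of the closest pair forces the absence of an axis-aligned connector (so that the identity applies) is a small but sensible addition; otherwise the argument is the paper's.
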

\begin{proof}
Let $\pi$ be a shortest path from $S$ to $T$ that
passes through a point $v\in V_S\setminus\{v^*\}$.
Since $d(s^*,v) = d(s^*,v^*)+d(v^*,v)$, we have
$|\pi|=d(S,T)=d(s^*,v)+d(v,T) = d(s^*,v^*)+d(v^*,v)+d(v,T)$.
Let $\pi^*$ be a path from $S$ to $T$ consisting of a shortest path from $s^*$ to $v^*$ and 
a shortest path from $v^*$ to $T$.
Since $d(v^*,T)\leq d(v^*,v)+d(v,T)$, 
we have $|\pi^*|=d(s^*,v^*)+d(v^*,T)\leq |\pi|=d(S,T)$.
Thus, $\pi^*$ is also a shortest path from $S$ to $T$.
\end{proof}

We compute a minimum-link shortest path from $S$ to $T$ as follows.
Let $Q_S$ be the set of intersection points of $V_S$ with the horizontal baselines
in $\rdom$, and $Q_T$ be the set of intersection points of $V_T$ with the horizontal baselines
in $\rdom$.
Let $\ml_H(X,Y)$ is the minimum number of links of all shortest paths connecting 
two sets $X$ and $Y$ whose segments incident to $Y$ are horizontal.
We first compute $d(S,v)$ and $\ml_H(S,v)$ for every point $v\in Q_S$.
We also compute $d(T,u)$ and $\ml_H(T,u)$ for every point $u\in Q_T$.
By Lemma~\ref{lem:vconnect},
once we have the unique closest pairs $(s^*,v^*)$ and $(u^*,t^*)$,
their distances $d(s^*,v^*), d(v^*,u^*), d(u^*,t^*)$,
and their minimum numbers of links $\ml_H(s^*,v^*), \ml_H(v^*,t^*)$,
we can compute a minimum-link shortest path $\pi$ from $S$ to $T$ 
passing through $s^*,v^*,u^*$ and $t^*$ in order.
Note that we do not guarantee that $\pi$ is a minimum-link shortest path from $S$ to $T$.
However, we can compute a minimum-link shortest path
while we compute $\ml_H(v^*,t^*)$ as follows.

Once we have $v^*$, $u^*$, and $d(v^*,u^*)$, we apply the algorithm in Section~\ref{sec:bbox-disjoint}.
In the algorithm, we construct $xy$-monotone subregions.
Let $\dom(v^*)$ and $\dom(u^*)$ be the
$xy$-monotone subregions incident to $v^*$ and $u^*$, respectively.
We may have $\dom(v^*)=\dom(u^*)=\xydom(v^*,u^*)$ 
if a shortest path from $v^*$ to $u^*$ is $xy$-monotone.

Consider a point $v\in Q_S$ that is incident to $\dom(v^*)$.
Then $d(s^*,v) = d(s^*,v^*)+d(v^*,v)$, $v$ lies on the outer boundary 
of $\dom(v^*)$, and $d(v^*,v)+d(v,t^*) = d(v^*,t^*)$.
Thus, we have $d(s^*,v)+d(v,t^*) = d(s^*,v^*)+d(v^*,t^*)$.
Once $\ml_H(v,t^*)$ is computed for every point $v$ of $Q_S$ 
that is incident to $\dom(v^*)$,
we can find a minimum-link shortest path from $S$ to $T$.
If $v\in Q_S$ is not incident to $\dom(v^*)$,
we have $d(v^*,v)+d(v,t^*) > d(v^*,t^*)$, and thus no shortest path from $S$ to $T$ passes through $v$.
This observation can also be applied for points in $Q_T$ that are incident to $\dom(u^*)$.
The plane sweep algorithm starts with updating $M(i)$'s for the horizontal baselines $H_i$ 
intersecting the vertical line segment $V$ of the outer boundary of $\dom(u^*)$ corresponding 
to the \origi event. At the \origi event, those $M(i)$'s are initialized to $\ml_H(t^*,V\cap H_i)$.
Observe that every intersection point $V\cap H_i$ is in $Q_T$.
It also computes $M(i)$'s for the horizontal baselines $H_i$ intersecting the vertical line segment
of the outer boundary of $\dom(v^*)$ corresponding to the \termi event.
Hence one of $M(i)$'s corresponds to $\ml_H(t^*,v^*)$ at the \termi event.
By choosing the minimum of $\ml_H(s^*,v) + \ml_H(v,t^*)-1$
for all $v\in Q_S$ incident to $\dom(v^*)$,
we finally obtain $\ml(S,T)$, 
and compute a minimum-link shortest path from $S$ to $T$.
Recall that to reduce the time complexity to $O(n \log n)$,
we do not maintain $M(i)$'s explicitly,
but focus on the minimum of $M(i)$'s using $O(\log n)$ nodes of $\segtree$ as we do in Section~\ref{sec:reduce-time}.
However, we observe that $\ml_H(s^*,v^*) + c = \ml_H(s^*,v)$ for every point $v$ of $Q_S$, where $c\in\{0,1,2\}$.
Therefore, by storing for each node $w$ of $\segtree$,  $\ml(w), U(w)$, and the second minimum and the third 
minimum of $M(i)$'s for $i\in[\alpha(w),\beta(w)]$,
we can compute a minimum-link shortest path from $S$ to $T$ without increasing time and space complexities.

If the closest pair of $S$ and $V_S$ is not unique,
there is a maximal line segment $V'\subseteq V_S$
such that for every point $p\in V'$, the shortest path from $S$ to $p$ is 
a horizontal line segment in $C_S$.
Recall that our algorithm uses the point $v^*$ in $V_S$ if the closest pair $(s^*,v^*)$ of $S$ and $V_S$ is unique.
Hence, instead of using $v^*$, we apply the algorithm using $V'$
and then we can compute a minimum-link shortest path.

Again using the data structure of Schuierer~\cite{schuierer1996},
we can compute $d(S,v)$ and $\ml_H(S,v)$ for all $v\in Q_S$ (and $d(T,u)$ and $\ml_H(T,u)$ for all $u\in Q_T$) in $O(n\log n)$ time.
Then we use the algorithms in Section~\ref{sec:bbox-disjoint} based on the methods in Sections~\ref{sec:xymonotone} and~\ref{sec:xmonotone} to compute $d(v^*,u^*)$.
Observe that the time and space complexities remain the same as stated in Lemma~\ref{lem:outside_pocket}. 
The initialization of $M(i)$'s at the \origi event of $\dom(u^*)$,
and the computation of $\ml_H(T,u^*)$ using $M(i)$'s at the \termi event of $\dom(v^*)$ do not affect the time
and space complexities asymptotically.
Therefore, we have the following theorem.

\begin{theorem}\label{thm:seg-result}
Given two axis-aligned line segments $S$ and $T$ in 
a box-disjoint rectilinear domain with $n$ vertices
in the plane,
we can compute the minimum-link shortest path from $S$ to $T$ in $O(n\log n)$ time
using $O(n)$ space.
\end{theorem}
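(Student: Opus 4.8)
The plan is to assemble Theorem~\ref{thm:seg-result} from the case analysis of Section~\ref{se:rect.polygons}, whose engine is the structural fact that an axis-aligned line segment inside a box-disjoint rectilinear domain meets at most two of the bounding boxes $\bbox(P)$, $P\in\rectset_P$. So first I would cut $S$ into at most three sub-segments: the part $S_0$ disjoint from every $\bbox(P)$, and at most two parts each contained in a single bounding box $\bbox(P)$ (and, after restricting to the relevant component $C$ of $\bbox(P)\setminus\text{cl}(P)$ as in the reduction preceding Lemma~\ref{lem:closest}, inside one such component). Cut $T$ the same way into $T_0$ and at most two bounded pieces. This produces at most nine pairs $(S_i,T_j)$. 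For each pair I would call the matching subroutine: $(S_0,T_0)$ is handled by Lemma~\ref{lem:outside_pocket}; a pair with at least one bounded piece is handled by the two sub-cases of Section~\ref{se:rect.polygons} — the case where the partner piece meets the component $C$, which is a hole-free rectilinear polygon and is solved with the data structure of Schuierer~\cite{schuierer1996} (uniqueness of the closest pair coming from Lemma~\ref{lem:closest}), and the case where the partner piece is disjoint from $C$, solved by composing Schuierer's structure with the plane-sweep algorithm of Section~\ref{sec:bbox-disjoint} through the two free boundary segments $H_C,V_C$ of the bounded piece, using Lemma~\ref{lem:vconnect}. Each call returns, for its pair, the distance $d(S_i,T_j)$ together with a minimum-link shortest path attaining it.

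For correctness I would show the global optimum is recovered as the best of these at most nine outputs. Any minimum-link shortest path $\pi$ from $S$ to $T$ has endpoints $p\in S$, $q\in T$ with $d(p,q)=d(S,T)$; then $p$ lies in some piece $S_i$ and $q$ in some piece $T_j$, and $d(S_i,T_j)\le d(p,q)=d(S,T)\le d(S_i,T_j)$ forces $d(S_i,T_j)=d(S,T)$, so $\pi$ is a shortest path from $S_i$ to $T_j$ and hence has at least $\ml(S_i,T_j)$ links; conversely every minimum-link shortest path from such a pair is a shortest path from $S$ to $T$. Since $d(S_i,T_j)\ge d(S,T)$ for all $i,j$, we get $\min_{i,j}d(S_i,T_j)=d(S,T)$ and $\ml(S,T)=\min\{\ml(S_i,T_j): d(S_i,T_j)=d(S,T)\}$. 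The algorithm therefore discards the pairs whose returned distance exceeds $\min_{i,j}d(S_i,T_j)$ and, among the survivors, reports the path with the fewest links.

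For the complexity, the shared work — computing the bounding boxes and the components $C$, building the ray-shooting structure of Giora and Kaplan~\cite{giora2009}, and building Schuierer's data structure on the relevant hole-free polygons — takes $O(n\log n)$ time and $O(n)$ space, and each of the at most nine subroutine calls runs in $O(n\log n)$ time and $O(n)$ space by Lemma~\ref{lem:outside_pocket} and the analyses of Section~\ref{se:rect.polygons}; since nine is a constant and the calls are executed sequentially with their working space reused, the totals are $O(n\log n)$ time and $O(n)$ space. The genuinely delicate part is not this bookkeeping but the ingredients already established: in the sub-case where $T$ is disjoint from $C_S$, getting a minimum-link (not merely a shortest) path requires joining the two Schuierer subpaths with the plane-sweep subpath at the right point $v^*$ — or along the segment $V'$ when the closest pair of $S$ and $V_S$ is not unique — which forces $\segtree$ to also carry the second and third smallest of the $M(i)$'s so that $\ml_H(s^*,v)+\ml_H(v,t^*)-1$ can be minimized over the $O(n)$ candidate join points $v\in Q_S$ incident to $\dom(v^*)$ without breaking the $O(n)$ space bound; the remaining care is to enumerate the symmetric options through the free segments $H_C,V_C$ of each bounded piece and check that, for a fixed optimal path, exactly one of these options is relevant, so that taking the minimum over a constant number of them is correct.
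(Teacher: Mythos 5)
Your proposal is correct and follows essentially the same route as the paper: split $S$ and $T$ into at most three pieces each (using that a segment meets at most two bounding boxes), solve the at most nine pairs via Lemma~\ref{lem:outside_pocket}, the Schuierer-based case of Lemma~\ref{lem:closest}, and the $V_S$/$V_T$ case via Lemma~\ref{lem:vconnect} with the augmented $\segtree$, then take the best over all pairs. Your explicit distance-filtering argument for recombining the nine sub-answers is left implicit in the paper but is exactly the intended justification, so no genuinely different ideas are involved.
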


\section{Extending to two polygons \texorpdfstring{$\textsf{S}$}{S} and \texorpdfstring{$\textsf{T}$}{T}}
Now we consider two rectilinear polygons $\textsf{S}$ and $\textsf{T}$ with $N$ vertices in $\rdom$.
We can compute a minimum-link shortest path from $\textsf{S}$ to $\textsf{T}$
using our algorithms in previous sections.
Since $\textsf{S}$, $\textsf{T}$, and obstacles are pairwise box-disjoint, the distance $d(\textsf{S},\textsf{T})$ between $\textsf{S}$ and $\textsf{T}$ can be represented as
$d(\textsf{S},\textsf{T}) = \min_{s\in B(\textsf{S}), t\in B(\textsf{T})}\{d(s,t)+\min_{s'\in \textsf{S}}d(s,s')+\min_{t'\in \textsf{T}}d(t,t')\}$.
If we construct the $L_1$ Voronoi diagram of $N$ boundary segments of $\textsf{S}$ (or $\textsf{T}$)~\cite{papadopoulou2001} in $O(N \log N)$ time using $O(N)$ space,
we can maintain and report $\min_{s'\in \textsf{S}}d(s,s')$ and $\min_{t'\in \textsf{T}}d(t,t')$
for any $s\in B(\textsf{S})$ and $t\in B(\textsf{T})$
in $O(\log N)$ query time.
From this observation, together with Lemma~\ref{lem:monotone},
we have the following lemma.

\begin{lemma}\label{ob:polygon_same_monotone}
If there is an $x$-monotone shortest path from $\textsf{S}$ to $\textsf{T}$,
then every shortest path from $\textsf{S}$ to $\textsf{T}$ is $x$- or $xy$-monotone.
If there is a $y$-monotone shortest path from $\textsf{S}$ to $\textsf{T}$,
then every shortest path from $\textsf{S}$ to $\textsf{T}$ is $y$- or $xy$-monotone.
\end{lemma}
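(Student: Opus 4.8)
First I would make two reductions. Since a shortest path from $\textsf{S}$ to $\textsf{T}$ has one endpoint on $\textsf{S}$ and one on $\textsf{T}$, it is in particular a shortest path between two points of $\rdom$, so by the classification of Rezende et al.~\cite{rezende1989} it is $x$-, $y$-, or $xy$-monotone. Hence a shortest path that is neither $x$- nor $xy$-monotone is exactly one that is $y$-monotone but not $x$-monotone, and it suffices to prove the first assertion in the form: \emph{if some shortest path $\pi_1$ from $\textsf{S}$ to $\textsf{T}$ is $x$-monotone, then no shortest path $\pi_2$ from $\textsf{S}$ to $\textsf{T}$ is $y$-monotone without also being $x$-monotone} (the second assertion is this one with the two coordinate axes exchanged). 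Since $\bbox(\textsf{S})$ and $\bbox(\textsf{T})$ are disjoint, their $x$-projections are disjoint or their $y$-projections are; I would treat the former case, and after reflecting the whole picture we may assume $\bbox(\textsf{S})$ lies to the left of $\bbox(\textsf{T})$, which forces the $x$-monotone path $\pi_1$ to be monotone increasing in $x$, and (after a horizontal reflection) that the hypothetical $\pi_2$ is monotone increasing in $y$. The case of disjoint $y$-projections is entirely parallel, with the roles of the region labels below permuted.

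Next I would localize both paths to the obstacle part of the domain. Because $\bbox(\textsf{S})$ and $\bbox(\textsf{T})$ are box-disjoint from every obstacle, $\pi_1$ leaves $\bbox(\textsf{S})$ for the last time at a point $\sigma_1\in\partial\bbox(\textsf{S})$ and enters $\bbox(\textsf{T})$ for the first time at a point $\tau_1\in\partial\bbox(\textsf{T})$; since $\pi_1$ is a shortest path its three pieces $\textsf{S}\!\to\!\sigma_1$, $\sigma_1\!\to\!\tau_1$, $\tau_1\!\to\!\textsf{T}$ are each shortest between their endpoints, they realize the decomposition of $d(\textsf{S},\textsf{T})$ stated just before the lemma, and the middle piece is $x$-monotone and can be taken to avoid the interiors of both boxes. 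Applying the same construction to the hypothetical $\pi_2$ yields $\sigma_2\in\partial\bbox(\textsf{S})$ and $\tau_2\in\partial\bbox(\textsf{T})$. I would then add $\bbox(\textsf{S})$ and $\bbox(\textsf{T})$ to $\rdom$ as axis-aligned rectangular obstacles and apply Lemma~\ref{lem:monotone} with the reference vertical segment degenerating to the point $\sigma_1$: this partitions the plane into at most eight regions, and since $\sigma_1$ lies to the left of $\bbox(\textsf{T})$, all of $\partial\bbox(\textsf{T})$ — in particular $\tau_1$ — lies in the union of the at most five regions a vertical segment can meet when it is to the left of the target (the $\xdom$-, $\xydom$-, and $\ydom$-regions on the right, cf.\ Section~\ref{se:extension}). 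The plan is to show, using the $x$-monotonicity of the middle piece of $\pi_1$ together with Lemma~\ref{lem:monotone}, that $\tau_1$ must lie in an $\xdom$- or $\xydom$-region; and symmetrically, using that the middle piece of $\pi_2$ contains an $x$-reversal — which, by an argument like that of Lemma~\ref{lem:winder}, must occur at an entire vertical side of an obstacle and hence cannot be absorbed into the obstacle-free boxes — that $\tau_2$ must lie strictly in a $\ydom$-region.

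It then remains to derive a contradiction from $\tau_1$ lying in an $\xdom\cup\xydom$-region while $\tau_2$ lies in a $\ydom$-region. Both $\tau_1$ and $\tau_2$ lie on $\partial\bbox(\textsf{T})$ and are endpoints of shortest $\textsf{S}$--$\textsf{T}$ paths, so they attain the same value of $d(s,t)+\min_{s'\in\textsf{S}}d(s,s')+\min_{t'\in\textsf{T}}d(t,t')$ in the decomposition identity. Inside $\bbox(\textsf{T})$ the domain is obstacle-free, so $\min_{t'\in\textsf{T}}d(t,t')$ is the ordinary $L_1$ distance to $\textsf{T}$, which on each side of the rectangle $\bbox(\textsf{T})$ is $1$-Lipschitz and changes monotonically away from the foot of the $L_1$-perpendicular onto $\textsf{T}$ (this is where the $L_1$ Voronoi diagram of the boundary segments of~\cite{papadopoulou2001} enters). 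Combining this regularity with the fact that a shortest path coming from the left reaches $\bbox(\textsf{T})$ through its left side or its top/bottom sides, I would show that the set of points of $\partial\bbox(\textsf{T})$ realizing the decomposition is ``coherent'': it cannot simultaneously meet an $\xdom$-region and a $\ydom$-region of the partition. This contradicts $\tau_2$ being in a $\ydom$-region, proving the first assertion; exchanging the two axes throughout gives the second.

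\textbf{Where the work is.} The reductions and the region classification are essentially bookkeeping on top of~\cite{rezende1989}, Lemma~\ref{lem:monotone}, and the ``leave the bounding box'' decomposition. The genuine obstacle is the coherence step: proving that the collection of points of $\partial\bbox(\textsf{T})$ that serve as endpoints of shortest $\textsf{S}$--$\textsf{T}$ paths cannot straddle both an $\xdom$-region and a $\ydom$-region. This is exactly the place where one must use that $\bbox(\textsf{T})$ is a single axis-aligned rectangle disjoint from all obstacles — so that the distance-to-$\textsf{T}$ function behaves monotonically along its boundary — together with the axis-separation of the two bounding boxes, rather than any property of the polygon $\textsf{T}$ itself.
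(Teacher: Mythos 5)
Your proposal does not close the lemma. To be fair, the paper itself offers essentially no proof here --- it asserts the statement follows ``from this observation, together with Lemma~\ref{lem:monotone}'', i.e.\ from the decomposition $d(\textsf{S},\textsf{T})=\min_{s,t}\{d(s,t)+\min_{s'\in\textsf{S}}d(s,s')+\min_{t'\in\textsf{T}}d(t,t')\}$ over the bounding-box boundaries plus the region classification --- so there is no written argument for you to match. But your plan has two concrete problems. First, you anchor the eight-region partition of Lemma~\ref{lem:monotone} at the single point $\sigma_1$ where the particular path $\pi_1$ last leaves $\bbox(\textsf{S})$. That partition classifies shortest paths \emph{from $\sigma_1$}; the hypothetical path $\pi_2$ leaves $\bbox(\textsf{S})$ at a different point $\sigma_2$, so the location of $\tau_2$ in the $\sigma_1$-partition says nothing about the monotonicity of $\pi_2$, and the step ``$\tau_2$ must lie strictly in a $\ydom$-region'' does not follow. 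The partition must be anchored at the whole box $\bbox(\textsf{S})$, as the paper does in Section~6 with the eight paths from the four corners of $B(\textsf{S})$, so that one classification covers every exit point of every shortest path.

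Second, and more importantly, the step you yourself flag as ``where the work is'' --- that the set of points of $\partial\bbox(\textsf{T})$ (equivalently, of $\textsf{T}$) realizing $d(\textsf{S},\textsf{T})$ cannot meet both an $\xdom$-type region and a $\ydom$-type region --- is not an auxiliary technicality: it \emph{is} the lemma, merely translated into the language of the partition, and you offer no argument for it. The $1$-Lipschitz, single-trough behaviour of $\min_{t'\in\textsf{T}}d(t,t')$ along a side of $\bbox(\textsf{T})$ does not by itself give the ``coherence'' you want, because the other summand $d(\textsf{S},\cdot)$ is also only $1$-Lipschitz along that side, so the two terms can trade off and the minimizing set of their sum can a priori have several components falling in different regions. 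As written, the proposal reduces the statement to an equivalent unproved claim. To actually prove it you would need a genuine exchange argument --- for instance, showing that an $x$-monotone shortest path and a $y$-monotone, non-$x$-monotone shortest path must cross, and that splicing them at a crossing point yields a path that is either shorter than $d(\textsf{S},\textsf{T})$ or contradicts the fixed-endpoint classification of Lemma~\ref{lem:monotone} and \cite{rezende1989}.
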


From Lemma~\ref{ob:polygon_same_monotone}, we can partition the box-disjoint rectilinear domain into
eight disjoint regions using eight $xy$-monotone paths from $B(\textsf{S})$
as done in Section~\ref{sec:bbox-disjoint}. See Figure~\ref{fig:boxdisjointpolygon}(a).
There are $O(N)$ vertical and horizontal baselines defined by the boundary segments of 
$\textsf{S}$ and $\textsf{T}$, and Lemma~\ref{lem:base-align} also holds.
Thus, we compute a minimum-link shortest path aligned to the baselines
of each region in which the portion of $\textsf{T}$ is contained.

\begin{figure}[t]
  \begin{center}
    \includegraphics[width=0.9\textwidth]{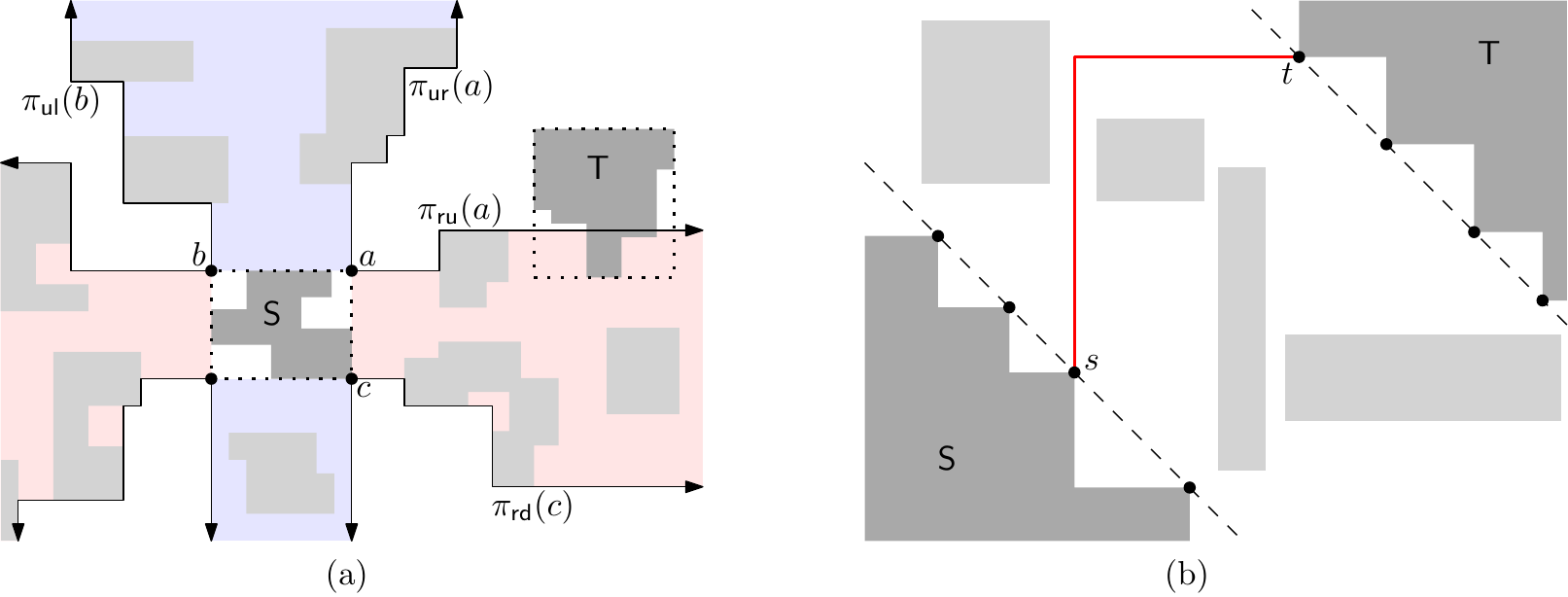}
    \caption{\small
    (a) Eight regions of $\rdom$ by eight $xy$-monotone paths from four corners of $B(\textsf{S})$ with box-disjoint obstacles.
$\textsf{T}$ intersects at most five regions.
(b) There are nine closest pairs of $\textsf{S}$ and $\textsf{T}$.
Among all paths connecting closest pairs, the minimum-link shortest path from $s$ to $t$ is the optimal.
    }
    \label{fig:boxdisjointpolygon}
  \end{center}
\end{figure}

Let $\textsf{T}'$ be the portion of $\textsf{T}$ contained in $\xydom$.
Since $\textsf{S}$ and $\textsf{T}'$ are rectilinear polygons,
there can be more than one closest pair of points for $\textsf{S}$ and $\textsf{T}'$.
Moreover, the points appearing in the closest pairs are on line segments with slopes $\pm 1$. See Figure~\ref{fig:boxdisjointpolygon}(b).
If we compute $\xydom(s,t)$ for every closest pair $(s,t)$ of $\textsf{S}$ and $\textsf{T}'$,
the time and space complexities may increase.
Instead, 
we modify the plane sweep algorithm slightly from that in Section~\ref{sec:xymonotone}.
There can be more than one \origi and \termi events during the plane sweep
because there can be more than one closest pair of $\textsf{S}$ and $\textsf{T}'$.
Also, there are no \attac and \detac events since we do not compute $\xydom(s,t)$.
For each event $E_j$ in Section~\ref{sec:xymonotone},
however, we use $\alpha(j)$, $\beta(j)$ and $\rangebbst$ to maintain active baselines.
Since we have all points in closest pairs of $\textsf{S}$ and $\textsf{T}$,
we set two horizontal baselines with the smallest and largest $y$-coordinate respectively from those points.
The horizontal baselines between the two baselines are used for inserting the range, which represents active baselines,
into $\rangebbst$ of each event.
Since there are $O(N+n)$ baselines between the two baselines,
the time to handle an event takes $O(\log (N+n))$ time.
Also, there are additional $O(N)$ \origi and \termi events with $O(n)$ the other events,
so we can compute a minimum-link shortest path from $\textsf{S}$ to $\textsf{T}'$
in $O((N+n)\log (N+n))$ time using $O(N+n)$ space.

Let $\textsf{T}'$ be the portion of $\textsf{T}$ contained in $\xdom$.
Every shortest path from $\textsf{S}$ to $\textsf{T}'$ is $x$-monotone,
so we can compute the closest pairs of $\textsf{S}$ and $\textsf{T}'$
as the sweep line encounters each vertical line segments of $\textsf{T}'$
using the plane sweep algorithm in Section~\ref{sec:xmonotone}.
Then we can compute dividers in the same way without modifying the algorithm in Section~\ref{sec:xmonotone}.
Lemmas related to dividers in Section~\ref{sec:xmonotone} still hold,
so we can compute a minimum-link shortest path connecting dividers similarly.
As above, we can compute a minimum-link shortest path from $\textsf{S}$ to a divider (or from a divider to $\textsf{T}'$).
This implies we obtain a minimum-link shortest path from $\textsf{S}$ to $\textsf{T}'$.
We omit the details.

Therefore, we have Theorem~\ref{thm:result}.

\section{Concluding Remarks}
We propose the algorithm to compute a minimum-link shortest path connecting two rectilinear polygons
in the box-disjoint rectilinear domain efficiently.
Our algorithm computes a minimum-link shortest path from a point to the line segment using plane sweep paradigm,
based on the monotonicity of the optimal path.
Then we can extend objects to rectilinear polygons and apply the slightly modified algorithm.

Still there are many problems to be considered.
One typical problem is to compute a minimum-link shortest path connecting two objects
in a general rectilinear domai such that the obstacles in the domain are not necessarily box-disjoint.
There is a previous work in a general rectilinear domain, but the result does not seem to have the optimal time and space complexities.

\bibliographystyle{plain}
\bibliography{paper}

\end{document}